\def\@ACM@checkaffil{% Only warnings
    \if@ACM@instpresent\else
    \ClassWarningNoLine{\@classname}{No institution present for an affiliation}%
    \fi
    \if@ACM@citypresent\else
    \ClassWarningNoLine{\@classname}{No city present for an affiliation}%
    \fi
    \if@ACM@countrypresent\else
        \ClassWarningNoLine{\@classname}{No country present for an affiliation}%
    \fi
}
\newtheorem{definition}{Definition}
\newtheorem{proposition}{Proposition}
\newtheorem{example}{Example}
\newtheorem{observation}{Observation}
\newtheorem{remark}{Remark}
\newcommand{\ejrp}{EJR+\xspace}
\newcommand{\pjrp}{PJR+\xspace}
\newcommand{\rpjrp}{rank-\pjrp}
\newcommand{\clrstr}{20}
\DeclareMathOperator{\rank}{rank}
\DeclareMathOperator*{\argmin}{arg\,min}
\title{Robust and Verifiable Proportionality Axioms for Multiwinner Voting}
\author{Markus Brill}\affiliation{
\institution{University of Warwick}
\country{United Kingdom}
}
\email{Markus.Brill@warwick.ac.uk}
\author{Jannik Peters}
\affiliation{
\institution{Technische Universität Berlin}
\country{Germany}
}
\email{jannik.peters@tu-berlin.de}
\begin{abstract}
When selecting a subset of candidates (a so-called \textit{committee}) based on the preferences of voters, proportional representation is often a major desideratum.  When going beyond simplistic models such as party-list or district-based elections, it is surprisingly challenging to capture proportionality formally. As a consequence, the literature has produced numerous competing criteria of when a selected committee qualifies as proportional. Two of the most prominent notions are Dummett's \textit{proportionality for solid coalitions} (PSC) and Aziz et al.'s \textit{extended justified representation} (EJR). Both definitions guarantee proportional representation to groups of voters who have very similar preferences; such groups are referred to as \textit{solid coalitions} by Dummett and as \textit{cohesive groups} by Aziz et al. However, these notions lose their bite when groups are only almost solid or almost cohesive. 
In this paper, we propose proportionality axioms that are more robust than their existing counterparts, in the sense that they guarantee representation also to groups that do not qualify as solid or cohesive. Importantly, we show that these stronger proportionality requirements are always satisfiable. 
Another important advantage of our novel axioms is that their satisfaction can be easily verified: Given a committee, we can check in polynomial time whether it satisfies the axiom or not. This is in contrast to many established notions like EJR, for which the corresponding verification problem is known to be intractable. 

In the setting with approval preferences, we propose a robust and verifiable variant of EJR and a simply greedy procedure to compute committees satisfying it. We show that our axiom is considerably more discriminating in randomly generated instances compared to EJR and other existing axioms. 
In the setting with ranked preferences, we propose a robust variant of Dummett's PSC. In contrast to earlier strengthenings of PSC, our axiom can be efficiently verified even for general weak preferences. In the special case of strict preferences, our notion is the first known satisfiable proportionality axiom that is violated by the \textit{Single Tranferable Vote} (STV). In order to prove that our axiom can always be satisfied, we extend the notion of priceability to the ranked preferences setting. 
We also discuss implications of our results for participatory budgeting, querying procedures, and to the notion of proportionality degree.
\end{abstract}
\begin{document}

\begin{titlepage}

\maketitle

\end{titlepage}

\section{Introduction}

The proportional representation of preferences is an important goal in many scenarios in which a subset of candidates needs to be selected based on the preferences of voters over those candidates. Such scenarios occur in a wide variety of applications, including parliamentary elections \citep{Puke14a}, participatory budgeting \citep{PPSS21a}, 
digital democracy platforms \citep{BKNS14a},
and blockchain consensus protocols \citep{CeSt21a}.
In the (computational) social choice literature, this type of problem is often referred to as \textit{committee selection} or \textit{multiwinner voting} \citep{FSST17a,LaSk22a}. 
Some classic applications assume that candidates or voters (or both) come in predefined categories (political parties or voting districts), which greatly simplifies the task of finding representative outcomes. In the general case, when neither candidates nor voters come in predefined groups, it is surprisingly challenging to capture proportional representation formally. Perhaps as a consequence of this, the (computational) social choice literature has produced numerous competing criteria for when a selected committee qualifies as ``proportional.'' 

What many of the existing definitions have in common is that they define proportionality over groups of voters whose preferences are similar to each other. This approach goes back to the seminal work of \citet{Dumm84a}, who defined \textit{proportionality for solid coalitions (PSC)} in the setting where voters cast ranked ballots. 
PSC guarantees an appropriate level of representation to any group of voters that is ``solidly committed'' to a set of candidates in the sense that all voters of the group rank those candidates (in some order) over all other candidates. The most prominent example of a voting rule ensuring PSC is the widely used \textit{single transferable vote (STV)}.\footnote{In his article on STV, \citeauthor{Tide95a} remarked that ``it is the fact that STV satisfies PSC that justifies describing STV as a system of proportional representation'' \citep[][page~27]{Tide95a}.}
Similar notions were subsequently introduced in the setting of approval-based multiwinner voting \citep{LaSk22a}. In particular, extended justified representation (EJR) \citep{ABC+16a} and proportional justified representation (PJR) \citep{SFF+17a} formulate proportional representation guarantees for ``cohesive'' groups; a group of voters qualifies as cohesive if the intersection of their approval sets is sufficiently large.

When voters with similar preferences fall short of the high standard of uniformity defined by ``solid coalitions'' or ``cohesive groups,'' the axioms stay mostly mute.\footnote{PSC does not impose any lower bounds on the representation of an almost solid group. EJR, on the other hand, does at least impose weakened representation guarantees for less cohesive groups \citep{SFF+17a}.}
Indeed, this reliance on highly uniform voter groups has attracted criticism in the literature.  
For instance, \citeauthor{Tide06a} remarked (in the context of discussing a rule satisfying PSC) that there may be ``voters who would be members of a solid coalition except that they included an `extraneous' candidate, which is quickly eliminated among their top choices. These voters’ nearly solid support for the coalition counts for nothing, which seems to me inappropriate'' \citep[page~279]{Tide06a}. \citeauthor{AzLe20a} gave a concrete example for this behavior and stated\,---\,with regard to their own Expanding Approvals Rule (EAR)\,---\,that ``understanding formally whether EAR, or other rules, satisfy Tideman’s notion of ‘robust’ PSC is an interesting avenue for future work'' \citep[page~33]{AzLe20a}.  
Relatedly, \citet{HKRW21} criticize that PSC is not compatible with ballot truncation. For instance, in an election where two candidates are to be elected and one quarter of the voters only rank $a$ whereas another quarter of the voters only rank $b$ before $a$, PSC would not require either $b$ or $a$ to be elected. 
This is further corroborated by the work of \citet{MarPle16a}, who find frequent cases of vote splitting in Irish STV elections, which has the potential to make large solid coalitions quite rare.

Similar empirical criticism was also voiced for the justified-representation axioms in approval-based multiwinner voting. For instance, \citet{BFNK19a} find that large cohesive groups do not seem to be very common in their experiments and that, for the preference models they studied, even a randomly chosen committee satisfies EJR and PJR with non-negligible probability. 
A similar effect was noticed by \citet{SFJ+22a}, who noted that the more ``realistic'' of their statistical models seem to have a low ``cohesiveness level.'' 

An unrelated criticism of proportionality notions such as EJR and PJR is that they cannot be verified in polynomial time: It is coNP-complete to check whether a given committee satisfies EJR \citep{ABC+16a} or PJR \citep{AEH+18a}. This is a crucial downside in applications in which the proportionality of the outcome needs to be verifiable \cite[e.g.,][]{CeSt21a,MSW22a}. The same criticism applies to \citeauthor{AzLe20a}'s  generalization of PSC to weak preferences (i.e., rankings containing ties): it is coNP-complete to check whether a given committee satisfies the axiom \citep[Proposition~13]{AzLe20a}. PSC itself, which is only defined for the special case of strict preferences (i.e., rankings without ties), is verifiable in polynomial time.

\subsection{Our Contribution}

In this paper, we propose novel proportionality axioms that address the criticisms described above. Our axioms are (1) \textit{robust} in the sense that they guarantee proportional representation also to voter groups that do not qualify as ``solid'' or ``cohesive'' and (2) \textit{verifiable} in the sense that it can be checked in polynomial time whether a given committee satisfies the axiom or not. 
Our axioms are more demanding than existing ones, as they impose strictly more constraints on committees. Importantly, however, we show that these stronger proportionality requirements can be satisfied in all instances. Indeed, we identify voting rules from the literature that always produce committees satisfying our strong requirements. Our results can, therefore, be interpreted as evidence that those rules satisfy proportionality to a high extent.    
For an overview of the proportionality axioms considered in this paper, we refer to \Cref{fig:relations} on page~\pageref{fig:relations}.

In the setting with approval preferences, we propose \textit{\ejrp} as a robust and verifiable strengthening of EJR, together with a simply greedy procedure to compute committees satisfying \ejrp. Using randomly generated preference profiles, we demonstrate that \ejrp is a considerably more demanding axiom compared to EJR and other existing axioms. We also observe that established rules such as \textit{Proportional Approval Voting (PAV)} and the \textit{Method of Equal Shares (MES)} satisfy \ejrp and that \ejrp can be\,---\,in contrast to EJR and PJR\,---\,efficiently verified.

In the setting with ranked preferences, we propose \textit{rank-\pjrp} as a robust strengthening of Dummett's PSC. In contrast to earlier strengthenings of PSC, rank-\pjrp can be efficiently verified even for general weak preferences. We observe that STV violates rank-\pjrp. To the best of our knowledge, this establishes rank-\pjrp as the first satisfiable proportionality axiom that separates STV from more sophisticated methods such as the expanding approvals rule (EAR).\footnote{Earlier strengthenings of PSC that are violated by STV are either sometimes unsatisfiable \citep{AEFLS17} or equivalent to PSC in the case of strict preferences (for which STV is defined) \citep{AzLe20a,AzLe21a}.}
In order to prove that rank-\ejrp can always be satisfied, we extend the notion of priceability \citep{PeSk20a} to the ranked preferences setting and show that EAR satisfies it. Moreover, we use randomly generated preference profiles to show that rank-\pjrp is much more demanding than~PSC.

Finally, we extend our robustness approach to the proportionality degree \citep{Skow21a} and to two applications that are closely related to multiwinner voting:  participatory budgeting \citep{PPS21a} and querying procedures for civic participation platforms \citep{HKP+23a}. 

Our paper treats approval preferences and ranked preferences within a unified framework and establishes novel relationships between approval-based and ranking-based axioms. Hence, our work helps to consolidate the literature from the approval-based and ranking-based model, a task explicitly encouraged by \citet[pages~95--96]{LaSk22a}.

\subsection{Related Work} 

The study of proportional representation in multiwinner voting has a long tradition and voting rules aiming to produce proportional committees have been proposed long before the first proportionality notions have been formalized (see, e.g., the historical notes in the surveys by \citet{Tide95a}, \citet{MMM96a}, and \citet{Jans16a}). 
For ranked preferences, the first and most well-known formal proportionality axiom is the aforementioned \emph{proportionality for solid coalitions (PSC)}, which was introduced by eminent philosopher Sir Michael Dummett 
\citep{Dumm84a}.  
Extensions of PSC to weak rankings were only recently introduced by \citet{AzLe20a,AzLe21a}, who also provided a characterization of committees satisfying PSC \citep{AzLe22a}. 
\citet{AEFLS17} discussed several extensions or variants of Condorcet consistency to multiwinner voting. One of their notions, local stability, was independently studied by \citet{JMW20a}. 

In \textit{approval-based} multiwinner voting \citep{LaSk22a}, proportionality axioms have received a lot of attention in recent years. Starting with the work of \citet{ABC+16a}, who introduced not only EJR but also \textit{core stability}, several papers either generalized these proportionality notions, found new rules satisfying them, or identified new settings to apply them. For instance, PJR was introduced by \citet{SFF+17a} and subsequently studied by \citet{BFJL16a} and \citet{AEH+18a},
the \textit{proportionality degree} was introduced by \citet{Skow21a} and further studied by \citet{JaFa22a},
\textit{individual representation} was introduced by \citet{BIMP22a}, 
and \textit{fully justified representation} was introduced by \citet{PeSk20a}, who also proposed the \textit{Method of Equal Shares} and the concept of priceability. 

Commonly studied formalisms that are closely related to approval-based multiwinner voting include 
 \textit{participatory budgeting} \citep{PPS21a, LCG22a, BFL+23a, ALT18a},
 \textit{proportional rankings} \citep{SLB+17a, isbr21b, RST22a}, 
and \textit{public decision-making} \citep{SkGo22a, FKP21a}. 
In many cases, axioms like EJR and PJR and voting rules like PAV and MES have been adapted to these related settings.   
Interestingly, \citet{SkGo22a} motivate their axioms with the goal to ``[...] guarantee fair treatment for all groups of voters, not only the cohesive ones.'' Since they work in a setting with multiple binary issues, their concepts and results do not translate to the multiwinner setting we study. 

Finally, a recent line of work studying approximations of core stability in approval-based multiwinner voting and beyond \citep{CJMW19a, PeSk20a, JMW20a, MSWW22a}. Determining whether the core of an approval-based multiwinner election is always nonempty is considered an important open question~\citep{LaSk22a}.

\section{Preliminaries}
\label{sec:prelims}

In this section, we formally introduce the setting and review proportionality axioms and voting rules from the literature. For a natural number $n$, let $[n]$ denote the set $\{1,\dots,n\}$. 

\subsection{Setting}

We consider a social choice setting with a finite set $C = \{c_1, \dots, c_m\}$ of $m$ \textit{candidates} and a finite set $N = [n]$ of \emph{voters} who have ordinal preferences over the candidates. Throughout this paper, we assume that preferences have the following form: For each voter  $i \in N$, there is a set $A_i \subseteq C$ of \textit{acceptable candidates} and a complete and transitive preference relation $\succeq_i \, \subseteq A_i \times A_i$ over the acceptable candidates. In other words, $\succeq_i$ is a \textit{weak order} over $A_i$. We let $\succ_i$ denote the strict part of~$\succeq_i$. We assume that voters strictly prefer acceptable candidates to unacceptable ones, and that they are indifferent among unacceptable candidates.

For $A, B \subseteq C$, we write $A \succeq_i B$ (respectively, $A \succ_i B$) if $a \succeq_i b$ (respectively, $a \succ_i b$) holds for all $a \in A$ and $b \in B$.
Further, for any $c \in A_i$ we let $\rank(i, c) = \lvert \{c' \in C \colon c' \succ_i c\} \rvert + 1$ denote the \emph{rank} voter $i$ assigns to candidate $c$. We say that voter $i$ ranks candidate $c$ \textit{higher} than candidate $c'$ if $c \succ_i c'$, or, equivalently, $\rank(i,c)<\rank(i,c')$. All unacceptable candidates $c \in C\setminus A_i$ are assigned a rank of $\rank(i,c)=+\infty$.

Besides the general case of weak-order preferences, we consider two important special cases. 
If~$\succeq_i$ is a linear order over $A_i$, we say that voter $i$ has \textit{strict} preferences. In this case, there are no ties between acceptable candidates. 
If, on the other hand, a voter is indifferent among all candidates in $A_i$, we say that the voter has \textit{dichotomous} preferences. Dichotomous preferences naturally occur when using approval ballots, which is why we also refer to them as \textit{approval preferences}.   
We use the term \textit{weak preferences} to refer to the general case, i.e., when preferences are not assumed to be strict or dichotomous.

A \textit{preference profile} $P=(\succeq_1, \dots, \succeq_n)$ contains the preferences of all voters. Note that, for each voter $i$, the set $A_i$ can be deduced from $\succeq_i$. 
We call a preference profile \textit{strict} if all voters have strict preferences. If all voters have dichotomous preferences, we refer to $P$ as an \textit{approval profile} and denote it as $P=(A_1, \dots, A_n)$. For a given approval profile and a candidate $c \in C$, we let  $N_c = \{i \in N \colon c \in A_i\}$ denote the set of approvers of $c$.

We often write the preferences of voters as a strict ranking over indifference classes, omitting unacceptable candidates. 
The following example illustrates this. 

\begin{example}
    Consider the following preference profile with $m=6$ candidates 
    and $n=3$ voters.
    \begin{align*}
    &1: c_1 \succ \{c_2, c_3, c_4\} \\
    &2: \{c_2, c_3\} \\
    &3: c_5 \succ c_4 \succ c_3
    \end{align*}
Here, we have $A_1=\{c_1, c_2, c_3, c_4\}$, $A_2=\{c_2, c_3\}$, and $A_3=\{c_3, c_4, c_5\}$. The ranks that voter 1 assigns to the candidates are given by $\rank(1,c_1)=1$, $\rank(1,c_2)=\rank(1,c_3)=\rank(1,c_4)=2$, and $\rank(1,c_5)=+\infty$. 
Voter~2 has dichotomous preferences and voter~3 has strict preferences.  
\end{example}

A \textit{(multiwinner voting) instance} consists of a set $N$ of voters, a set $C$ of candidates, a preference profile $P$, and target committee size $k\le m$. A \emph{feasible committee} is any subset $W \subseteq C$ with $\lvert W \rvert \le k$. 
A \textit{(multiwinner voting) rule} maps every instance $(N,C,P,k)$ to a non-empty set of feasible committees. We allow a rule to output more than one committee to account for ties and in order to be able to speak about rules such as EAR and STV (see \Cref{sec:rules}) that come in several different variants. 
We say that a rule ``satisfies'' a proportionality notion if and only if, for each instance, \textit{every} committee in the output of the rule satisfies the respective notion.

\subsection{Proportionality Notions for Strict Preferences} 

We now turn to the proportionality notions defined in the literature, starting with the oldest and most prominent setting: multiwinner elections with strict preferences. In his classical work, \citet{Dumm84a} introduced the notion of \emph{Proportionality for Solid Coalitions (PSC)}. To define this property, we first need to define the eponymous solid coalitions.

\begin{definition}[Solid Coalition] \label{def:SC}
Given a strict preference profile, a subset $N' \subseteq N$ of voters forms a \emph{solid coalition} over a set of candidates $C' \subseteq C$ if $C' \succ_i C\setminus C'$ for all $i \in N'$.
\end{definition}

Hence, voters in a solid coalition rank all candidates in $C'$ higher than candidates outside of $C'$, but the order among candidates in $C'$ may differ among voters in the coalition. 
Since the group~$N'$ contains an $(|N'|/n)$-fraction of all voters, 
PSC requires that at least $\lfloor (|N'|/{n}) k \rfloor$ candidates from this prefix are selected.\footnote{In accordance with the literature on approval-based committee voting, our definition of PSC is based on the so-called \textit{Hare quota} $\frac{n}{k}$. Different choices of quota are often discussed;  e.g., the \textit{Droop quota} is given by 
$\frac{n}{k+1}$ \citep{AzLe20a}.}

\begin{definition}[PSC] \label{def:PSC}
Given an instance with strict preferences, a feasible committee $W$ satisfies \emph{proportionality for solid coalitions (PSC)} if for any subset $N' \subseteq N$ of voters forming a solid coalition over $C' \subseteq C$ and any $\ell \in \mathbb{N}$ such that $\lvert N' \rvert \ge \ell \frac{n}{k}$ it holds that 
$\lvert C' \cap W\rvert \ge \min\left(\lvert C'\rvert,\ell\right).$
\end{definition}

\subsection{Proportionality Notions for Approval Preferences}
\label{sec:prelim-approval}

Inspired by \citet{Dumm84a}, in the setting with approvalpreferences, \citet{ABC+16a} and \citet{SFF+17a} introduced their \emph{justified-representation} axioms. Just as solid coalitions are the foundation for PSC, the justified-representation axioms build on cohesive groups. 

\begin{definition}[Cohesive Group]
 Given an approval profile and a natural number $\ell$, a subset $N' \subseteq N$ of voters  forms an $\ell$-cohesive group if $\lvert N' \rvert \ge \ell \frac{n}{k}$ and $\big \lvert \bigcap_{i \in N'} A_i \big \rvert \ge \ell$.
\end{definition}

This now serves as the main ingredient for the two most prominent justified-representation notions: 
\textit{extended justified representation}~\citep{ABC+16a} and
\textit{proportional justified representation}~\citep{SFF+17a}. 

\begin{definition}[EJR \normalfont{\&} PJR] \label{def:pjr-ejr}
Given an instance with approval preferences, a feasible committee~$W$ satisfies
\begin{itemize}
    \item \emph{extended justified representation (EJR)} if for each natural number $\ell$ and for each $\ell$-cohesive group $N'\subseteq N$, there is some voter $i \in N'$ with $\big \lvert A_i \cap W \big \rvert \ge \ell$; and 
    \item \emph{proportional justified representation (PJR)} if for each natural number $\ell$ and for each $\ell$-cohesive group $N' \subseteq N$, it holds that $\big \lvert \bigcup_{i \in N'} A_i \cap W \big \rvert \ge \ell$.
\end{itemize}
\end{definition}

By definition, EJR is a stronger requirement than PJR. 
Finally, we introduce the concept of \textit{priceability} \citep{PeSk20a, PPS21a}. 

\begin{definition}[Priceability] \label{def:priceability}
Given an instance with approval preferences, a committee $W$ is \emph{priceable} if there exist a $B>0$ and functions $p_i\colon C \rightarrow [0, \frac{B}{n}]$ such that the following conditions hold: 
\begin{itemize}
    \item[\textbf{(C1)}] $c \notin A_i \implies p_i(c) = 0$ 
    \item[\textbf{(C2)}] $\sum_{c \in C} p_i(c) \le \frac{B}{n}$ for all $i \in N$
    \item[\textbf{(C3)}] $\sum_{i \in N} p_i(c) = 1$ for all $c \in W$
    \item[\textbf{(C4)}] $\sum_{i \in N} p_i(c) = 0$ for all $c \notin W$
    \item[\textbf{(C5)}] ${\sum_{i \in N \colon c \in A_i} \left(\frac{B}{n} - \sum_{c' \in C}p_i(c')\right)}% 
    \le 1$ for all $c \in C \setminus W$
\end{itemize}
In this case, the pair $\{B, \{p_i\}_{i \in N}\}$ is called a \emph{price system} for $W$. 
\end{definition}
\citet{PeSk20a} have shown that every priceable committee of size $k$ satisfies PJR.

\subsection{Proportionality Notions for Weak Preferences}
\label{sec:prelims-weak}

For committee elections with general weak preferences, \citet{AzLe20a} generalized the notion of solid coalitions (\Cref{def:SC}) and PSC (\Cref{def:PSC}) in the following way.

\begin{definition}[Generalized Solid Coalition] \label{def:GSC}
Given a preference profile, a group $N' \subseteq N$ of voters forms a generalized solid coalition over a set $C'\subseteq C$ of candidates if $C' \succeq_i C\setminus C'$ for all $i \in N'$.
\end{definition}

In order to qualify as a generalized solid coalition, no candidate outside of $C'$ can be ranked higher than any candidate inside $C'$. Ties, however, are allowed.

To facilitate the definition of generalized PSC, we define the \emph{upper contour set} of a set $C'$ of candidates w.r.t. a set $N'$ of voters as the set consisting of those candidates $c$ for which there is at least one voter in $N'$ that does not rank all candidates in $C'$ strictly higher than $c$.
Formally, the upper contour set $\overline{C'}(N')$ of $C'\subseteq C$ w.r.t. $N'\subseteq N$ is defined as  
\[
\overline{C'}(N') = \{c \in C \colon  \text{there exist } i \in N' \text{ and } c' \in C' \text{ such that } c \succeq_i c'\}.
\]

\begin{definition}[Generalized PSC] \label{def:GSC}
Given an instance with weak preferences, a  feasible committee~$W$ satisfies \emph{generalized PSC} if for every natural number $\ell$ and for every group $N'\subseteq N$ of voters forming a generalized solid coalition over $C'$ with with $|N'| \ge \ell \frac{n}{k}$, it holds that 
\[\lvert \overline{C'}(N') \cap W\rvert \ge \min(\ell, \lvert C'\rvert). \]
\end{definition}

Generalized PSC is equivalent to PJR when restricted to instances with approval preferences and\,---\,by definition\,---\,equivalent to PSC when restricted to instances with strict preferences. The former equivalence was proven by \citet{AzLe20a}, who also introduced a rule that satisfies generalized PSC for general weak preferences (see \Cref{sec:rules}). 

In a follow-up work on participatory budgeting, \citet{AzLe21a} introduced a slight strengthening of generalized PSC called \emph{Inclusion PSC (IPSC)}, which we reformulate here for our setting.

\begin{definition}[IPSC] \label{def:ipsc}
Given an instance with weak preferences, a feasible committee~$W$ satisfies \emph{Inclusion PSC (IPSC)} if for every natural number $\ell$ and for every group $N'\subseteq N$ of voters forming a generalized solid coalition over $C'$ with $|N'| \ge \ell \frac{n}{k}$, it holds that
\[
C' \subseteq W \quad \text{or} \quad \lvert \overline{C'}(N') \cap W\rvert \ge \ell. 
\]
\end{definition}

The following example illustrates the difference between generalized PSC and IPSC. 

\begin{example}
Consider the following instance with $n=3$ voters, $m=9$ candidates, and $k=6$.
\begin{align*}
    &1: c_1 \succ \{c_2, c_3, c_4\} \succ c_7 \succ c_8 \\
    &2: c_1 \succ \{c_2, c_5, c_6\} \succ c_8 \succ c_7 \\
    &3: c_7 \succ c_8 \succ c_9
\end{align*}
Each voter $i\in \{1,2,3\}$ by herself forms a generalized coalition over the candidates with $\rank(i,c)\le 3$ and deserves to be represented by $\frac{1}{3} k=2$ candidates. 
Moreover, voters 1 and 2 together form a generalized solid coalition over $\{c_1, c_2\}$ of size $4 \frac{n}{k}$.
Generalized PSC prescribes that $c_1, c_7, c_8$, at least one of $\{c_2, c_3, c_4\}$, and at least one of $\{c_2, c_5, c_6\}$ needs to be selected.  
Therefore, the committee $W = \{c_1, c_3, c_5, c_7, c_8, c_9\}$ satisfies generalized PSC. On the other hand,  $W$ does not satisfy IPSC since the group $N' = \{1,2\}$ solidly supports $C' = \{c_1, c_2\}$ with $c_2 \notin W$ and
$\lvert \overline{C'}(N') \cap W\rvert = \lvert \{c_1, c_3, c_5\} \rvert < 4$.
\end{example}

We note that for strict preferences, IPSC and generalized PSC coincide with Dummett's PSC.
\begin{restatable}{observation}{obsequiv}
    For strict preferences, both IPSC and generalized PSC are equivalent to PSC.
    \label{obs:ipsc_psc}
\end{restatable}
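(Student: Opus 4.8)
The plan is to establish the cycle of implications
\[
\text{generalized PSC}\ \implies\ \text{PSC}\ \implies\ \text{IPSC}\ \implies\ \text{generalized PSC}
\]
for every strict profile, which immediately yields all three stated equivalences. (Throughout one may assume the coalition $N'$ under consideration is nonempty, since $N'=\emptyset$ only permits $\ell=0$, for which every bound is vacuous.) Two of the three implications are routine. For ``generalized PSC $\implies$ PSC'', note that under strict preferences every Dummett solid coalition $N'$ over $C'$ is in particular a generalized solid coalition over $C'$, and that $\overline{C'}(N')=C'$ in this case: the inclusion $C'\subseteq\overline{C'}(N')$ holds by reflexivity of $\succeq_i$, and any $c\in\overline{C'}(N')\setminus C'$ would give $c\succeq_i c'$ for some $i\in N'$ and $c'\in C'$, contradicting $C'\succ_i C\setminus C'$; hence the requirement imposed by generalized PSC on such a coalition is exactly the requirement of PSC. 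For ``IPSC $\implies$ generalized PSC'' (which holds for arbitrary preferences, as IPSC strengthens generalized PSC), observe that $C'\subseteq W$ implies $C'\subseteq\overline{C'}(N')\cap W$ and hence $\lvert\overline{C'}(N')\cap W\rvert\ge\lvert C'\rvert\ge\min(\ell,\lvert C'\rvert)$, while $\lvert\overline{C'}(N')\cap W\rvert\ge\ell$ trivially gives $\lvert\overline{C'}(N')\cap W\rvert\ge\min(\ell,\lvert C'\rvert)$.

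The substantive step is ``PSC $\implies$ IPSC'', and the key preparatory fact is a dichotomy for strict profiles: for any generalized solid coalition $N'$ over $C'$, either (i) $N'$ is a Dummett solid coalition over $C'$, in which case $\overline{C'}(N')=C'$ as above, or (ii) some voter $i\in N'$ has $A_i\subseteq C'\subsetneq C$, in which case $\overline{C'}(N')=C$. To see this, suppose (i) fails, so that $c\succeq_i c'$ for some $i\in N'$, $c'\in C'$, $c\in C\setminus C'$; together with $c'\succeq_i c$ (generalized solidity) this forces $c\sim_i c'$, and since $c\neq c'$, under strict preferences both $c$ and $c'$ must be unacceptable to $i$. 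Thus $c'$ lies in voter $i$'s bottom indifference class, and $c'\succeq_i d$ for every $d\in C\setminus C'$ pushes each such $d$ into that class too, so $A_i\subseteq C'$, and $C'\subsetneq C$ because $c\notin C'$. For this voter $i$, every candidate of $C$ is weakly preferred to a member of $C'$ (an acceptable candidate to itself, since $A_i\subseteq C'$; an unacceptable candidate to $c'$), whence $\overline{C'}(\{i\})=C$ and therefore $\overline{C'}(N')=C$.

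Finally, I would use that any committee $W$ satisfying PSC is full, meaning $\lvert W\rvert=\min(m,k)$: applying PSC to the vacuously solid coalition $N'=N$ over $C'=C$ with $\ell=k$ gives $\lvert W\rvert=\lvert C\cap W\rvert\ge\min(m,k)$, and $\lvert W\rvert\le\min(m,k)$ since $W$ is feasible. Now let $N'$ be a generalized solid coalition over $C'$ with $\lvert N'\rvert\ge\ell\frac{n}{k}$, so $\ell\le k$. In case (i), PSC yields $\lvert\overline{C'}(N')\cap W\rvert=\lvert C'\cap W\rvert\ge\min(\lvert C'\rvert,\ell)$, which is $\ge\ell$ when $\ell\le\lvert C'\rvert$ and forces $C'\subseteq W$ when $\ell>\lvert C'\rvert$. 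In case (ii), $\lvert\overline{C'}(N')\cap W\rvert=\lvert W\rvert=\min(m,k)$, which is $\ge\ell$ unless $m<\ell\le k$, and in that remaining case $\lvert W\rvert=m=\lvert C\rvert$ forces $W=C\supseteq C'$. Either way the disjunction demanded by IPSC holds, closing the cycle. The only genuinely delicate point is case (ii): because strict preferences still permit an ``extraneous'' unacceptable candidate inside $C'$, a generalized solid coalition need not be a Dummett solid coalition, and there the argument must fall back on the fullness of $W$ rather than on the PSC guarantee attached to $N'$ itself.
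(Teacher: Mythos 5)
Your proof is correct, and although it follows the same skeleton as the paper's argument---take IPSC $\Rightarrow$ generalized PSC $\Rightarrow$ PSC as known and establish PSC $\Rightarrow$ IPSC directly---your handling of that last implication is genuinely more careful. The paper's proof rests entirely on the one-line claim that strict preferences force $\overline{C'}(N') = C'$ for the coalition witnessing an IPSC violation. As your dichotomy makes explicit, that claim is false for truncated strict ballots: if some $i \in N'$ finds a candidate of $C'$ unacceptable, generalized solidity forces $A_i \subseteq C'$, every candidate is then weakly preferred by $i$ to that unacceptable member of $C'$, and hence $\overline{C'}(N') = C$. Your fallback for this case---any PSC committee is full, so $\lvert \overline{C'}(N') \cap W \rvert = \lvert W \rvert = \min(m,k)$, which yields either $\lvert W \rvert \ge \ell$ or $W = C \supseteq C'$---is exactly what is needed to close the case the paper's argument glosses over; the paper never invokes fullness of $W$. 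The only blemish is cosmetic: the standalone phrasing of case (ii), ``some voter has $A_i \subseteq C' \subsetneq C$, in which case $\overline{C'}(N') = C$,'' is not literally true on its own (take $C' = A_i$, a Dummett-solid situation with $\overline{C'}(N') = C'$); but the configuration you actually derive from the failure of case (i)---namely that $C'$ additionally contains a candidate \emph{unacceptable} to $i$---does give $\overline{C'}(N') = C$, so the argument stands as written.
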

\begin{proof}
    It is already known that IPSC implies generalized PSC and generalized PSC implies PSC. 
    Consider an instance with strict preferences and assume that a committee $W$ does not satisfy IPSC. Then there is a group of voters $N'$ forming a solid coalition over $C'$ such that less than $\ell$ of $\overline{C'}(N')$ and not all of $C'$ are in $W$. Since the preferences are strict $\overline{C'}(N') =C'$ has to hold. Hence, this is precisely, the requirement for PSC and thus PSC is also violated.
\end{proof}

\citet{AzLe20a} have shown that it is coNP-complete to verify whether a given committee satisfies generalized PSC. We show that the same is true for IPSC.

\begin{restatable}{proposition}{ipscconp}\label{prop:checkingIPSC}
    Given an instance with weak preferences and a feasible committee $W$, it is coNP-complete to check whether $W$ satisfies IPSC.
\end{restatable}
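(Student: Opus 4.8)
The plan is to prove the two directions separately: containment in \textsf{coNP} is routine, whereas \textsf{coNP}-hardness requires a tailored reduction and is where the work lies. (Note that hardness must exploit ties, since for strict preferences IPSC coincides with PSC by \Cref{obs:ipsc_psc}, and PSC is verifiable in polynomial time.)

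For \textbf{\textsf{coNP} membership}, observe that whenever $W$ violates IPSC this is witnessed by a triple $(N', C', \ell)$ where $N'\subseteq N$ forms a generalized solid coalition over $C'\subseteq C$, $|N'|\ge\ell\frac{n}{k}$, $C'\not\subseteq W$, and $|\overline{C'}(N')\cap W|<\ell$. Such a witness has polynomial size (one may always take $\ell=\lfloor|N'|\frac{k}{n}\rfloor$, since enlarging $\ell$ only makes a violation easier to exhibit), and all four conditions are polynomial-time checkable: verifying $C'\succeq_i C\setminus C'$ for every $i\in N'$, computing $\overline{C'}(N')$, and comparing cardinalities are all straightforward. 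Hence the complement problem ``$W$ violates IPSC'' lies in \textsf{NP}, so verifying IPSC lies in \textsf{coNP}.

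For \textbf{\textsf{coNP}-hardness}, the plan is to give a polynomial-time reduction from an \textsf{NP}-complete problem; since approval profiles are a special case of weak preferences, it suffices to establish hardness already on \emph{approval} profiles. On approval profiles one first checks (using that a coalition $N'$ over $C'$ is non-vacuous for IPSC only if $C'\subseteq\bigcap_{i\in N'}A_i$, in which case $\overline{C'}(N')=\bigcup_{i\in N'}A_i$, while all other coalitions impose no constraint) that $W$ violates IPSC exactly when there is a voter set $N'$ with (i)~$\bigcap_{i\in N'}A_i\not\subseteq W$ and (ii)~$|\bigcup_{i\in N'}A_i\cap W|<\lfloor|N'|\frac{k}{n}\rfloor$. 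Finding such a ``cluster''\,---\,many voters whose approvals are concentrated on few committee members while still sharing a non-elected candidate\,---\,is an \textsf{NP}-hard search problem, and one can encode into it, e.g., an instance of \textsc{Set Cover} (or \textsc{Balanced Biclique}), along the lines of the known \textsf{coNP}-hardness proof for PJR verification \citep{AEH+18a}: the committee $W$ plays the role of the ground set, small subsets of $W$ correspond to candidate solutions, voters are placed so that a bad cluster exists iff the encoded instance is a yes-instance, and a dummy candidate added to the approval set of every cluster voter but kept outside $W$ disables IPSC's escape clause $C'\subseteq W$. (Alternatively one could adapt Aziz and Lee's reduction for generalized PSC \citep{AzLe20a} directly, but one would then also have to rule out the corner case in which IPSC is violated via a coalition with $\ell>|C'|$ although generalized PSC is not; the detour through approval profiles avoids this.)

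I expect the \textbf{main obstacle} to be designing the reduction so that the intended combinatorial structure is the \emph{only} possible source of an IPSC violation. Because ties generate exponentially many generalized solid coalitions, the construction must guarantee that every ``spurious'' coalition\,---\,in particular ones built partly from padding voters, and ones over sets $C'$ that happen to lie entirely inside $W$\,---\,is automatically satisfied, while the single family of coalitions encoding the \textsf{NP}-hard instance is under-represented precisely on yes-instances. Balancing the committee size $k$, the number of voters $n$, and the sizes of the relevant approval sets so that the threshold $\lfloor|N'|\frac{k}{n}\rfloor$ cleanly separates yes- from no-instances is the delicate part.
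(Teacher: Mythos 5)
Your \textsf{coNP}-membership argument is fine and matches the paper's. The hardness plan, however, contains a fatal flaw: you propose to establish hardness already on \emph{approval} profiles, but IPSC restricted to approval profiles is exactly the property the paper renames \pjrp, and the paper proves (via submodular minimization, in the proposition preceding \Cref{prop:priceable-rankpjrp}) that this special case is verifiable in \emph{polynomial time}. The characterization you state yourself shows why: on approval profiles a violation only requires a single common unselected candidate $c\in\bigcap_{i\in N'}A_i\setminus W$ together with $\lvert\bigcup_{i\in N'}A_i\cap W\rvert<\ell$; fixing $c$ and minimizing $f(N')=\lvert\bigcup_{i\in N'}A_i\cap W\rvert-\lvert N'\rvert k/n$ over $N'\subseteq N_c$ is a submodular minimization problem, solvable in polynomial time. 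The ``cluster-finding'' task you label \textsf{NP}-hard is therefore easy; the biclique-style hardness of PJR verification stems precisely from the cohesiveness requirement $\lvert\bigcap_{i\in N'}A_i\rvert\ge\ell$, which IPSC/\pjrp does not impose. So no reduction of the kind you sketch can exist unless $\textsf{P}=\textsf{NP}$.

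The hardness genuinely needs weak preferences that are neither strict nor dichotomous. The paper reduces from \textsc{Clique}: each vertex $v_i$ yields a voter with the two-tier preference $c_i\succ N(i)$ (neighbors tied in the second tier), the committee consists only of dummy candidates, and $k=n/k'$. The key mechanism is that any generalized solid coalition $N'$ must include each member's unique top choice $c_i$ in $C'$, which in turn forces every pair of members to be adjacent --- so a violating coalition exists iff $G$ has a $k'$-clique. This exploitation of the rank-1/rank-2 structure is exactly what your detour through approval profiles throws away, and it is the missing idea in your proposal.
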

\begin{proof}
    First, we note that membership in coNP follows from the fact that if there is a violation to IPSC, one can use the candidates $C'$ and group $N'$ of voters as a witness to this violation, since it can be verified in polynomial time whether $N'$ indeed form a generalized solid coalition and whether there are sufficiently many candidates selected from their upper contour set.
    
    For the hardness, we reduce from the \textsc{Clique} problem. Given a graph $G = (V,E)$ and an integer~$k'$, the task is to determine whether there are $k'$ vertices in $V$ forming a clique, i.e., a complete subgraph. Without loss of generality, we assume $\lvert V \rvert$ to be divisible by $k'$; this can easily be achieved by adding isolated vertices. Since \textsc{Clique} is NP-complete, its complement is coNP-complete.

    For a given \textsc{Clique} instance $(G,k)$, we construct a multiwinner voting instances as follows. Let $v_1, \dots, v_n$ be the vertices of $G$. For each vertex $v_i$, there is one voter $i$ and one candidate $c_i$.  Further, let $N(i)$ denote the set of candidates corresponding to the neighbors of $v_i$ in $G$. The preferences of voter $i \in [n]$ are given by 
    \[
    i: c_i \succ N(i)
    \]
    with all candidates in $N(i)$ being in a tie.
    We set the target committee size to $k = \frac{n}{k'}$ and add $k$ dummy candidates $d_1, \dots, d_k$.
    We claim that the committee $W=\{d_1, \dots, d_k\}$ consisting of all dummy candidates satisfies IPSC if and only if there is no clique of size $k'$ in $G$.

    First, assume that $W$ does not satisfy IPSC. Then there must be a subset $N'$ of voters  solidly supporting a set of candidates $C'$ such that $\lvert N' \rvert \ge \frac{n}{k} = k'$. Since, the unique first choice of each voter is the candidate corresponding to themselves, all candidates corresponding to voters in $N'$ must be in $C'$. Otherwise, they would not form a solid coalition, since there would be a voter missing their first choice candidate. Then, however, every vertex corresponding to a voter in $N'$ must be connected to every other vertex, since they rank them at least at second place. Therefore, the vertices form a clique of size $k'$.

    On the other hand, if we have a clique of size $k'$, the voters corresponding to the clique form a solid coalition of size $\frac{n}{k}$ over the candidates corresponding to the clique, thus witnessing a violation of IPSC.
\end{proof}

\subsection{Multiwinner Voting Rules}
\label{sec:rules}

Finally, we introduce four rules that we consider throughout the paper: EAR, MES, PAV, and STV. While the first two are defined for general weak preferences, PAV assumes approval preferences and STV assumes strict preferences. 
To emphasize the similarity between EAR, MES, and STV, we formulate these rules in a slightly non-standard way. 

\begin{algorithm}
\caption{Expanding Approvals Rule}
\label{alg:ear}

$W \gets \emptyset$\;
 $b_i \gets \frac{k}{n}$ for all $i \in N$\;
\For{$r$ in $1, \dots, m$}{
 $N_c^r \gets \{i \in N \colon \rank(i,c) \le r\}$\;
 $X \gets \{c \in C \setminus W \colon \sum_{i \in N_c^r} b_i \ge 1 \}$\;
\While{$X \neq \emptyset$}{
 select $c \in X$\label{line:select1}\;
 $W \gets W \cup \{c\}$\;
 define $(\delta_i)_{i \in N_c^r}$ such that $\sum_{i \in N_c^r} \delta_i =1$ and $0\le \delta_i \le b_i$ for all $i \in N_c^r$\label{line:select2}\;
\For{$i \in N_c^r$}{
 $b_i \gets b_i -\delta_i$\;
}
 $X \gets \{c \in C \setminus W \colon \sum_{i \in N_c^r} b_i \ge 1 \}$\;
}
}
 return $W$
\end{algorithm}

The \emph{expanding approvals rule (EAR)} is a family of rules introduced by \citet{AzLe20a, AzLe21a} for the general setting with weak preferences. Intuitively, the family consists of rules which give each voter a budget of $\frac{k}{n}$ with the goal to buy candidates for a price of $1$. The rules go rank-by-rank and check, for every rank $r$, if there is a candidate who can be afforded by the voters who assign a rank of at most $r$ to this candidate. If there is, one of these candidates is selected and the budget of the supporting voters is decreased by $1$ (the price of the candidate). See \Cref{alg:ear} for an algorithmic template of these rules.
Different instantiations of this family differ in \textit{(i)} how they select the candidate to add (line~\ref{line:select1})
and \textit{(ii)} how they decrease the budgets of the voters approving this candidate (line~\ref{line:select2}).

The \emph{Method of Equal Shares (MES)} can be described as a special case of EAR. It was first introduced for instances with approval preferences by \citet{PeSk20a} (who referred to it as ``Rule~X'') and subsequently generalized to participatory budgeting with additive utilities as well as strict rankings by \citet{PPS21a}. In their variant of MES, \citet{PPS21a} define it using their framework of additive utilities by assuming that the utilities are lexicographic. 
In our paper, we use a simplified version of the rule. 
To define MES, for a given $\rho > 0$ and budgets $(b_i)_{i \in N}$, we say that a candidate is $\rho$-affordable if $\sum_{i \in N_c^r} \min(\rho, b_i) = 1$. Here, $N_c^r$ denotes the set of voters who assign to $c$ a rank of at most $r$. 
MES is the variant of EAR which always selects the candidate which is $\rho$-affordable for the lowest $\rho$ and then reduces the budget $b_i$ for each $i \in N_c^r$ by $\delta_i = \min(\rho, b_i)$.

A well-known rule for instances with approval preferences is \emph{proportional approval voting (PAV)} \citep{Thie95a}. 
PAV selects all committees $W$ of size $|W|=k$ maximizing $\sum_{i \in N}\sum_{j = 1}^{\lvert A_i \cap W\rvert} \frac{1}{j}$.

Finally, we consider the widely-used \emph{Single Transferable Vote (STV)}  family of voting rules \citep{Tide95a}, 
which assumes that preferences are strict. 
We again work with the Hare quota $\frac{k}{n}$. STV can be described similarly to EAR: Each voter is assigned a budget of $\frac{k}{n}$ in the beginning, with the price of a candidate being $1$. The crucial difference to EAR is that instead of going rank-by-rank, STV checks if there is a candidate who can be afforded by the voters ranking it \textit{first}. If there is no such candidate, a candidate with the lowest amount of budget among voters ranking it first is eliminated. Different versions of STV differ in deciding \textit{(i)} which candidate to eliminate, \textit{(ii)} which affordable candidate to select, and \textit{(iii)} how to transfer the surplus budget. For an algorithmic description, we refer to \cref{alg:stv}.

\begin{algorithm}
\caption{Single Transferable Vote}
\label{alg:stv}

 $W \gets \emptyset$\;
 $b_i \gets \frac{k}{n}$ for all $i \in N$\;
\While{$C\neq \emptyset$}{
 $N_c \gets \{i \in N \colon \rank(i,c) > \rank(i,c') \text{ for all } c' \in C \setminus \{c\}\}$\;
 $X \gets \{c \in C \setminus W \colon \sum_{i \in N_c} b_i \ge 1 \}$\;
\While{$X \neq \emptyset$}{
 select $c \in X$ \label{line:select1a}\;
 $W \gets W \cup \{c\}$\;
 define $(\delta_i)_{i \in N_c^r}$ such that $\sum_{i \in N_c^r} \delta_i =1$ and $0\le \delta_i \le b_i$ for all $i \in N_c^r$\label{line:select2a}\;
\For{$i \in N_c^r$}{
 $b_i \gets b_i -\delta_i$\;
}
$C \gets C \setminus \{c\}$\;
 $N_c \gets \{i \in N \colon \rank(i,c) > \rank(i,c') \text{ for all } c' \in C \setminus \{c\}\}$\;
 $X \gets \{c \in C \setminus W \colon \sum_{i \in N_c} b_i \ge 1 \}$\;
}
 select $c \in \argmin_{c' \in C} \sum_{i \in N_c} b_i$\;
 $C \gets C \setminus \{c\}$\;
}
 return $W$\;
\end{algorithm}

\section{Approval Ballots}
\label{sec:approval}

We first consider the case of approval preferences. Our notions for ranked preferences in \Cref{sec:ranked} will build on the notions we develop here.
We consider a robust strengthening of PJR in \Cref{sec:pjr+}, a robust strengthening of EJR in \Cref{sec:ejr+}, and a simple greedy algorithm satisfying the robust EJR notion in \Cref{sec:greedy}. In \Cref{sec:exp-app}, we show that the robust EJR notion is considerably more discriminating than existing proportionality axioms for randomly generated instances.  

 Our point of departure is the following observation: Existing proofs establishing that certain voting rules satisfy justified-representation axioms like PJR and EJR do \textit{not} actually use the fact that the voter group under consideration forms a cohesive group. (For example, this is true for the proof by \citet{ABC+16a} that PAV satisfies EJR  and for the proof by \citet{PeSk20a} that MES satisfies EJR.) 
 Instead, these proofs only argue about the existence of certain candidates that are not included in the committee but approved by the voter group. 
Based on this observation, we can reformulate justified-representation axioms in a robust way,  moving away from arguing about cohesive groups and rather focusing on candidates outside the committee who have a valid ``proportional'' reason to be included. Adapting proportionality requirements in this way means that voter groups are no longer required to be highly cohesive in order to be properly represented in the committee. Rather, it is sufficient to identify an unselected candidate that is approved by all voters in the group and whose addition would bring the group closer to their justified representation. 
 
 \subsection{Proportional Justified Representation (PJR) Without Cohesiveness}
 \label{sec:pjr+}

 Before defining our main new axiom in \Cref{sec:ejr+}, we prepare the ground by reformulating (and renaming) an existing notion: \textit{Inclusion PSC}. IPSC was originally defined for instances with general weak preferences (see \Cref{def:ipsc}). When restricted to instances with approval preferences, \citet{AzLe21a} observed that IPSC can be formulated as follows. 
 
 \begin{observation}[\citet{AzLe21a}, Proposition 9]
 Given an instance with approval preferences, a feasible committee~$W$ satisfies IPSC if and only if there is no group $N'\subseteq N$ of voters so that 
 \begin{itemize}
     \item[(i)]  $\lvert N' \rvert \ge (\left\lvert \bigcup_{i \in N'} A_i \cap W \right\rvert + 1)\frac{n}{k}$, and
     \item[(ii)] there exists some $c^* \in \bigcap_{i \in N'} A_i \setminus (\bigcup_{i \in N'} A_i \cap W)$.
 \end{itemize}
 \end{observation}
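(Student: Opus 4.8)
The plan is to prove the two implications of the equivalence directly, reducing everything to one structural fact about upper contour sets under approval (dichotomous) preferences. For a group $N'$ I abbreviate $U := \bigcup_{i \in N'} A_i$ and $I := \bigcap_{i \in N'} A_i$. The fact I would establish first is: for any nonempty $C' \subseteq I$ one has $\overline{C'}(N') = U$, whereas if some voter $i_0 \in N'$ fails to approve all of $C'$ then $\overline{C'}(N') = C$. Both halves are immediate from dichotomy — in the first case every candidate that some $i \in N'$ approves is tied under $\succeq_i$ with all of $i$'s (approved) candidates in $C'$ and hence lies in $\overline{C'}(N')$, while every candidate that nobody in $N'$ approves is strictly below all of $C'$ for every $i \in N'$ and hence lies outside; in the second case, for a candidate $c' \in C' \setminus A_{i_0}$ every $c \in C$ satisfies $c \succeq_{i_0} c'$ (as $c'$ is unacceptable to $i_0$), so $\overline{C'}(N') = C$. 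This fact, together with the trivial identity $I \setminus (U \cap W) = I \setminus W$ (valid since $I \subseteq U$), does essentially all the work.

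For the direction ``if some group satisfies (i) and (ii), then $W$ violates IPSC'', I would take such a group $N'$, fix $c^* \in I \setminus W$ as guaranteed by (ii), and use the witness $C' := I$ with $\ell := |U \cap W| + 1$. Then $N'$ is a generalized solid coalition over $C'$ (since $C' \subseteq A_i$ for all $i \in N'$), $\overline{C'}(N') = U$ by the structural fact, so $|\overline{C'}(N') \cap W| = \ell - 1 < \ell$; moreover $|N'| \ge \ell \cdot \tfrac{n}{k}$ is exactly (i), and $C' \not\subseteq W$ because $c^* \in C' \setminus W$. Hence IPSC fails on $(\ell, N', C')$.

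For the converse, ``if $W$ violates IPSC then some group satisfies (i) and (ii)'', I would start from a witnessing triple $(\ell, N', C')$. The key move is to argue we may assume $C' \subseteq A_i$ for every $i \in N'$: otherwise $\overline{C'}(N') = C$ by the structural fact, so $|\overline{C'}(N') \cap W| = |W| = k \ge \ell$ (using $\ell \le |N'| \cdot \tfrac{k}{n} \le k$), contradicting that $(\ell, N', C')$ witnesses a violation. Once $C' \subseteq I$, the structural fact gives $\overline{C'}(N') = U$; then $|U \cap W| < \ell$ rearranges to $|N'| \ge (|U \cap W| + 1)\tfrac{n}{k}$, which is (i), and $C' \not\subseteq W$ together with $C' \subseteq I$ produces an element of $I \setminus W = I \setminus (U \cap W)$, which is (ii).

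The step I expect to require the most care is the degenerate case in the converse, where $C'$ contains a candidate unacceptable to some coalition member: there $\overline{C'}(N')$ inflates to all of $C$, and the argument that this cannot be a genuine violation relies on committees having size exactly $k$, so that $|W| = k \ge \ell$ always holds. (Without this, small committees placed entirely on universally rejected candidates can violate IPSC yet admit no group of the form (i)--(ii); reading the statement for size-$k$ committees sidesteps the issue.) Everything else is routine manipulation of the quota inequality $|N'| \ge \ell \tfrac{n}{k}$ and the set identities above.
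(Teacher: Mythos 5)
Your proof is correct, but there is nothing in the paper to compare it against: the observation is imported from \citet{AzLe21a} (their Proposition~9) and stated without proof. Your argument supplies exactly the right ingredients. The structural dichotomy for upper contour sets under dichotomous preferences\,---\,$\overline{C'}(N') = \bigcup_{i\in N'} A_i$ when $\emptyset \neq C' \subseteq \bigcap_{i\in N'} A_i$, and $\overline{C'}(N') = C$ as soon as some $i_0\in N'$ has $C'\not\subseteq A_{i_0}$ (which, for a generalized solid coalition, forces $A_{i_0}\subseteq C'$)\,---\,is the key lemma, and both implications then reduce to the substitution $\ell = \lvert \bigcup_{i\in N'} A_i \cap W\rvert + 1$ together with the identity $\bigcap_{i \in N'} A_i \setminus (\bigcup_{i\in N'} A_i \cap W) = \bigcap_{i\in N'} A_i \setminus W$. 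The caveat you flag is genuine and worth keeping: under the paper's definition of a feasible committee as any $W$ with $\lvert W\rvert \le k$, the ``only if'' direction can actually fail when $\lvert W\rvert < k$. For instance, with $n=2$, $k=2$, $C=\{c_1,c_2,c_3\}$, $A_1=A_2=\{c_1\}$ and $W=\{c_1\}$, the group $\{1,2\}$ forms a generalized solid coalition over $C'=\{c_1,c_2\}$ with $\ell=2$, $\overline{C'}(N')=C$, and $\lvert \overline{C'}(N')\cap W\rvert = 1 < 2$, so IPSC is violated, yet no group satisfies (i) and (ii). Your restriction to committees of size exactly $k$, where $\lvert \overline{C'}(N')\cap W\rvert = k \ge \ell$ kills the degenerate case, is the intended reading and makes the equivalence hold; this is more care than either the paper or the cited source makes explicit.
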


\citet{AzLe21a} have also shown that, for approval preferences, IPSC is strictly stronger than PJR (and incomparable to EJR). 
We note that IPSC can be reformulated in a way that better aligns this with the definitions of axioms in \Cref{sec:prelims}.  

 Before defining our main new axiom in \Cref{sec:ejr+}, we prepare the ground by reformulating (and renaming) an existing notion: \textit{Inclusion PSC}. IPSC was originally defined for instances with general weak preferences (see \Cref{def:ipsc}). When restricted to instances with approval preferences, \citet{AzLe21a} have shown that IPSC is strictly stronger than PJR (and incomparable to EJR). 
 Based on an earlier reformulation \citep[Proposition~9]{AzLe21a},
we note that IPSC can be reformulated in a way that aligns with the definitions in \Cref{sec:prelims}.  
 
\begin{observation}
Given an instance with approval preferences, a committee $W$ satisfies IPSC if and only if there is no candidate $c \in C \setminus W$, group of voters $N'\subseteq N$, and $\ell \in \mathbb{N}$ with $\lvert N'\rvert \ge \ell \frac{n}{k}$ such that
\[
c \in \bigcap_{i \in N'} A_i \quad \text{and} \quad \left\lvert \bigcup_{i \in N'} A_i \cap W \right\rvert < \ell.
\]
\label{obs:ipsc_abc}
\end{observation}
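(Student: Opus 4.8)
The plan is to read the claim off the reformulation of \citet{AzLe21a} (Proposition~9) stated just above: a committee $W$ violates IPSC if and only if there is a group $N' \subseteq N$ satisfying (i) $\lvert N' \rvert \ge (\lvert \bigcup_{i \in N'} A_i \cap W \rvert + 1)\tfrac{n}{k}$ and (ii) $\bigcap_{i \in N'} A_i \setminus (\bigcup_{i \in N'} A_i \cap W) \ne \emptyset$. It therefore suffices to show that such a ``bad group'' $N'$ exists if and only if there is a triple $(c, N', \ell)$ with $c \in C \setminus W$, $N' \subseteq N$, $\ell \in \mathbb{N}$, $\lvert N' \rvert \ge \ell \tfrac{n}{k}$, $c \in \bigcap_{i \in N'} A_i$, and $\lvert \bigcup_{i \in N'} A_i \cap W \rvert < \ell$; negating both sides then yields the stated characterization.

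For the ``only if'' direction, given a bad group $N'$, I would set $\ell := \lvert \bigcup_{i \in N'} A_i \cap W \rvert + 1 \in \mathbb{N}$ and $c := c^*$ for the candidate $c^*$ provided by (ii). Then $\lvert \bigcup_{i \in N'} A_i \cap W \rvert = \ell - 1 < \ell$, and (i) gives $\lvert N' \rvert \ge \ell \tfrac{n}{k}$. Moreover $c^* \in \bigcap_{i \in N'} A_i$ by choice, while $c^* \notin \bigcup_{i \in N'} A_i \cap W$ together with $c^* \in \bigcap_{i \in N'} A_i \subseteq \bigcup_{i \in N'} A_i$ forces $c^* \notin W$, i.e.\ $c \in C \setminus W$. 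Hence $(c, N', \ell)$ is a witness of the desired form.

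For the ``if'' direction, given a witness $(c, N', \ell)$, note that $\ell$ is a natural number with $\ell > \lvert \bigcup_{i \in N'} A_i \cap W \rvert$, so $\ell \ge \lvert \bigcup_{i \in N'} A_i \cap W \rvert + 1$ and therefore $\lvert N' \rvert \ge \ell \tfrac{n}{k} \ge (\lvert \bigcup_{i \in N'} A_i \cap W \rvert + 1)\tfrac{n}{k}$, which is~(i). Since $c \in \bigcap_{i \in N'} A_i$ and $c \notin W$, we have $c \in \bigcap_{i \in N'} A_i \setminus (\bigcup_{i \in N'} A_i \cap W)$, so (ii) holds with $c^* := c$. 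Thus $N'$ is a bad group in the sense of \citet{AzLe21a}, and combining the two directions completes the argument.

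I do not expect a real obstacle; the argument is a direct unpacking of definitions. The only two points that need a moment of care are the choice of $\ell$ in the ``only if'' direction---taking it as small as possible, namely $\lvert \bigcup_{i \in N'} A_i \cap W \rvert + 1$, so that the cardinality condition $\lvert \bigcup_{i \in N'} A_i \cap W \rvert < \ell$ holds while the population bound $\lvert N'\rvert \ge \ell\tfrac{n}{k}$ is inherited from (i)---and the elementary inclusion showing that a candidate in $\bigcap_{i \in N'} A_i$ but not in $\bigcup_{i \in N'} A_i \cap W$ cannot lie in $W$.
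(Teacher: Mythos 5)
Your argument is correct and matches the paper's intent: the paper states this observation without an explicit proof, presenting it as a direct consequence of the \citet{AzLe21a} reformulation quoted just above, and your two-direction unpacking (choosing $\ell = \lvert \bigcup_{i \in N'} A_i \cap W \rvert + 1$ one way, and using $\ell \ge \lvert \bigcup_{i \in N'} A_i \cap W \rvert + 1$ the other way) is exactly the routine verification being elided.
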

\vspace{-1em}
Thus, if a committee does not satisfy IPSC, there is an unselected candidate $c$ who has a claim to be a part of the committee, since there is a group approving $c$ which together do not get their ``fair share'' of $\ell$ candidates in the committee. Note that this group of voters is not required to be $\ell$-cohesive; the merely need to agree on candidate $c$.

In light of this observation, the name ``Inclusion-PSC'' is a misnomer when considering approval preferences. From now on, motivated by the fact that the property can be considered a robust strengthening of proportional justified representation~(PJR), we will refer to this property as \pjrp.

\begin{definition}[\pjrp]\label{def:pjrp}
    Given an instance with approval preferences, a committee satisfies \emph{\pjrp} if and only if it satisfies IPSC. 
\end{definition}

This new name will also make it easier to recognize that we are applying the IPSC axiom to the special case of approval preferences. 

We show that \pjrp can be verified in polynomial time by reducing the problem to submodular optimization (using the same reduction that \citet{BGP+19a} used in the approval-based apportionment setting). This is in contrast to PJR, which is coNP-complete to verify \citep{AEH+18a}.
\vspace{-1.5em}
\begin{restatable}{proposition}{pjrppoly}
    Given an instance with approval preferences and a feasible committee $W$, it can be verified in polynomial time whether $W$ satisfies \pjrp.
\end{restatable}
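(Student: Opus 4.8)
The plan is to reduce the verification problem to polynomially many instances of unconstrained submodular function minimization, one per candidate outside $W$, following the approach of \citet{BGP+19a} in the apportionment setting. First I would invoke \Cref{obs:ipsc_abc}: $W$ fails \pjrp{} exactly when there exist a candidate $c \in C\setminus W$, a group $N'\subseteq N$, and $\ell\in\mathbb{N}$ with $c\in\bigcap_{i\in N'}A_i$, $\bigl|\bigcup_{i\in N'}A_i\cap W\bigr|<\ell$, and $|N'|\ge\ell\frac{n}{k}$. The first condition forces $N'\subseteq N_c$. Moreover, with $N'$ held fixed, a suitable integer $\ell$ exists if and only if it can be taken to be the (integer) value $\bigl|\bigcup_{i\in N'}A_i\cap W\bigr|+1$ --- this is the least restrictive admissible choice --- and then the constraint $|N'|\ge\ell\frac{n}{k}$ becomes $\tfrac{k}{n}|N'|-\bigl|\bigcup_{i\in N'}(A_i\cap W)\bigr|\ge 1$. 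Hence $W$ fails \pjrp{} if and only if there is $c\in C\setminus W$ and $N'\subseteq N_c$ with
\[
\tfrac{k}{n}\,|N'|\;-\;\Bigl|\textstyle\bigcup_{i\in N'}(A_i\cap W)\Bigr|\;\ge\;1 .
\]

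Next I would fix $c\in C\setminus W$ and define $f_c\colon 2^{N_c}\to\mathbb{Q}$ by $f_c(N')=\bigl|\bigcup_{i\in N'}(A_i\cap W)\bigr|-\tfrac{k}{n}|N'|$. The first summand is a coverage function over the ground set $W$ (voter $i$ ``covers'' the set $A_i\cap W$) and is therefore monotone submodular; the second summand is modular. Since a sum of a submodular and a modular function is submodular, $f_c$ is submodular, it is evaluable in polynomial time, and (after scaling by $n$) integer-valued. Running any polynomial-time algorithm for submodular function minimization on $f_c$ yields $\min_{N'\subseteq N_c}f_c(N')$, and by the previous paragraph this minimum is at most $-1$ if and only if $c$ witnesses a violation of \pjrp{}. (Note that $N'=\emptyset$ gives $f_c(\emptyset)=0>-1$, consistent with the fact that the empty group never witnesses a violation.) Doing this for every $c\in C\setminus W$ and declaring ``\pjrp{} satisfied'' iff no candidate witnesses a violation gives a polynomial-time decision procedure.

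The crux of the argument --- and the reason the problem is tractable even though the superficially similar EJR and PJR verification problems are coNP-complete --- is that \pjrp{} imposes \emph{no} cohesiveness requirement on $N'$: the voters of $N'$ only have to agree on the single candidate $c$. Consequently the objective $f_c$ contains only the linear term $\tfrac{k}{n}|N'|$ in place of a ``large common intersection'' constraint, so $f_c$ is a genuine submodular function and there is no densest-subgraph-type obstruction to minimizing it. I expect the only mild technical point to be the bookkeeping around the integrality of $\ell$ --- verifying that setting $\ell=\bigl|\bigcup_{i\in N'}A_i\cap W\bigr|+1$ loses no generality and that the remaining condition is exactly the displayed inequality --- which is routine.
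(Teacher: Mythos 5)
Your proposal is correct and follows essentially the same route as the paper: for each $c \in C\setminus W$, minimize the submodular function $f(N') = \bigl|\bigcup_{i\in N'} A_i\cap W\bigr| - \tfrac{k}{n}|N'|$ over $N'\subseteq N_c$ and report a violation iff the minimum is at most $-1$, exactly as in the paper's reduction (which also credits \citet{BGP+19a}). The only cosmetic difference is that the paper verifies submodularity by a direct diminishing-returns computation, whereas you cite the coverage-plus-modular decomposition; both are fine.
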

\begin{proof}
    Consider an instance $(N,C,P,k)$ with approval preferences, a feasible committee $W$, and a candidate $c \notin W$. Define the function $f\colon 2^{N_c} \to \mathbb{Q}$ as 
    \[f(N') = \left\lvert \bigcup_{i \in N'} A_i \cap W \right\rvert - \frac{\lvert N'\rvert k}{n}.\] 
    To show that this function is submodular, we need to show that for any $N'' \subset N' \subset N_c$ and $i \in N_c \setminus N'$ it holds that $f( N'' \cup \{i\}) - f(N'') \ge f( N' \cup \{i\}) - f(N')$. To see that this holds, we observe 
    \begin{align*}
        f( N'' \cup \{i\}) - f(N'')  &= \left\lvert \bigcup_{j \in N''} (A_j \cap W) \cup (A_i \cap W)\right\rvert - \left\lvert \bigcup_{j \in N''} A_j \cap W\right\rvert - \frac{k}{n} \\ 
        &= \left\lvert (A_i \cap W) \setminus \bigcup_{j \in N''} A_j \cap W \right\rvert - \frac{k}{n}
        \ge \left\lvert (A_i \cap W) \setminus \bigcup_{j \in N'} A_j \cap W \right\rvert - \frac{k}{n}
        \\ &= \left\lvert \bigcup_{j \in N'} (A_j \cap W) \cup (A_i \cap W)\right\rvert - \left\lvert \bigcup_{j \in N'} A_j \cap W\right\rvert - \frac{k}{n} \\
        &= f( N' \cup \{i\}) - f(N').
    \end{align*}
    \noindent Further, observe that a subset $N' \subseteq N_c$ witnesses a violation of \pjrp if and only if $f(N') \le -1$.
\end{proof}

Finally, we note the following relationship between priceability and \pjrp.\footnote{We note that any priceable committee of size exactly $k$ can always be made priceable with a budget $B > k$. Hence, \Cref{prop:priceable-rankpjrp} generalizes Proposition~1 by \citet{PeSk20a}.} 
\begin{proposition}\label{prop:priceable-rankpjrp}
 For instances with approval preferences, any priceable committee with a price system $\{B,p\}$ such that $B > k$ satisfies \pjrp.
\end{proposition}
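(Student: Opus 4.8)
The plan is to argue by contradiction using the reformulation of IPSC from \Cref{obs:ipsc_abc}. Suppose $W$ is priceable with a price system $\{B,\{p_i\}_{i\in N}\}$ where $B>k$, but $W$ violates \pjrp. Then there are a candidate $c^*\in C\setminus W$, a group $N'\subseteq N$, and $\ell\in\mathbb{N}$ with $|N'|\ge \ell\frac{n}{k}$, with $c^*\in\bigcap_{i\in N'}A_i$, and with $\bigl|\bigcup_{i\in N'}A_i\cap W\bigr|\le \ell-1$. Write $W'=\bigcup_{i\in N'}A_i\cap W$, so $|W'|\le \ell-1$.

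First I would observe that, by (C1) and (C4) together with nonnegativity of the $p_i$, every voter spends money only on candidates in $A_i\cap W$; for $i\in N'$ this set is contained in $W'$. Summing over $N'$ and swapping the order of summation, the total amount spent by the coalition is $\sum_{i\in N'}\sum_{c\in C}p_i(c)=\sum_{c\in W'}\sum_{i\in N'}p_i(c)\le \sum_{c\in W'}\sum_{i\in N}p_i(c)=|W'|\le \ell-1$, where the inequality uses nonnegativity of prices and the last equality uses (C3). Next I would lower-bound the coalition's total budget: since $|N'|\ge \ell\frac{n}{k}$ and $B>k$, we get $|N'|\cdot\frac{B}{n}\ge \ell\cdot\frac{B}{k}>\ell$. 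Hence the coalition's total unspent budget satisfies $\sum_{i\in N'}\bigl(\frac{B}{n}-\sum_{c\in C}p_i(c)\bigr)\ge |N'|\frac{B}{n}-(\ell-1)>\ell-(\ell-1)=1$.

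To finish, note that each summand $\frac{B}{n}-\sum_{c\in C}p_i(c)$ is nonnegative by (C2), and that $N'\subseteq N_{c^*}$ because $c^*\in A_i$ for every $i\in N'$. Therefore $\sum_{i\in N\colon c^*\in A_i}\bigl(\frac{B}{n}-\sum_{c\in C}p_i(c)\bigr)\ge \sum_{i\in N'}\bigl(\frac{B}{n}-\sum_{c\in C}p_i(c)\bigr)>1$, which contradicts condition (C5) applied to the candidate $c^*\in C\setminus W$. I do not expect a genuine obstacle: the argument is a short budget-counting computation. The only point requiring care is that the strict inequality $B>k$ (rather than $B\ge k$) is precisely what pushes the coalition's leftover budget strictly above $1$; this is also why \Cref{prop:priceable-rankpjrp}, via the footnote's observation that any size-$k$ priceable committee can be re-priced with $B>k$, subsumes the result of \citet{PeSk20a} that priceable committees satisfy PJR.
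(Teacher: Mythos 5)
Your proof is correct. The paper does not prove this proposition directly but obtains it as a corollary of the more general \Cref{prop:rank-price} on rank-priceability; your budget-counting contradiction of \textbf{C5} is precisely the approval-preference specialization of that argument (where the ``money spent on candidates ranked worse than $r$'' term vanishes by \textbf{C1}), so the approach is essentially the same.
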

\Cref{prop:priceable-rankpjrp} is a corollary of \Cref{prop:rank-price}, which we prove in \Cref{sec:ranked}.  
\Cref{prop:priceable-rankpjrp} implies that MES and EAR, as well as the following approval-based voting rules, satisfy PJR+: Phragmén's sequential rule \citep{BFJL16a}, the maximin support method \citep{SFFB22a}, and Phragmms \citep{CeSt21a}. Further, PAV also satisfies \pjrp \citep{AzLe21a}.

\begin{remark} \label{rem:pjrp-ejr}
Like priceability, \pjrp is incomparable to EJR: Phragmén's sequential rule satisfies \pjrp, but not EJR, and there exist committees which satisfy EJR, but not \pjrp \cite{AzLe21a}.   
\end{remark}

\subsection{Extended Justified Representation (EJR) Without Cohesiveness} 
 \label{sec:ejr+}

PJR is a rather weak axiom, and strengthening it to \pjrp does not make much of a difference, as we will see in \Cref{sec:exp-app}. Nevertheless, we can take the relationship between PJR and \pjrp 
as a blueprint for defining a robust strengthening of EJR. We call the resulting axiom \textit{\ejrp}. 

\begin{definition}[\ejrp] \label{def:ejrplus}
Given an instance with approval preferences, a feasible committee $W$ satisfies \emph{\ejrp} if there is no candidate $c \in C \setminus W$, group of voters $N'\subseteq N$, and $\ell \in \mathbb{N}$ with $\lvert N'\rvert \ge \ell \frac{n}{k}$ such that
\[
c \in \bigcap_{i \in N'} A_i \quad \text{and}\quad \lvert A_i \cap W \rvert  < \ell \,\text{ for all } i\in N'.
\]
\end{definition}

In words, for a violation of \ejrp it is sufficient to find a candidate who is approved by at least $\frac{\ell n}{k}$ voters, each of whom approves strictly less than $\ell$ candidates in the committee.\footnote{
In contrast to EJR violations, a violation of \ejrp does not necessarily correspond to a core deviation (see also \Cref{rem:core}).}
This candidate serves as a witness of the fact that the group $N'$ is not proportionally represented in the committee. 
\begin{figure}
    \centering
        \scalebox{0.8}{
    \begin{tikzpicture}
    [yscale=0.6,xscale=0.8,
    voter/.style={anchor=south}]
    
        \foreach \i in {1,...,8}
    		\node[voter] at (\i-0.5, -1) {$\i$};
        
        \draw[fill=orange!\clrstr] (0, 0) rectangle (4, 1);
        
        \draw[fill=violet!\clrstr] (0, 1) rectangle (4, 2);
        \draw[fill=teal!\clrstr] (0, 2) rectangle (7, 3);
        \draw[fill=magenta!\clrstr] (1, 3) rectangle (8, 4);
        \draw[fill=red!\clrstr] (3, 4) rectangle (8, 5);
        \draw[fill=blue!\clrstr] (5, 0) rectangle (8, 1);
        \draw[fill=purple!\clrstr] (6, 1) rectangle (8, 2);
        \node at ( 2, 0.5) {$c_{1}$};
        \node at ( 2, 1.5) {$c_{2}$};
        \node at ( 4, 2.5) {$c_{3}$};
        \node at ( 5, 3.5) {$c_{4}$};
        \node at ( 5.5, 4.5) {$c_{5}$};
        \node at ( 6.5, 0.5) {$c_{6}$};
        \node at ( 7, 1.5) {$c_{7}$};
    \end{tikzpicture}}
    \hspace{2em}
    \vrule
    \hspace{2em}
    \scalebox{0.8}{
    \begin{tikzpicture}
    [yscale=0.6,xscale=0.8, 
    voter/.style={anchor=south}]
    
         \foreach \i in {1,...,8}
    		\node[voter] at (\i-0.5, -1) {$\i$};
        
        \draw[fill=orange!\clrstr] (0, 0) rectangle (3, 1);
        \draw[fill=violet!\clrstr] (1, 1) rectangle (3, 2);
        \draw[fill=teal!\clrstr] (2, 2) rectangle (6, 3);
        \draw[fill=magenta!\clrstr] (2, 3) rectangle (6, 4);
        \draw[fill=red!\clrstr] (3, 0) rectangle (7, 1);
        \draw[fill=blue!\clrstr] (6, 1) rectangle (8, 2);
        \draw[fill=purple!\clrstr] (7, 0) rectangle (8, 1);
        \node at ( 1.5, 0.5) {$c_{1}$};
        \node at ( 2, 1.5) {$c_{2}$};
        \node at ( 4, 2.5) {$c_{3}$};
        \node at ( 4, 3.5) {$c_{4}$};
        \node at ( 5, 0.5) {$c_{5}$};
        \node at ( 7.5, 0.5) {$c_{7}$};
        \node at ( 7, 1.5) {$c_{6}$};
        
    \end{tikzpicture}}
    \caption{
    Approval profiles discussed in \Cref{ex:ejrp,ex:ejrp2}. Both instances have $8$ voters and candidates $c_1, \dots, c_7$. Voters correspond to integers and approve all candidates placed above them.
    }
    \label{fig:sec4instances}
\end{figure}
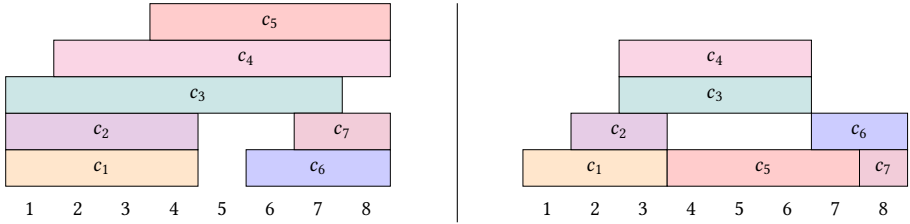

\begin{example}
To see how EJR+ is different from EJR, consider the two approval profiles illustrated in \Cref{fig:sec4instances}. Both profiles have $n=8$ voters. We consider $k = 4$, so that $\frac{n}k=2$. 

For the instance on the left, consider the committee $W = \{c_1, c_3, c_5, c_7\}$. This committee satisfies EJR, since all voters are covered, and every $2$-cohesive group contains at least one voter approving $2$ candidates in $W$. However, $W$ does not satisfy \ejrp. For this, consider the candidate $c_4$ and the group $\{2,3,5,6,7,8\} \subseteq N_{c_4}$. 
This group has $6=3\frac{n}k$ members and thus deserves to be represented by $3$ seats; however, no group member approves $3$ candidates in $W$. Hence, $c_4$ has a justified claim to be included. 

For the instance on the right, consider the committee $W' = \{c_1, c_2, c_3, c_7\}$. This committee satisfies EJR, since the only $2$-cohesive group $\{3,4,5,6\}$ is covered (voter $3$ even approves $3$ candidates in $W'$) and since the group $N' = \{4,5,6,7\}$ narrowly misses out on qualifying as a $2$-cohesive group. The group~$N'$ together with candidate $c_5$, however, witness a violation of \ejrp, since they together deserve two candidates, but each group member approves at most one candidate in $W'$.
\label{ex:ejrp}
\end{example}

By definition, \ejrp is a stronger requirement than EJR and \pjrp. 

\begin{proposition}
A committee $W$ satisfying \ejrp satisfies EJR and \pjrp as well.
\end{proposition}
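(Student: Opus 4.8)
The plan is to prove each of the two implications via its contrapositive: if $W$ violates EJR (respectively, \pjrp), then $W$ violates \ejrp, which I would show by exhibiting the triple $(c,N',\ell)$ required by \Cref{def:ejrplus}.

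For the EJR part, I would start from a violation of EJR, i.e., an $\ell$-cohesive group $N'$ such that $\lvert A_i \cap W\rvert < \ell$ for every $i \in N'$. Cohesiveness immediately gives both $\lvert N'\rvert \ge \ell\frac{n}{k}$ and $\lvert \bigcap_{i \in N'} A_i\rvert \ge \ell$, so the only thing left is to locate a witness candidate $c \in \bigcap_{i \in N'} A_i$ that lies \emph{outside} $W$. Fixing any $i \in N'$, we have $\lvert \bigcap_{j \in N'} A_j \cap W\rvert \le \lvert A_i \cap W\rvert < \ell \le \lvert \bigcap_{j \in N'} A_j\rvert$, so $\bigcap_{j \in N'} A_j \not\subseteq W$; any candidate $c$ in this set difference, together with $N'$ and $\ell$, witnesses a violation of \ejrp.

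For the \pjrp part, I would invoke the reformulation in \Cref{obs:ipsc_abc}: a violation of \pjrp (i.e., IPSC) yields a candidate $c \in C\setminus W$, a group $N'$, and $\ell$ with $\lvert N'\rvert \ge \ell\frac{n}{k}$, $c \in \bigcap_{i\in N'} A_i$, and $\lvert \bigcup_{i\in N'} A_i \cap W\rvert < \ell$. Since $A_i \cap W \subseteq \bigcup_{j\in N'} A_j \cap W$ for every $i\in N'$, this forces $\lvert A_i \cap W\rvert < \ell$ for all $i\in N'$, so the very same triple $(c,N',\ell)$ witnesses a violation of \ejrp.

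The only mildly delicate step is producing the witness candidate outside $W$ in the EJR case; this is precisely where the cohesiveness bound $\lvert \bigcap_{i\in N'} A_i\rvert \ge \ell$ is needed, played off against the upper bound $\lvert A_i \cap W\rvert < \ell$ for an arbitrary group member. The \pjrp case and the rest follow directly from unpacking the definitions.
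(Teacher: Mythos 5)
Your proposal is correct and is essentially the argument the paper has in mind; the paper omits the proof entirely, asserting the implications hold ``by definition.'' The one step that is not purely definitional---producing a witness candidate $c \in \bigcap_{i \in N'} A_i \setminus W$ from an EJR violation by playing the cohesiveness bound $\lvert \bigcap_{i \in N'} A_i \rvert \ge \ell$ against $\lvert A_i \cap W \rvert < \ell$---is exactly the right observation and is handled correctly.
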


Moreover, we can utilize existing proofs to show that both MES and PAV satisfy \ejrp.

\begin{proposition}
 Given an instance with approval preferences, PAV and MES satisfy \ejrp.
\end{proposition}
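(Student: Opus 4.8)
The plan is to reuse the existing proofs that PAV \citep{ABC+16a} and MES \citep{PeSk20a} satisfy EJR, exploiting the observation stated just above: given a voter group $N'$ with $|N'|\ge\ell\frac nk$ in which every member approves fewer than $\ell$ committee members, those proofs invoke cohesiveness \emph{only} to produce an unselected candidate inside $\bigcap_{i\in N'}A_i$, and then argue entirely from that candidate. In a violation of \ejrp this candidate $c\in C\setminus W$ is supplied to us directly, so the downstream arguments transfer with no change. Throughout, suppose towards a contradiction that a committee $W$ output by the rule violates \ejrp, witnessed by $c\in C\setminus W$, a group $N'\subseteq N_c$, and $\ell\in\mathbb N$ with $|N'|\ge\ell\frac nk$ and $|A_i\cap W|\le\ell-1$ for every $i\in N'$ (so $\ell\ge1$ and $N'\neq\emptyset$, hence $|N'|\ge1$).

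\textbf{PAV.} Here $|W|=k$. For a committee $X$ and $c'\notin X$ write $\mathrm{mc}_X(c')=\sum_{i\in N_{c'}}\frac1{|A_i\cap X|+1}$ for the marginal PAV-score of adding $c'$ to $X$, and for $c'\in W$ write $\mathrm{ml}(c')=\sum_{i\in N_{c'}}\frac1{|A_i\cap W|}$ for the PAV-score lost by deleting $c'$ from $W$. From $|A_i\cap W|\le\ell-1$ for $i\in N'\subseteq N_c$ we get $\mathrm{mc}_W(c)\ge\frac{|N'|}\ell\ge\frac nk$. Double-counting over $W$ yields $\sum_{c'\in W}\mathrm{ml}(c')=|\{i:A_i\cap W\neq\emptyset\}|\le n$ and $\sum_{c'\in W}\mathrm{mc}_{W\setminus\{c'\}}(c)\ge k\cdot\mathrm{mc}_W(c)\ge n$, hence $\sum_{c'\in W}\bigl(\mathrm{mc}_{W\setminus\{c'\}}(c)-\mathrm{ml}(c')\bigr)\ge0$. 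A short case analysis makes this strict: if some voter approving $c$ already has a representative in $W$, the first sum exceeds $n$; otherwise $N'\subseteq\{i:A_i\cap W=\emptyset\}$, so $|\{i:A_i\cap W\neq\emptyset\}|\le n-|N'|<n$ and the second sum is below $n$. Thus $\mathrm{mc}_{W\setminus\{c'\}}(c)>\mathrm{ml}(c')$ for some $c'\in W$, and since $\mathrm{PAV}\bigl((W\setminus\{c'\})\cup\{c\}\bigr)-\mathrm{PAV}(W)=\mathrm{mc}_{W\setminus\{c'\}}(c)-\mathrm{ml}(c')$, the committee $(W\setminus\{c'\})\cup\{c\}$ has strictly larger PAV-score, contradicting optimality of $W$. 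Nothing here depends on how PAV breaks ties, so every PAV-winning committee satisfies \ejrp.

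\textbf{MES.} In the approval setting, MES halts only once no unselected candidate is affordable, \ie $\sum_{i\in N_{c'}}b_i^\ast<1$ for every $c'\in C\setminus W$, where $b_i^\ast$ is $i$'s leftover budget. As $N'\subseteq N_c$, this gives $\sum_{i\in N'}b_i^\ast<1$, whereas $N'$ started with combined budget $|N'|\cdot\frac kn\ge\ell$, so its members jointly spent more than $\ell-1$, each on at most $\ell-1$ committee members (those in $A_i\cap W$). Feeding $c$, $N'$ and $\ell$ into the proof that MES satisfies EJR \citep{PeSk20a}\,---\,which exploits that MES's purchase prices are non-decreasing and that each purchase charges its supporters equally up to their remaining budget\,---\,shows this configuration is impossible: MES would have been able to afford $c$ before halting. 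That proof never uses that $\bigcap_{i\in N'}A_i$ has size at least $\ell$, only that $c$ lies in it and is unselected, together with $|N'|\ge\ell\frac nk$; it is also insensitive to the tie-breaking in MES. Hence every committee output by MES satisfies \ejrp.

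\textbf{Main obstacle.} The PAV half is self-contained once the swap inequality is checked; the only delicate point there is extracting strictness, which the two-case split handles. The real work is the MES half: the Rule~X / MES analysis for EJR is intricate, and one must verify that, once transcribed, every appeal to cohesiveness has genuinely been replaced by the weaker data ``$c\in\bigcap_{i\in N'}A_i\setminus W$ and $|N'|\ge\ell\frac nk$'', and that MES's boundary behaviour (committees of size below $k$, tie-breaking) does not interfere. Isolating this feature of the original proof is the crux; given it, the transcription is mechanical.
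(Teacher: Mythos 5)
Your proposal is correct and takes essentially the same route as the paper: the paper's proof is precisely the meta-observation that the existing EJR proofs for PAV \citep{ABC+16a} and MES \citep{PeSk20a} only ever use the data supplied by an \ejrp violation, namely an unselected candidate $c\in\bigcap_{i\in N'}A_i$ and a group $N'$ with $\lvert N'\rvert\ge\ell\frac{n}{k}$ each of whose members approves fewer than $\ell$ committee members. Your PAV half, which the paper leaves implicit by citation, is a correct self-contained transcription of that swap argument (including the strictness case split), and your MES half matches the paper's deferral to the original Rule~X analysis.
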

\begin{proof}
    Consider the proof that PAV satisfies EJR \citep{ABC+16a} and the proof that MES satisfies EJR \citep{PeSk20a}. Both proofs assume for contradiction that EJR is violated and argue that an EJR violation leads to the existence of a candidate $c$ who is approved by $\ell \frac{n}{k}$ voters, each with less than $\ell$ approved candidates in the committee, and then use this candidate to show a contradiction. This also follows from an \ejrp violation and, thus, both rules also satisfy \ejrp. 
\end{proof}

An important advantage of \ejrp over EJR is that the former can be verified in polynomial time, whereas the latter is coNP-complete to verify \citep{ABC+16a}. 

\begin{proposition}
 Given an instance with approval preferences and a committee $W$, it can be verified in polynomial time whether $W$ satisfies \ejrp.
\end{proposition}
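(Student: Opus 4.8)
The plan is to give a direct algorithm that eliminates the existential quantifier over the voter group $N'$ in \Cref{def:ejrplus}. The key observation is that, once the candidate $c \in C \setminus W$ and the integer $\ell$ are fixed, the two conditions ``$c \in \bigcap_{i \in N'} A_i$'' and ``$\lvert A_i \cap W\rvert < \ell$ for all $i \in N'$'' are \emph{separable over voters}: each merely restricts which individual voters may be placed in $N'$, and neither imposes any joint constraint on $N'$. Consequently, the largest group that could witness a violation for this $(c,\ell)$ is simply $N'_{c,\ell} := \{i \in N_c \colon \lvert A_i \cap W\rvert < \ell\}$, and it suffices to test this one group against the threshold $\ell \tfrac nk$.

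Concretely, first I would precompute $\lvert A_i \cap W\rvert$ for every voter $i \in N$, which takes $O(nm)$ time. Then, for each candidate $c \in C \setminus W$ and each integer $\ell \in \{1,\dots,k\}$, I would form $N'_{c,\ell}$ and check whether $\lvert N'_{c,\ell}\rvert \ge \ell \tfrac nk$; the algorithm reports a violation if this holds for some pair $(c,\ell)$, and reports that $W$ satisfies \ejrp otherwise. There are at most $mk$ pairs and each is processed in $O(n)$ time after the precomputation, so the running time is polynomial. Restricting $\ell$ to $\{1,\dots,k\}$ loses nothing, since $\ell > k$ would require $\lvert N'\rvert \ge \ell\tfrac nk > n$, which is impossible.

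For correctness, if the algorithm reports a violation via $(c,\ell)$, then $N' := N'_{c,\ell}$ has $\lvert N'\rvert \ge \ell\tfrac nk$ and, by construction, $c \in A_i$ and $\lvert A_i \cap W\rvert < \ell$ for every $i \in N'$, so $(c,N',\ell)$ is a genuine \ejrp violation. Conversely, if $W$ violates \ejrp, witnessed by some $(c,N',\ell)$ with $\ell \le k$, then $c \in \bigcap_{i \in N'} A_i$ gives $N' \subseteq N_c$, and $\lvert A_i \cap W\rvert < \ell$ for all $i \in N'$ gives $N' \subseteq N'_{c,\ell}$; hence $\lvert N'_{c,\ell}\rvert \ge \lvert N'\rvert \ge \ell\tfrac nk$, and the algorithm detects the violation when it processes $(c,\ell)$.

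The proof has essentially no hard step: the only thing to notice is that maximizing $\lvert N'\rvert$ subject to the per-voter constraints is trivial precisely because those constraints do not couple the voters. This is exactly what distinguishes \ejrp from EJR, whose verification is coNP-complete \citep{ABC+16a}: there the group is additionally required to be $\ell$-cohesive, i.e., to share $\ell$ commonly approved candidates, which \emph{does} couple the voters and makes the optimization intractable.
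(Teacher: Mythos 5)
Your proposal is correct and follows exactly the same approach as the paper's proof: iterate over all pairs $(c,\ell)$ with $c \notin W$ and $\ell \in [k]$, and check whether at least $\ell\frac{n}{k}$ approvers of $c$ each approve fewer than $\ell$ candidates in $W$. The paper states this in one sentence; your write-up merely spells out the (correct) correctness argument and running-time analysis in more detail.
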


\begin{proof}
It is sufficient to iterate over all $c \notin W$ and $\ell \in [k]$ and to check if there are at least $\ell \frac{n}{k}$ approvers of $c$ who approve strictly less than $\ell$ candidates in $W$. 
\end{proof}

The requirements on a \ejrp committee can easily be encoded as constraints of an Integer Linear Program (ILP), enabling computational experiments that optimize a linear function (e.g., the total approval score) over the space of all \ejrp committees. Similar experiments are not feasible for EJR and have previously only been possible for very weak axioms \citep{BFNK19a}. 

\begin{remark} \label{rem:core}
\ejrp is incomparable to alternative strengthenings of EJR known as \emph{fully justified representation (FJR)} \citep{PPS21a} and \emph{core stability} \citep{ABC+16a}. 
To see that \ejrp does not imply FJR or core stability, we note that MES satisfies \ejrp, but not FJR \citep{PPS21a}. 
To see that core stability does not imply \ejrp, consider an instance with $2$ voters and $3$ candidates $c_1, c_2, c_3$ such that the first voter approves $c_1$ and $c_2$ and the second voter approves $c_1$ and $c_3$. The committee $\{c_2, c_3\}$ is core-stable, but does not satisfy \ejrp. 
\end{remark}

\subsection{A Simple Greedy Procedure for Finding EJR+ Committees}
\label{sec:greedy}

\newcommand{\gjcr}{GJCR\xspace}

Yet another advantage of \ejrp is that its definition gives rise to a very simple greedy procedure that produces committees satisfying \ejrp (and thus EJR) in polynomial time. The \textit{Greedy Justified Candidate Rule (GJCR)}, formalized as \Cref{alg:gjcr}, greedily selects candidates witnessing a violation of \ejrp constraints, ordered by decreasing size of the corresponding voter group. 

\begin{example}\label{ex:ejrp2}
Consider again the two approval profiles illustrated in \Cref{fig:sec4instances} and let $k=4$. In the instance on the left, the rule would select both $c_3$ and $c_4$ for $\ell = 3$ and then terminate. In the instance on the right, the rule would select two candidates from $\{c_3, c_4, c_5\}$ for $\ell = 2$. If it selects $c_3$ and $c_4$, it will then take $c_1$ and $c_6$ for $\ell = 1$. Otherwise, it will just take $c_1$ for $\ell = 1$ and then terminate.
\end{example}

\begin{algorithm}[t]
\caption{Greedy Justified Candidate Rule}
\label{alg:gjcr}

 $W \gets \emptyset$\;
\For{$\ell$ in $k, \dots, 1$}{
\While{there is $c\notin W$: $\lvert \{i \in N_c \colon \lvert A_i \cap W \rvert < \ell\} \rvert \ge \frac{\ell n}{k}$ }{
 $W \gets W \cup \{c\}$\;
}
}
 return $W$\;
\end{algorithm}

\begin{proposition} \label{prop:GJCR}
    The Greedy Justified Candidate Rule selects a committee of size at most $k$ and satisfies \ejrp.
\end{proposition}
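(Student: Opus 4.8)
The plan is to establish two things separately: (1) the rule outputs a committee of size at most $k$, and (2) the output satisfies \ejrp. For (2), note that \ejrp is a property that is easy to check against the \emph{final} committee: a violation is a triple $(c, N', \ell)$ with $c \notin W$, $|N'| \ge \ell n / k$, $c \in \bigcap_{i \in N'} A_i$, and $|A_i \cap W| < \ell$ for all $i \in N'$. So I would argue by contradiction: suppose the returned committee $W$ admits such a violating triple with parameter $\ell$. Consider the iteration of the outer loop for that value of $\ell$. By the time that iteration \emph{ended}, the while-loop condition failed, meaning there was no $c' \notin W$ with at least $\ell n / k$ approvers each having fewer than $\ell$ selected candidates. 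But $|A_i \cap W|$ only grows as the algorithm proceeds (candidates are only ever added), so for the \emph{final} $W$ and any voter $i$, $|A_i \cap W^{\text{final}}| \ge |A_i \cap W^{\text{after }\ell\text{-iteration}}|$. Hence if voter $i$ has $|A_i \cap W^{\text{final}}| < \ell$, then certainly $|A_i \cap W^{\text{after }\ell\text{-iteration}}| < \ell$ as well; and $c$ was not in $W$ at the end of the $\ell$-iteration either (it is not in the final $W$). This means $c$ witnessed a violation at the end of the $\ell$-iteration, contradicting the fact that the while-loop terminated. This shows \ejrp holds.

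For the size bound (1), the key is a charging/budget argument. Every time a candidate is added during the iteration with parameter $\ell$, it is ``paid for'' by a group $N_c^\ell := \{i \in N_c : |A_i \cap W| < \ell\}$ of size at least $\ell n / k$, each of whose members has strictly fewer than $\ell$ selected candidates \emph{at the moment of selection}. The natural potential is $\Phi = \sum_{i \in N} \min(|A_i \cap W|, \text{something})$, but a cleaner route is: assign to each voter $i$ a budget of $k/n$, and when a candidate is selected during the $\ell$-iteration, charge $1/|N_c^\ell|$ to each voter in $N_c^\ell$ — wait, that only charges a total of $1$ per candidate while we want to bound the count. Instead I would charge each voter $i \in N_c^\ell$ an amount $\frac{k}{\ell n}$, so that the total charge for this candidate is $|N_c^\ell| \cdot \frac{k}{\ell n} \ge 1$; thus the number of selected candidates is at most the total charge, which is at most $\sum_{i \in N} (\text{total charged to } i)$. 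It then remains to show each voter is charged at most $k/n$ in total. A voter $i$ is charged $\frac{k}{\ell n}$ for a candidate selected during the $\ell$-iteration only when $|A_i \cap W| < \ell$ at that time; moreover each such charge increases $|A_i \cap W|$ by exactly $1$ (since $c \in A_i$ for $i \in N_c^\ell$). So across the whole $\ell$-iteration, voter $i$ can be charged at most (roughly) $\ell$ times at rate $\frac{k}{\ell n}$ — but I need to be careful about how the count of $i$'s selected candidates evolves across iterations, since it is monotone but the threshold $\ell$ decreases.

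The clean way to handle this is to observe that, letting $w_i$ denote the final value $|A_i \cap W|$, voter $i$ is charged only for candidates $c$ with $c \in A_i$ and, at selection time during an $\ell$-iteration, $|A_i \cap W| < \ell$. Since the outer loop runs $\ell = k, k-1, \dots, 1$ in decreasing order and $|A_i \cap W|$ is nondecreasing over time, the set of candidates for which $i$ is charged is precisely a subset of $A_i \cap W$, and the charge rate when $i$'s count of selected approved candidates is $j$ (i.e.\ $i$ is about to acquire its $(j{+}1)$-st, $j = 0, \dots, w_i - 1$) occurs during an $\ell$-iteration with $\ell > j$, hence $\ell \ge j+1$, so the rate $\frac{k}{\ell n} \le \frac{k}{(j+1) n}$. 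Therefore voter $i$'s total charge is at most $\sum_{j=0}^{w_i - 1} \frac{k}{(j+1)n} = \frac{k}{n} \sum_{j=1}^{w_i} \frac{1}{j} \le \frac{k}{n} \cdot \text{(a harmonic number)}$, which unfortunately exceeds $k/n$. So a pure per-candidate rate of $k/(\ell n)$ is too generous; the fix is to charge only the voters in $N_c^\ell$ but at the uniform rate $\frac{k}{\ell n}$ while additionally noting that for a fixed $\ell$, voter $i$ with initial count $j_0 < \ell$ at the start of the $\ell$-iteration can be charged at most $\ell - j_0$ times during that iteration, and then sum a telescoping bound over $\ell$. I expect \textbf{this size bound to be the main obstacle}: getting the charging scheme to telescope correctly to exactly $k/n$ per voter (equivalently, total $k$) requires choosing the charge rate and the accounting of how each voter's selected-approved count climbs through the decreasing thresholds. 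A workable scheme: when a candidate is added during the $\ell$-iteration, charge each $i \in N_c^\ell$ exactly $\frac{k}{\ell n} \cdot \frac{\ell}{\ell} $ — concretely, I would instead bound the number of selections in the $\ell$-iteration by noting each such selection raises $\sum_i \max(\ell - |A_i \cap W|, 0)$ by at most $-\, |N_c^\ell| \le -\ell n/k$ while... Rather than belabor this here, the robust argument is the potential $\Psi = \sum_{i} \max\!\big(0,\, \ell - |A_i\cap W|\big)$ analyzed within each fixed-$\ell$ phase together with monotonicity of $|A_i\cap W|$ across phases; combining the per-phase decrements yields $|W| \le k$. The termination-based \ejrp argument in the first paragraph is routine by comparison.
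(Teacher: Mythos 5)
Your argument for \ejrp satisfaction is correct and matches the paper's: the paper states it in one sentence, but the monotonicity-of-$W$ reasoning you spell out is exactly what is needed. The genuine gap is in the size bound, and you have correctly diagnosed where it is: charging each voter in the responsible group at rate $\frac{k}{\ell n}$ is the right scheme, but your accounting of each voter's total charge degrades to a harmonic sum and you never repair it. The potential $\Psi=\sum_i \max(0,\ell-|A_i\cap W|)$ you fall back on does not close the gap either: within the $\ell$-phase each selection decreases $\Psi$ by at least $\ell n/k$, but $\Psi$ can be as large as $\ell n$ at the start of the phase, which only bounds the number of selections \emph{per phase} by $k$, i.e.\ $k^2$ overall.

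The missing step is to evaluate the budget constraint at the moment of each payment, rather than bounding each payment by the rate valid at its own time. Fix a voter $i$ and a payment $i$ makes during the $\ell$-iteration for a candidate $c$ with responsible group $N'=\{j\in N_c : |A_j\cap W|<\ell\}$. Two facts combine. First, $i$ pays only for candidates it approves, and at this moment $|A_i\cap W|<\ell$, so $i$ has made at most $\ell-1$ prior payments. Second, every prior payment was made during an iteration with parameter $\ell'\ge\ell$ (the outer loop is decreasing) for a group of size at least $\ell' n/k\ge \ell n/k$, so every prior payment was at most $\frac{k}{\ell n}$ --- the \emph{current} rate, not the rate $\frac{k}{(j+1)n}$ tied to $i$'s approved-count $j$ at the time of that earlier payment. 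Hence after paying for $c$ the voter has spent at most $(\ell-1)\frac{k}{\ell n}+\frac{k}{\ell n}=\frac{k}{n}$, the budget of $\frac{k}{n}$ per voter is never exceeded, and at most $k$ candidates are bought. This is exactly the paper's argument (phrased there as constructing a price system with total budget $k$ and unit prices). Your version bounds each term of the sum individually by its own tightest rate, which is what produces the harmonic number; the correct bound is on the running total evaluated at each payment, using the current $\ell$ as a uniform bound on all earlier rates.
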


\begin{proof}
To see that the committee constructed by the rule satisfies \ejrp, note that any candidate witnessing a violation of \ejrp for a group of size $\frac{\ell n}{k}$ would be selected by the rule in iteration $\ell$. 

To show that the rule selects at most $k$ candidates, we show how to construct a price system for any committee in the output of the rule, in which the budget is at most $k$ and in which the price of a candidate is $1$. From the existence of this price system, it follows that at most $k$ candidates could have been selected. We start by assigning each voter a budget of $\frac{k}{n}$. To construct the payments, consider any step in which a candidate $c$ is selected. At this step, let $N'$ denote the set $\{i \in N_c \colon \lvert A_i \cap W \rvert < \ell\}$ of voters of size $|N'| \ge \frac{\ell n}{k}$ that is responsible for the addition of $c$. We divide the price of $c$ equally among $N'$ and reduce the budget of each voter by $\frac{1}{\lvert N'\rvert}$. To show that this does not overspend the budget, we notice that in previous iterations, only sets of size at least $\frac{\ell n}{k}$ were considered. Since voters in $N'$ approve at most $\ell-1$ previously selected candidates, each voter in $N'$ has paid at most $(\ell-1) \frac{k}{\ell n}$ before. After adding the payment for $c$, the total amount of budget that was spent by the voter is at most $(\ell-1) \frac{k}{\ell n} + \frac{k}{\ell n} = \frac{k}{n}$. Thus, the budget of voters is never overspent, and at most $k$ candidates can be bought. 
\end{proof}

Our greedy rule allows us to have a simple and fast baseline algorithm for \ejrp (and EJR). This algorithm is arguably easier to understand (and to implement) than any other known algorithm satisfying EJR. Such a simple rule is also useful in the analysis of \ejrp and its properties, and in the design of new algorithms: As we show in \Cref{sec:query}, the greedy rule can be used to find faster query procedures in the setting of \citet{HKP+23a}. 

Furthermore, the simplicity of the Greedy Justified Candidate Rule allows us to show that \ejrp (and thus EJR) satisfy the following property, which can be considered a weak form of committee monotonicity.\footnote{A rule satisfies \textit{committee monotonicity} (a.k.a. \textit{house monotonicity} or \textit{committee enlargement monotonicity}) if selected candidates for a given committee size $k$ are also selected for committee size $k+1$. It is an open question whether EJR is compatible with committee monotonicity \citep[page~105]{LaSk22a}.}  
It was previously unknown that this property holds for EJR. 

\begin{restatable}{proposition}{propcommon} \label{prop:com-mon}
    Consider an approval profile over a set $C$ of candidates. For all $k < m$, there exist a committee~$W$ of size $|W|=k$ and candidate $c \in C\setminus W$ such that 
    $W$ satisfies \ejrp w.r.t. committee size $k$ and $W \cup \{c\}$ satisfies \ejrp w.r.t. committee size $k+1$.
\end{restatable}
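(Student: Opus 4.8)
The plan is to run the Greedy Justified Candidate Rule (\gjcr, Algorithm~\ref{alg:gjcr}) for the \emph{larger} target $k+1$ and to extract $W$ and $c$ from its run. Two easy monotonicity facts underpin this. (a)~For a fixed committee-size parameter, \ejrp is upward closed in the committee: if $W\subseteq W''$ and $W$ satisfies \ejrp w.r.t.\ size $k$, then so does $W''$, because any triple witnessing a violation for $W''$ also witnesses one for $W$ (we have $|A_i\cap W|\le |A_i\cap W''|$, and $\hat c\notin W''$ implies $\hat c\notin W$). (b)~For a fixed committee, \ejrp w.r.t.\ size $k+1$ implies \ejrp w.r.t.\ size $k$, since a witness of a size-$k$ violation has $|N'|\ge \ell\frac nk\ge \ell\frac n{k+1}$ and hence also witnesses a size-$(k+1)$ violation. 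By the straightforward analogue of Proposition~\ref{prop:GJCR}, running \gjcr with target $k+1$ returns a committee $W'$ with $|W'|\le k+1$ satisfying \ejrp w.r.t.\ size $k+1$, hence (by (b)) also w.r.t.\ size $k$. If $|W'|\le k$, I pad $W'$ to a set $W$ of size exactly $k$ with arbitrary candidates (possible since $m>k$) and take any $c\in C\setminus W$; then $W\supseteq W'$ and $W\cup\{c\}\supseteq W'$, so by (a) they satisfy \ejrp w.r.t.\ sizes $k$ and $k+1$, respectively.

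So assume $|W'|=k+1$. Let $c^*$ be the candidate bought last by \gjcr, during the iteration of parameter $\ell^*$; since the outer loop runs over decreasing $\ell$, $\ell^*$ is the \emph{smallest} parameter at which any candidate was bought, so every candidate of $W'$ entered the committee while processing a parameter $\ge\ell^*$. Set $W:=W'\setminus\{c^*\}$; then $|W|=k$ and $W\cup\{c^*\}=W'$ satisfies \ejrp w.r.t.\ $k+1$, so it remains to show $W$ satisfies \ejrp w.r.t.\ $k$. Suppose not, and fix a violating triple $(c,N',\ell)$ with $N'=\{i\in N_c:|A_i\cap W|<\ell\}$ maximal; then $c\notin W$, $|N'|\ge \ell\frac nk$, and $\ell\le k$. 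If $\ell>\ell^*$, consider the committee $W'_{\ge\ell}$ held after \gjcr finishes parameter $\ell$: it comprises exactly the candidates bought at parameters $\ge\ell$, so $c^*\notin W'_{\ge\ell}\subseteq W$; the termination test of iteration $\ell$ says fewer than $\ell\frac n{k+1}$ voters approve $c$ and own fewer than $\ell$ candidates of $W'_{\ge\ell}$, but all of $N'$ does, contradicting $|N'|\ge \ell\frac nk\ge \ell\frac n{k+1}$.

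It remains to handle $\ell\le\ell^*$, and here I would use the price-system bookkeeping from the proof of Proposition~\ref{prop:GJCR}, but \emph{applied at the moment when the committee held by \gjcr equals $W$} (just before $c^*$ is bought). At that moment exactly $k$ candidates have been bought for a total cost of $k$ out of the global budget $k+1$, and (as in that proof) no voter has overspent, so the voters' leftover budgets are nonnegative and sum to exactly $1$. Now take any $i\in N'$: it approves at most $\ell-1$ candidates of $W$, hence has contributed to buying at most $\ell-1$ candidates, each bought while processing a parameter $\ell'\ge\ell^*\ge\ell$; since \gjcr buys a candidate only when at least $\ell'\frac n{k+1}$ contributors share the unit cost, $i$ paid at most $\frac{k+1}{\ell' n}\le\frac{k+1}{\ell n}$ for each. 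So $i$ has spent at most $(\ell-1)\frac{k+1}{\ell n}$ and retains leftover at least $\frac{k+1}{n}-(\ell-1)\frac{k+1}{\ell n}=\frac{k+1}{\ell n}$. Summing over $N'$, the total leftover is at least $|N'|\cdot\frac{k+1}{\ell n}\ge \frac{\ell n}{k}\cdot\frac{k+1}{\ell n}=\frac{k+1}{k}>1$, contradicting that the leftovers sum to $1$. This finishes the argument.

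I expect the $\ell\le\ell^*$ case to be the main obstacle; it rests on two ideas that are easy to miss. First, one must delete the \emph{last}-bought candidate, so that every surviving committee member carries a ``bought late, hence cheaply'' certificate --- this is exactly what validates the per-contributor bound $\frac{k+1}{\ell n}$ and forces the case split on $\ell$ versus $\ell^*$. Second, the budget accounting has to be performed at the intermediate stage where only $k$ of the $k+1$ budget units have been spent, so that there is precisely one unit of slack to contradict. The monotonicity facts (a), (b) and the $\ell>\ell^*$ case are routine.
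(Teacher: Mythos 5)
Your proof is correct and follows essentially the same route as the paper's: run \gjcr for target size $k+1$, remove the last-selected candidate, and use the price-system accounting from the proof of \Cref{prop:GJCR} to rule out any \ejrp violation of the reduced committee. The only (immaterial) difference is organizational---you split on $\ell$ versus $\ell^*$ and do the budget accounting just before $c^*$ is bought (total slack exactly $1$), whereas the paper accounts at the very end (every voter spends their full budget) and pins down $\ell'=\ell^*$ and $N''\subseteq N'$ directly.
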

\begin{proof}
    Let $W'$ be the output of \gjcr for committee size $k+1$. If $\lvert W' \rvert < k+1$, we are already done, as we can simply fill up the committee arbitrarily. If, on the other hand, $\lvert W'\rvert = k+1$, let~$c$ be the last candidate selected by \gjcr and let $N'$ and $\ell$ be the respective voter set and $\ell$-value. 
    Consider the price system for $W'$ that is constructed by \gjcr (as described in the proof of \Cref{prop:GJCR}). Since we select exactly $k+1$ candidates, we know that every voter must spend their whole budget, since the budget is exactly enough for $k+1$ candidates. Since every voter paid at most $\frac{k+1}{\ell n}$ per candidate and paid exactly $\frac{k+1}{n}$ in total, the voters outside $N'$ must approve at least $\ell$ candidates in $W' \setminus \{c\}$, while the voters in $N'$ must approve exactly $\ell - 1$ in $W' \setminus \{c\}$.
    
    Now assume that $W \coloneqq W' \setminus \{c\} $ does not satisfy \ejrp for committee size $|W|=k$. Since every voter approves at least $\ell - 1$ candidates in $W$, there must be an $\ell' \ge \ell, c' \notin W$ and $N'' \subseteq N_{c'}$ with $\lvert N'' \rvert \ge \frac{\ell' n}{k}$ such that $\lvert A_i \cap W \rvert < \ell'$ for all $i \in N''$. If $c = c'$, we get that $\ell' = \ell$ and thus $N'' \subseteq N'$, since only voters in $N'$ can approve less than $\ell$ candidates in $W$. 
    If $c \neq c'$, we know that $c'$ was not chosen by \gjcr. Thus, in this case $\ell' = \ell$ and $N'' \subseteq N'$ must hold as well. It follows that $\lvert N' \rvert \ge \frac{\ell n}{k} > \frac{\ell n }{k+1}$. Therefore, the voters in $N'$ could not have spent their whole budget and $W'$ cannot be of size $k+1$, a contradiction. 
\end{proof}
\subsection{Experiments with Approval Ballots}
\label{sec:exp-app}

In order to assess how demanding the notions studied in Sections~\ref{sec:pjr+} and~\ref{sec:ejr+} are, we ran experiments with randomly generated instances. As a measure for the discriminative power of an axiom, we consider the likelihood that a randomly drawn committee satisfies the axiom; stronger axioms yield lower values.

\paragraph{Setup}
Following the recent work of \citet{SFJ+22a}, we sample profiles from two different statistical cultures: the resampling model and the truncated urn model. We provide a brief description of these models below, and refer to the paper by \citet{SFJ+22a} for details. 
\begin{itemize}
    \item 
In the \textit{resampling model} with parameters $p$ and $\phi$, an approval ballot $A$ approving $\lfloor p m \rfloor$ candidates is generated uniformly at random. Following this, for each voter $i$ and candidate~$c$, the approval of~$i$ to~$c$ is kept as in $A$ with probability $1 - \phi$ and resampled with probability $\phi$, with the probability of approval being $p$. This model generalizes the impartial culture. 
\item
 The \textit{disjoint model} has three parameters, $p, \phi$, and $g$. The model partitions the candidates into~$g$ groups $C_1, \dots, C_g$. Then, for each voter it samples $j \in [g]$ uniformly at random and runs the same procedure as in the resampling model with parameters $(p, \phi)$ and the approval ballot $C_j$ as the start vote. 
\item
In the \textit{noise model} with parameters $\phi$ and $p$, we sample one vote approving $\lfloor p m \rfloor$ uniformly at random. For each voter, a new vote is then generated proportional to $\phi^d$, where~$d$ is the distance of the new vote to the original vote.
\item
In the \textit{truncated urn model} with parameters $\alpha$ and $p$, we sample according to the Pólya-Eggenberger Urn Model with parameter $\alpha$, and truncate the corresponding ranking to approve the top $\lfloor p m \rfloor$ candidates.
\end{itemize}

For each model, we consider four different values for $p$: $0.2$, $0.4$, $0.6$, and $0.8$. For the resampling, noise, and disjoint models, we let $\phi$ vary between $0.01$ and $1$ (in steps of $0.01$) and for the truncated urn model, we let $\alpha$ vary between $0.01$ and $1$ (in steps of $0.01$). For each combination of parameter values, we sample $400$ instances with $100$ voters and $50$ candidates. For each instance, we sample one committee of size $k=10$ uniformly at random and check whether it satisfies PJR, EJR, \pjrp, and \ejrp.

\paragraph{Results}
The results of our experiments are presented in \Cref{fig:app4,fig:app3,fig:app1,fig:app2}.
We observe that the established axioms PJR and EJR are quite easy to satisfy for a lot of parameters. This is similar to the findings of \citet{BFNK19a} and \citet{SFJ+22a}. Moreover, the difference between PJR+ and PJR is often very small. 
\ejrp, on the other hand, is almost always much more discriminating than all the other axioms. This is particularly true when the instances are ``far away'' from impartial culture (which corresponds to parameter value $\phi = 1$ in the resampling model). Interestingly, the incomparability of \pjrp and EJR (see \Cref{rem:pjrp-ejr}) is also visible in some of the graphs.

\newcommand{\figsize}{0.38}
\begin{figure*}[b]
    \centering
    \includegraphics[scale = \figsize]{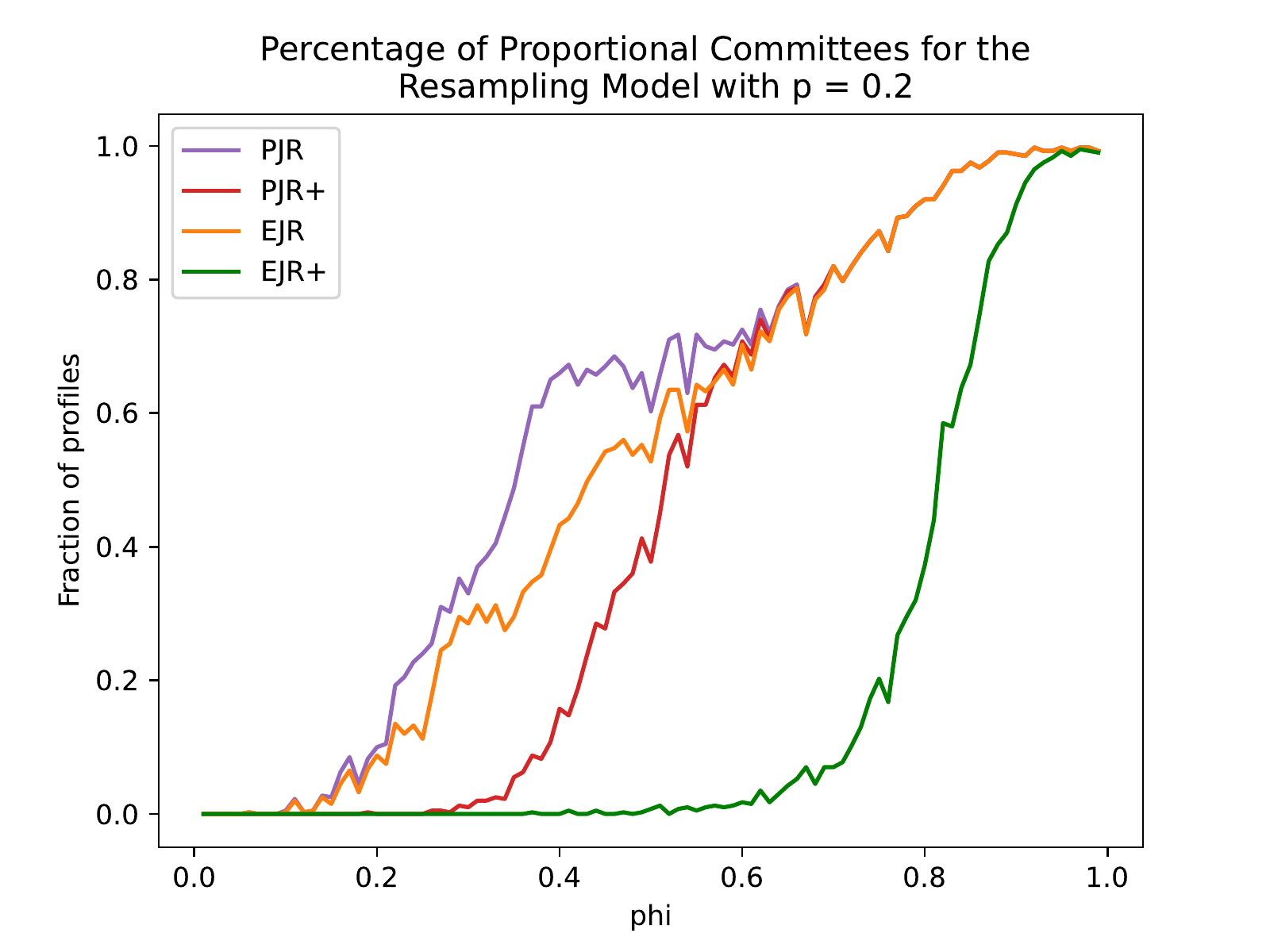}
    \includegraphics[scale = \figsize]{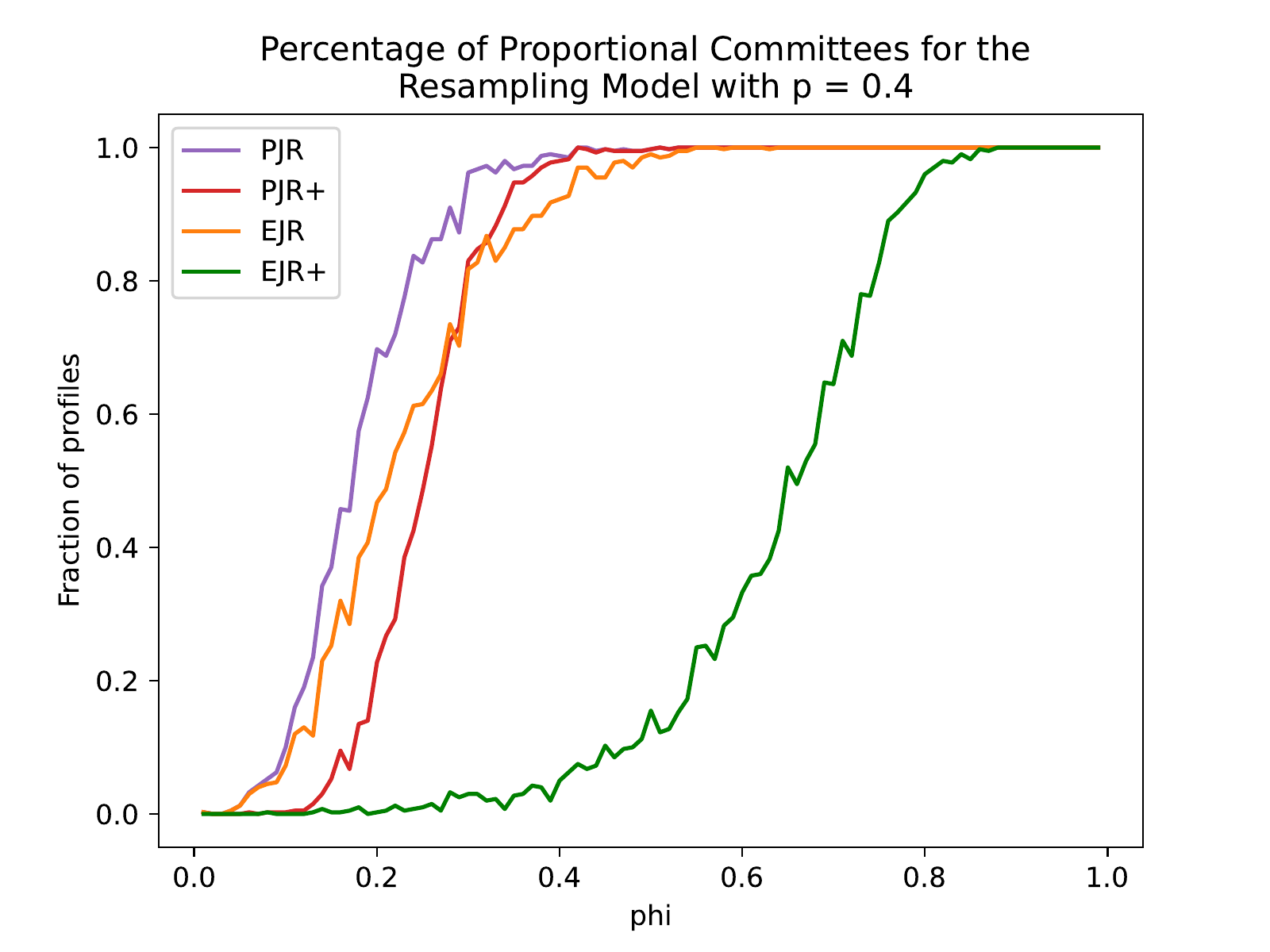}
    \includegraphics[scale = \figsize]{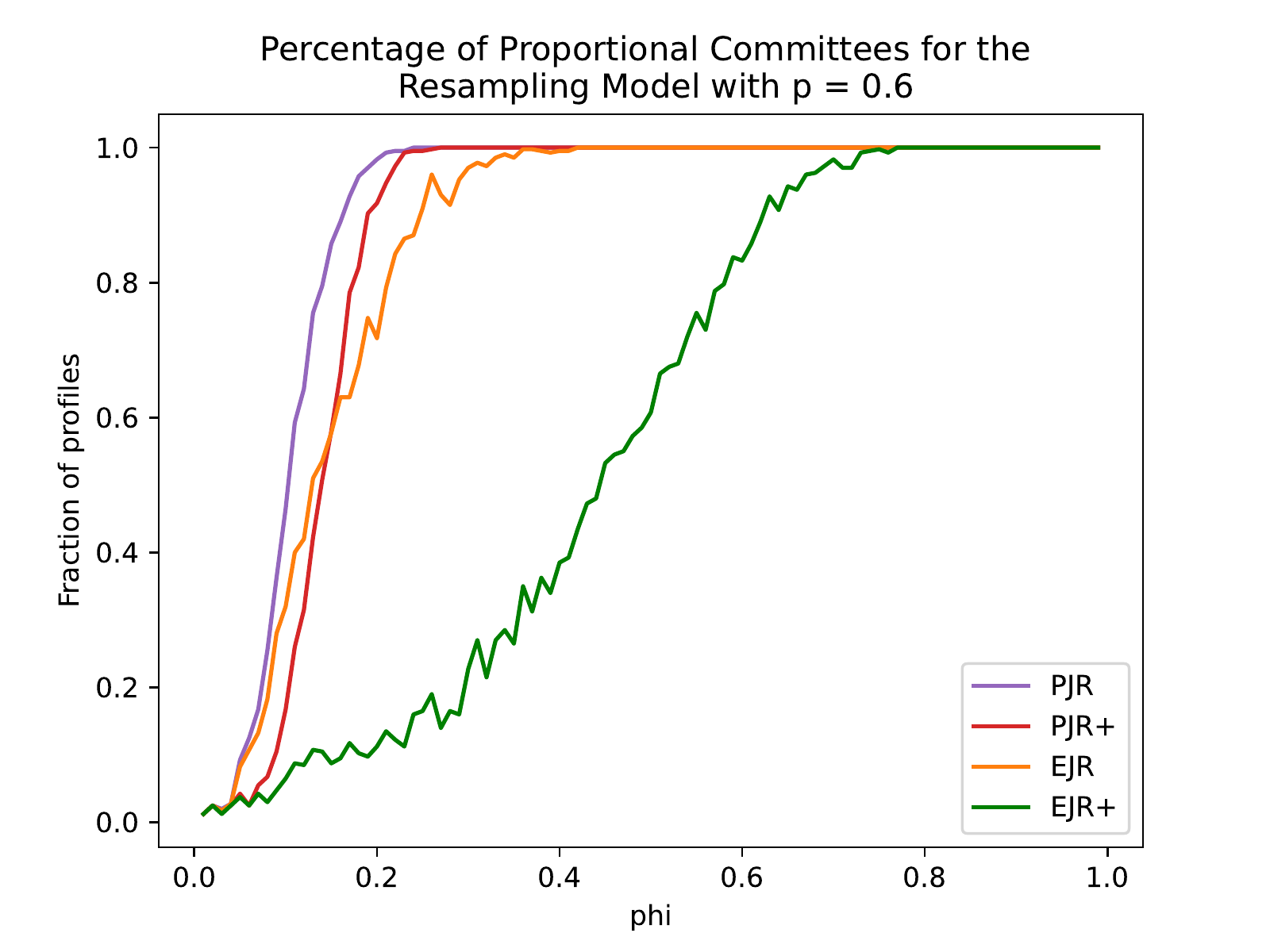}
    \includegraphics[scale = \figsize]{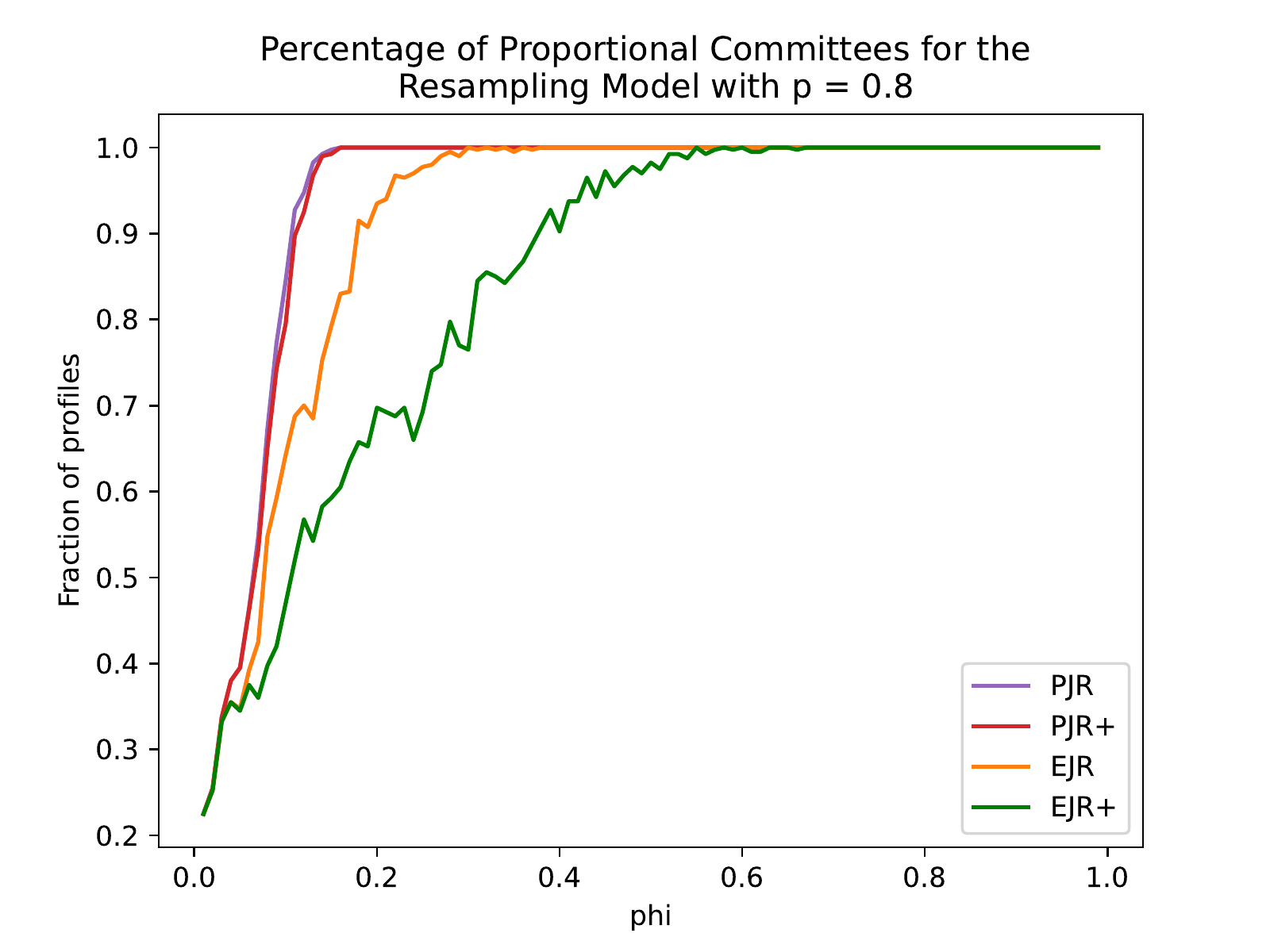}
        \vspace{-0.2cm}
    \caption{Experimental results for the resampling model}
    \label{fig:app2}
\end{figure*}
\begin{figure*}
    \centering
    \includegraphics[scale = \figsize]{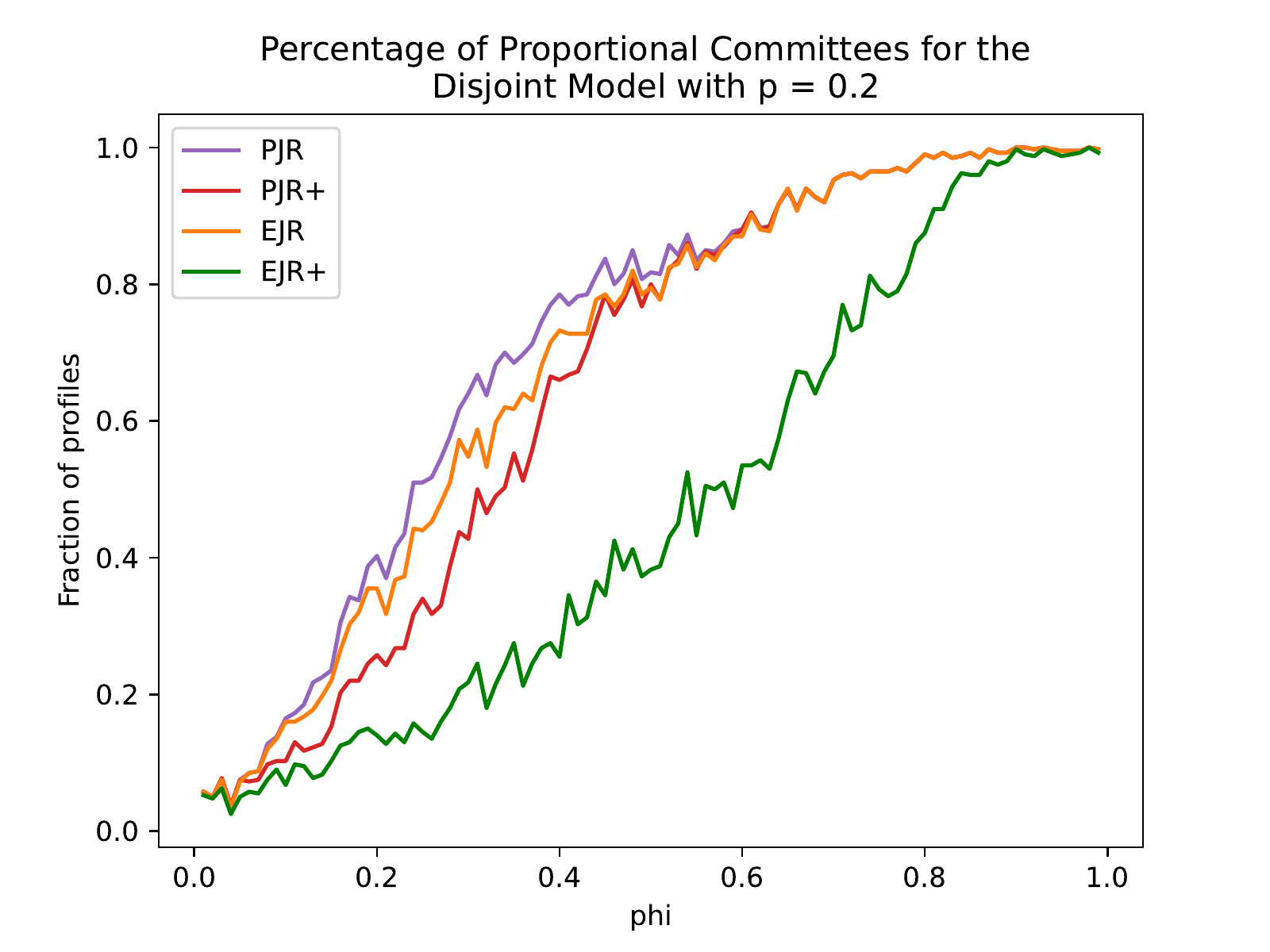}
    \includegraphics[scale = \figsize]{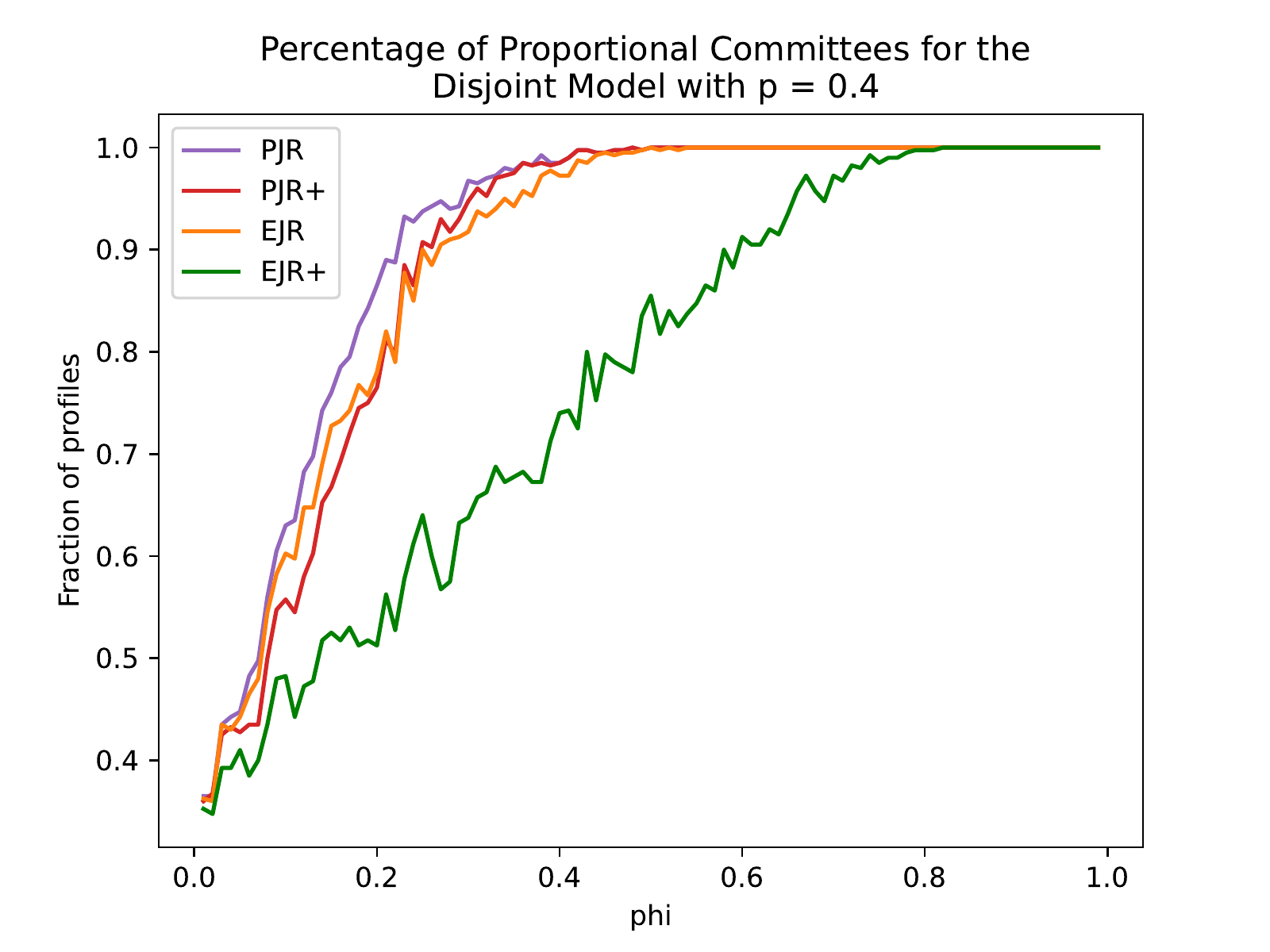}
    \includegraphics[scale = \figsize]{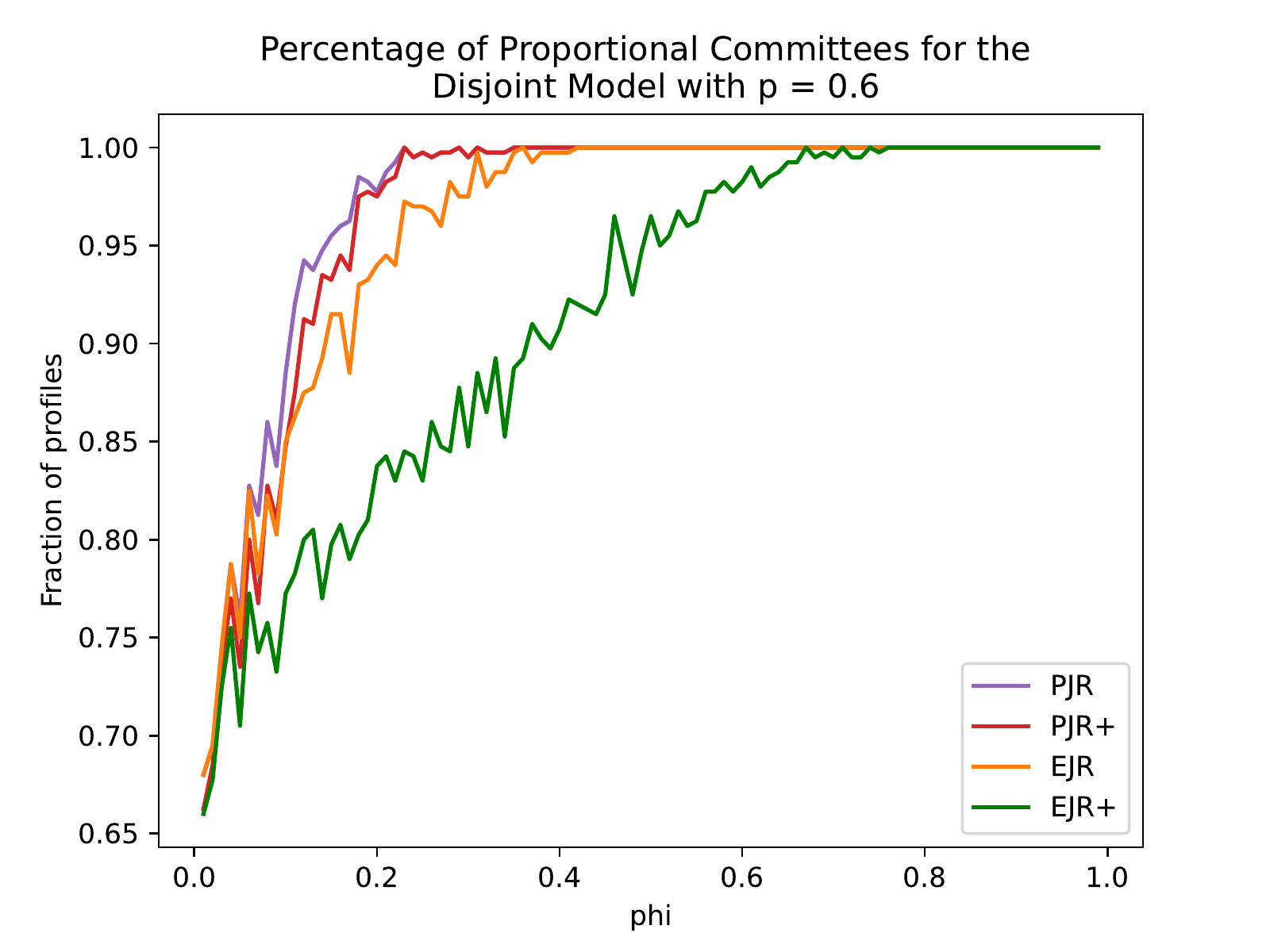}
    \includegraphics[scale = \figsize]{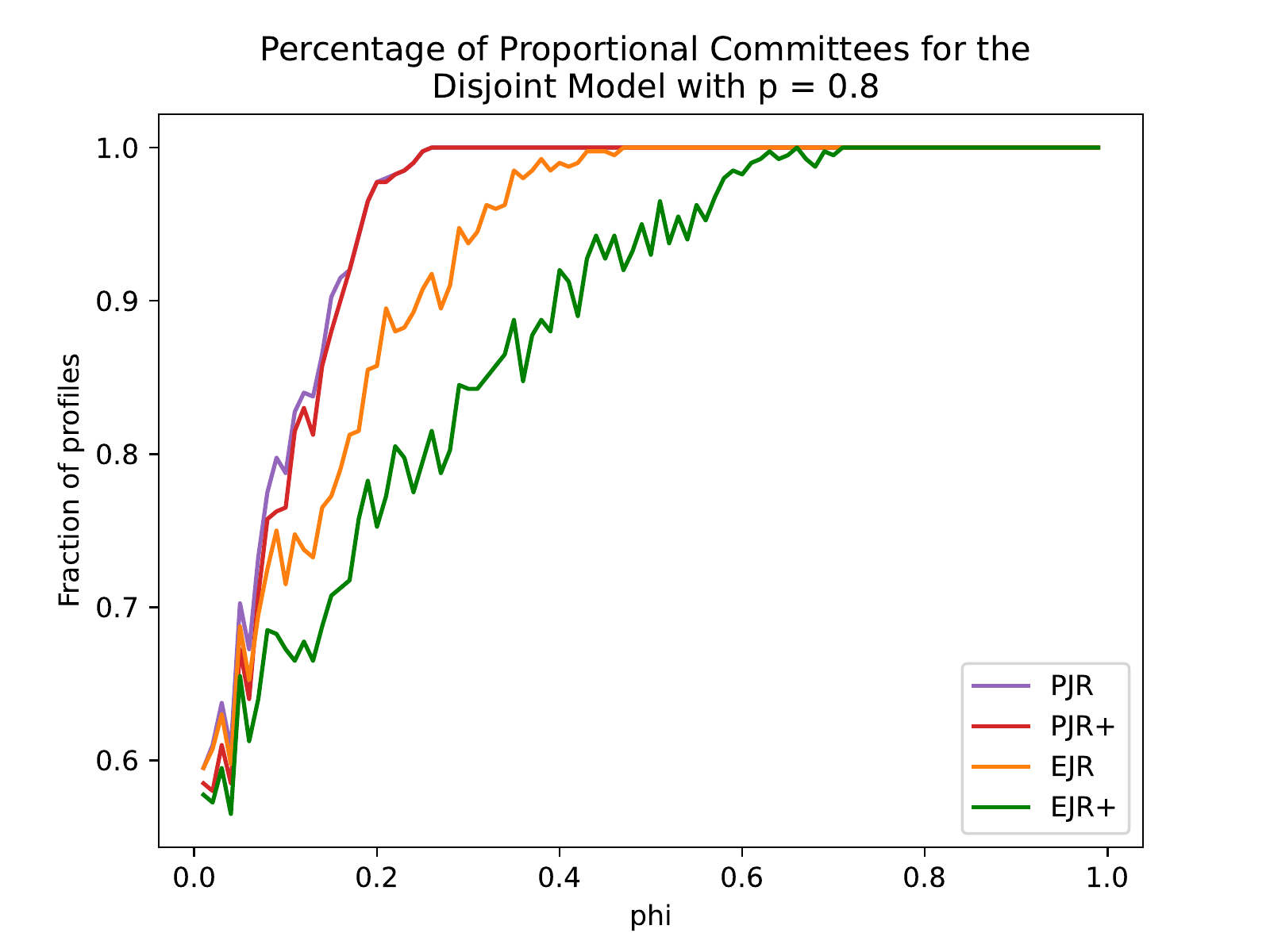}
    \vspace{-0.2cm}
    \caption{Experimental results for the disjoint model}
    \label{fig:app4}
\end{figure*}
\begin{figure*}
    \centering
    \includegraphics[scale = \figsize]{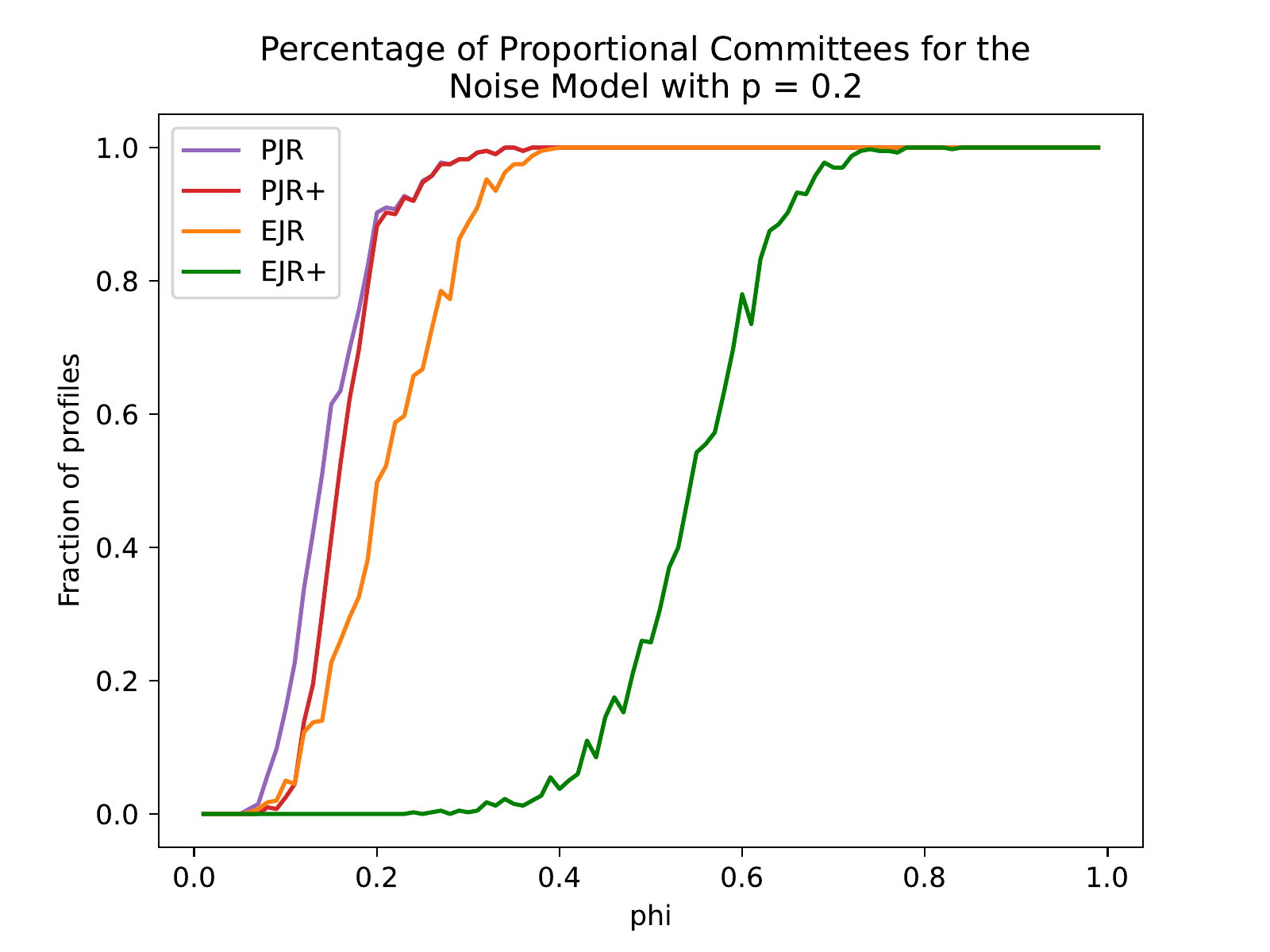}
    \includegraphics[scale = \figsize]{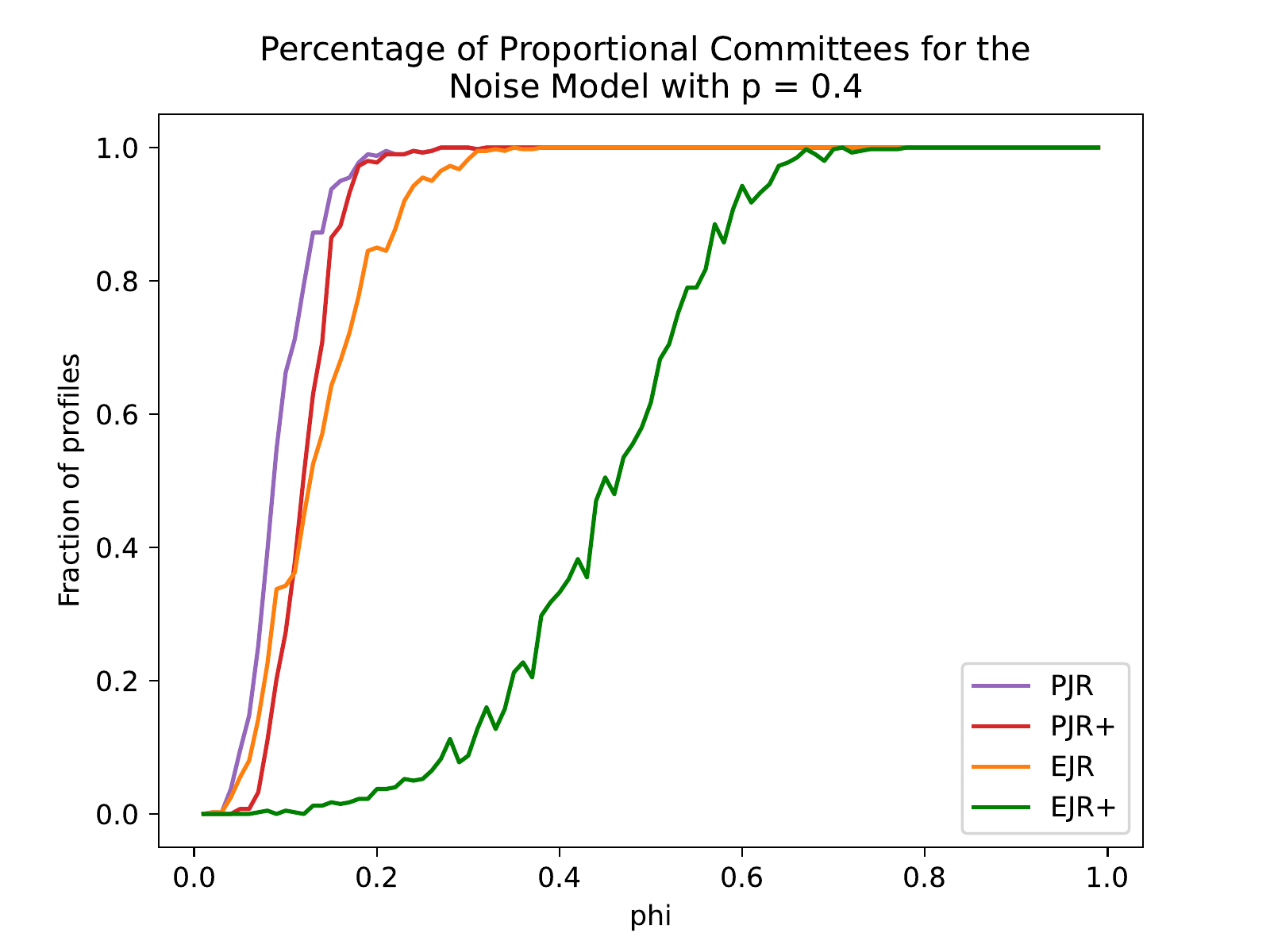}
    \includegraphics[scale = \figsize]{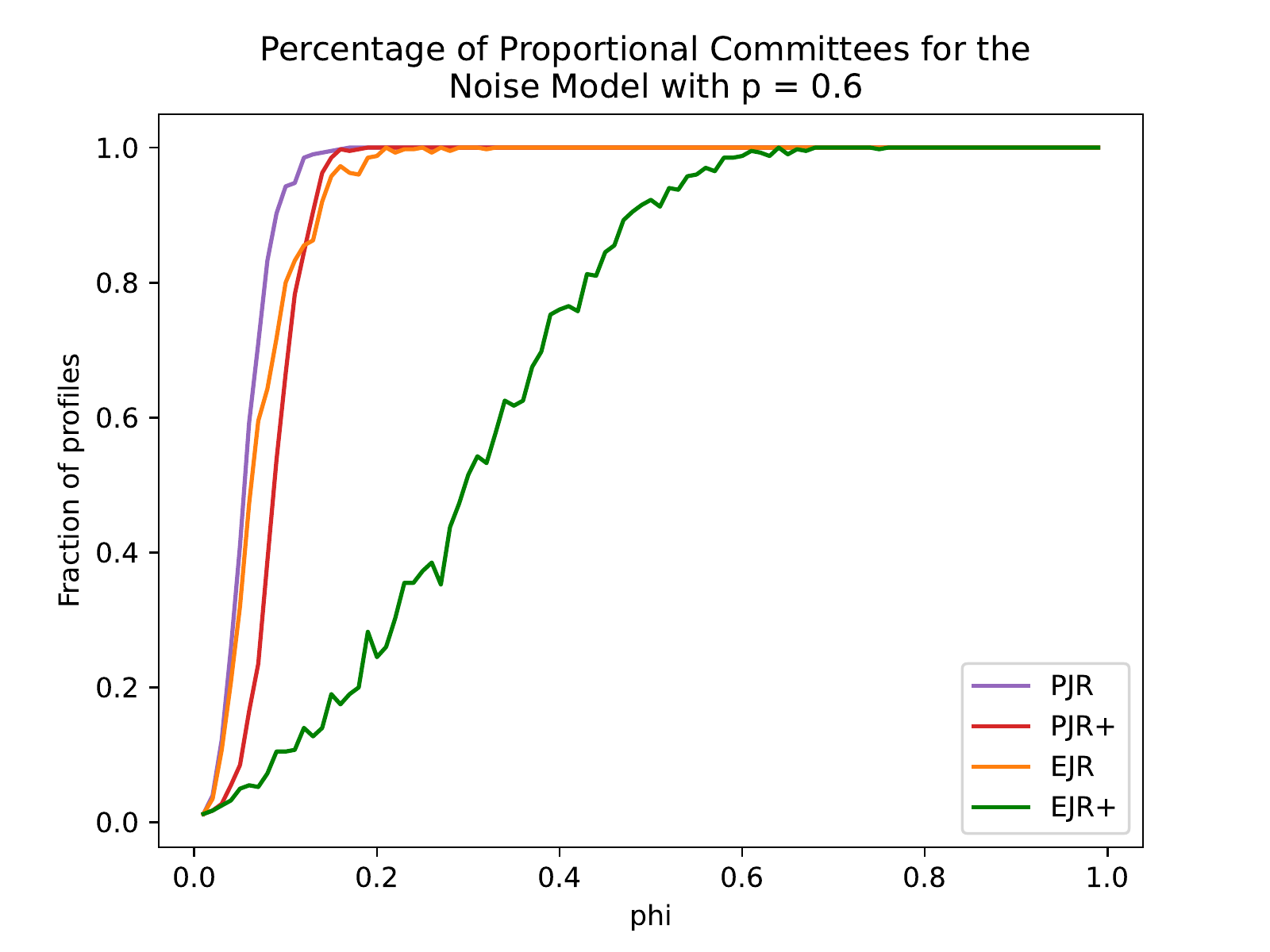}
    \includegraphics[scale = \figsize]{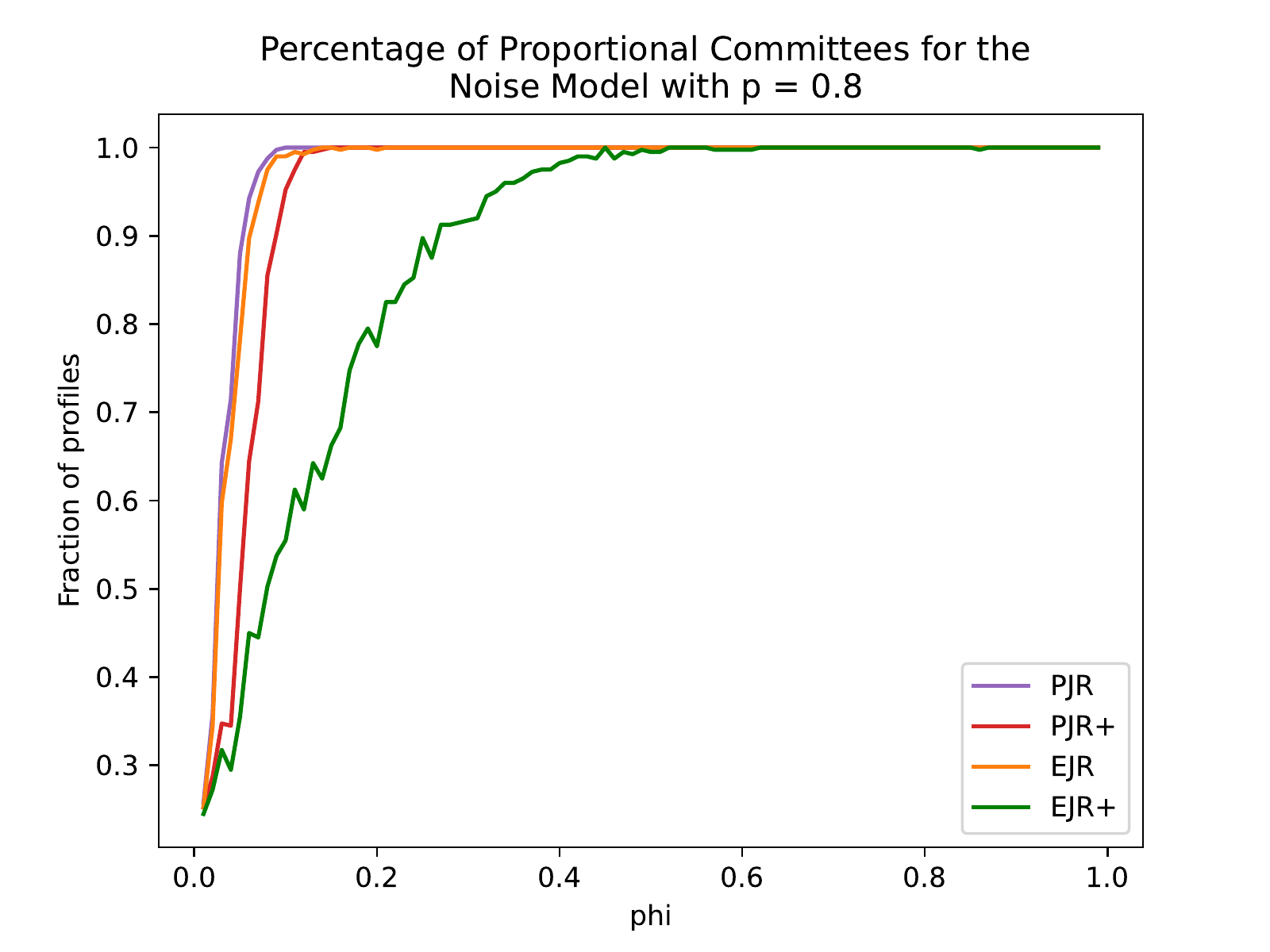}
    \vspace{-0.2cm}
    \caption{Experimental results for the noise model}
    \label{fig:app1}
\end{figure*}
\begin{figure*}
    \centering
    \includegraphics[scale = \figsize]{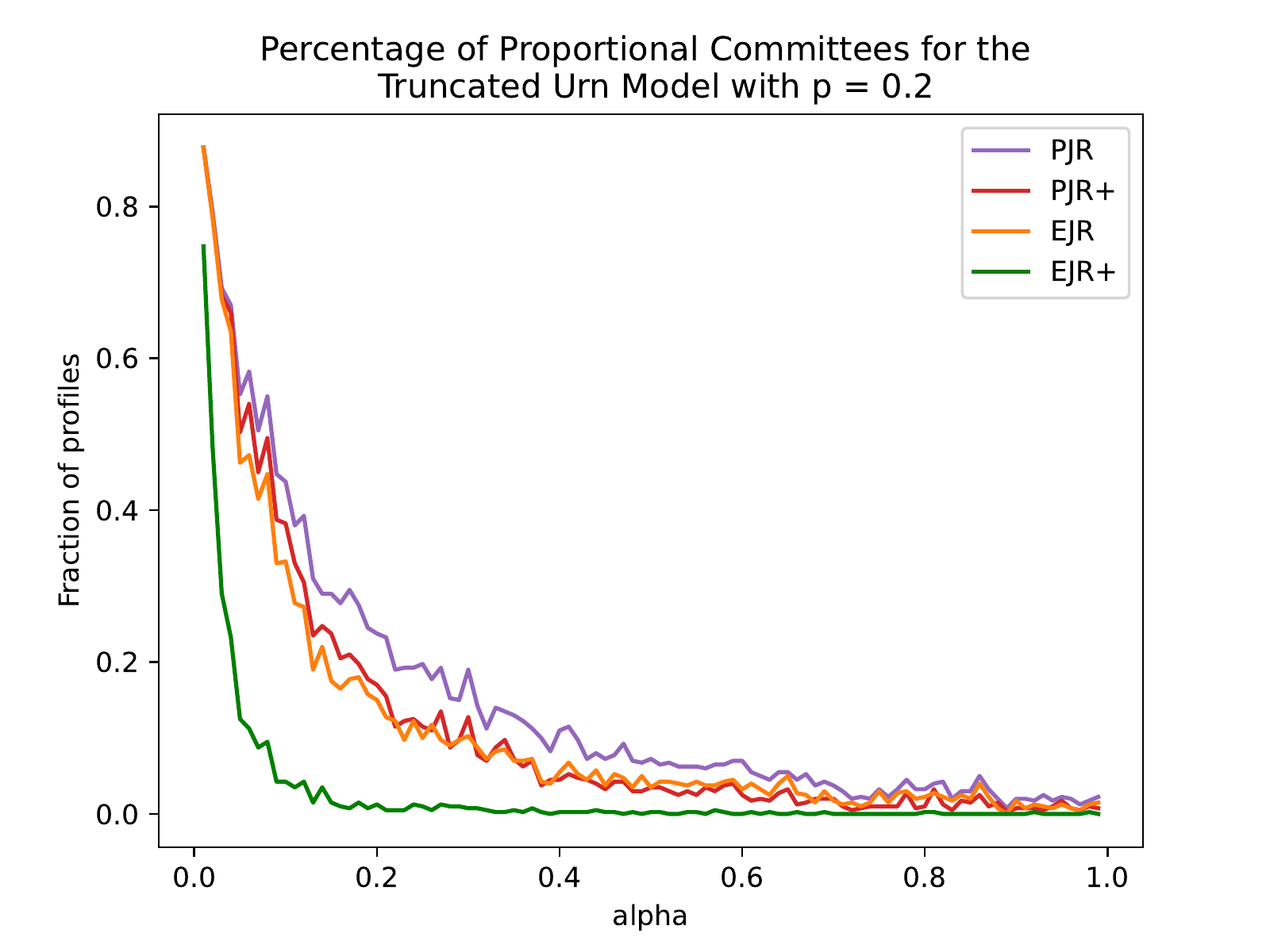}
    \includegraphics[scale = \figsize]{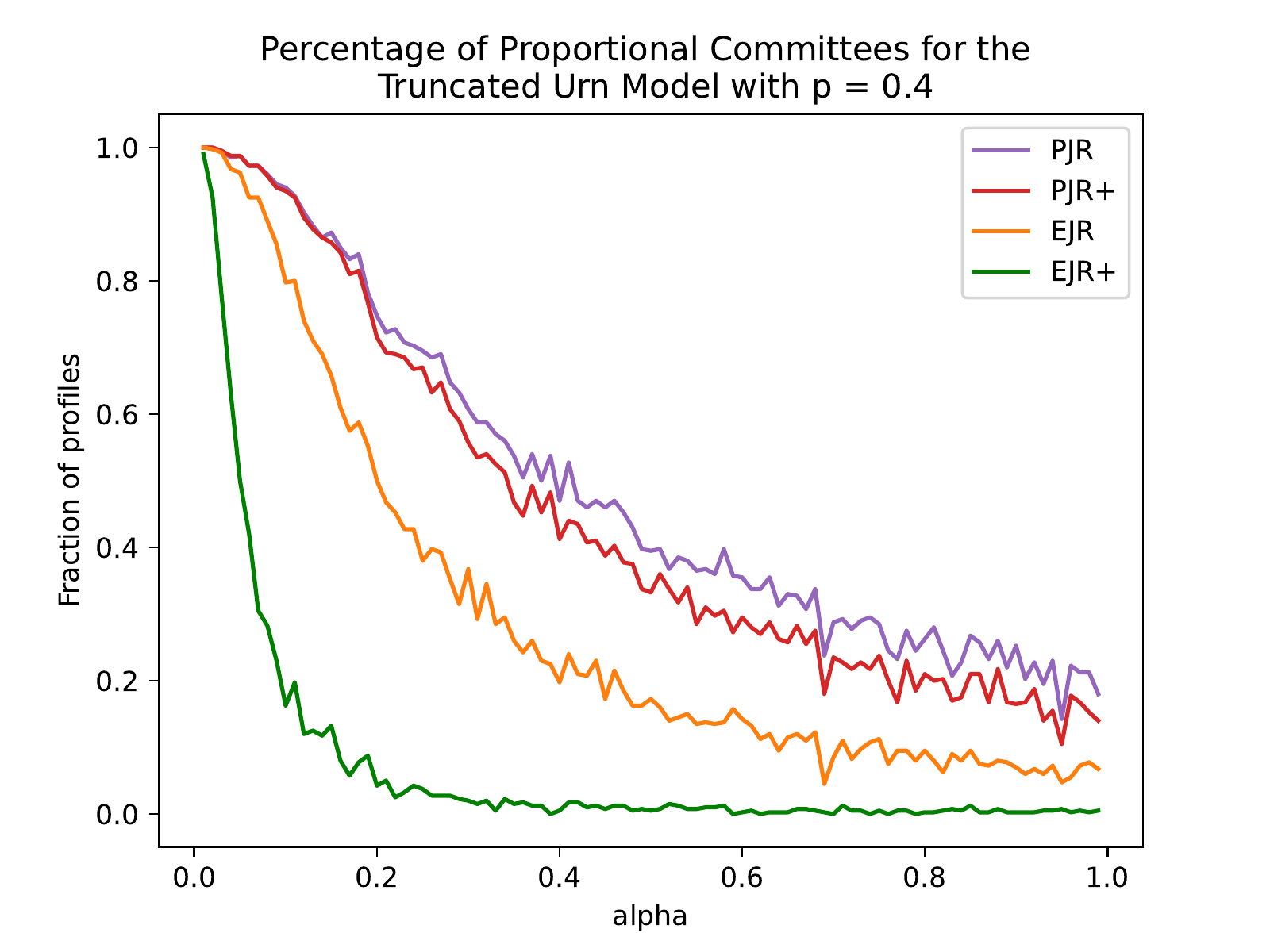}
    \includegraphics[scale = \figsize]{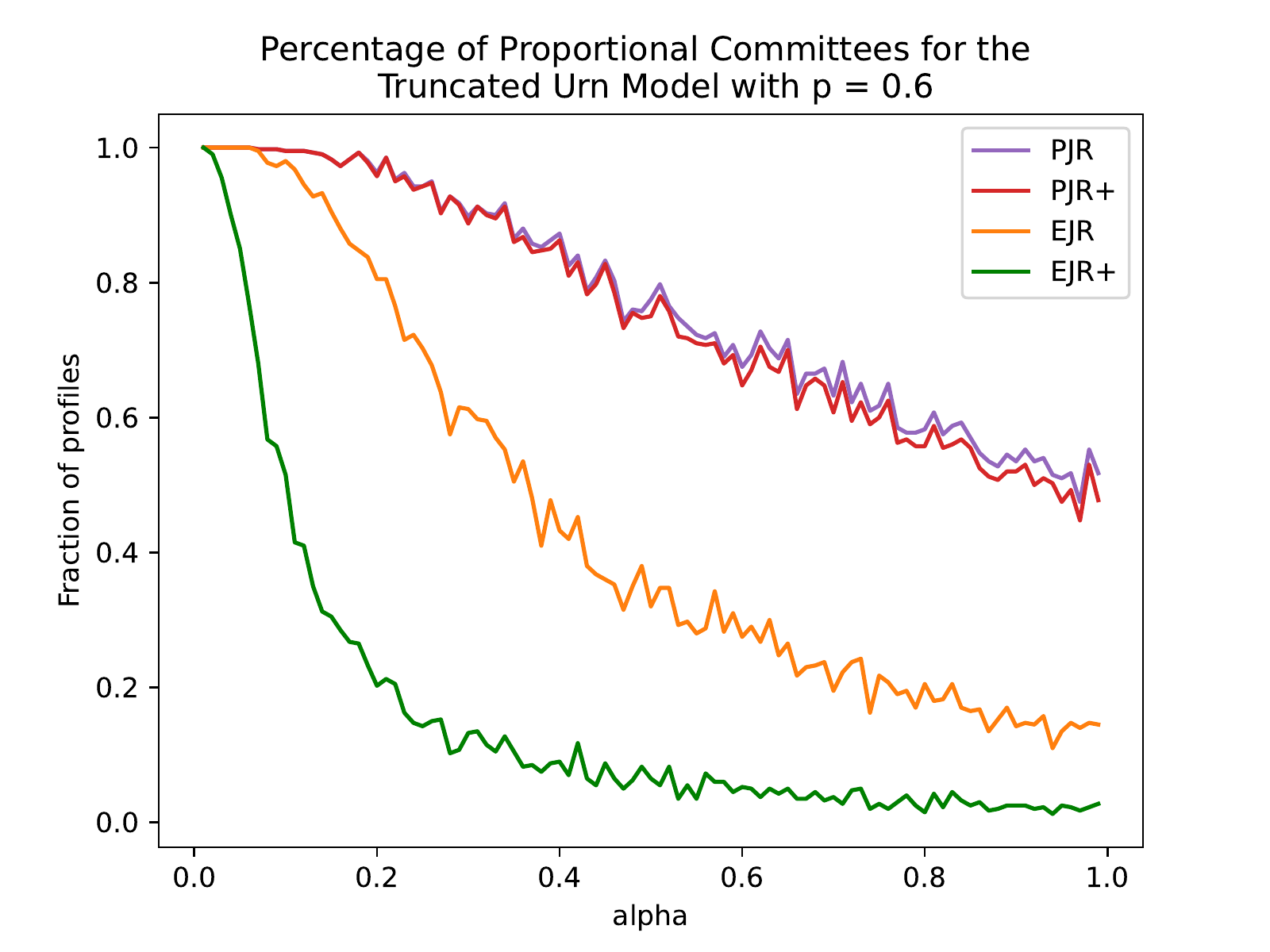}
    \includegraphics[scale = \figsize]{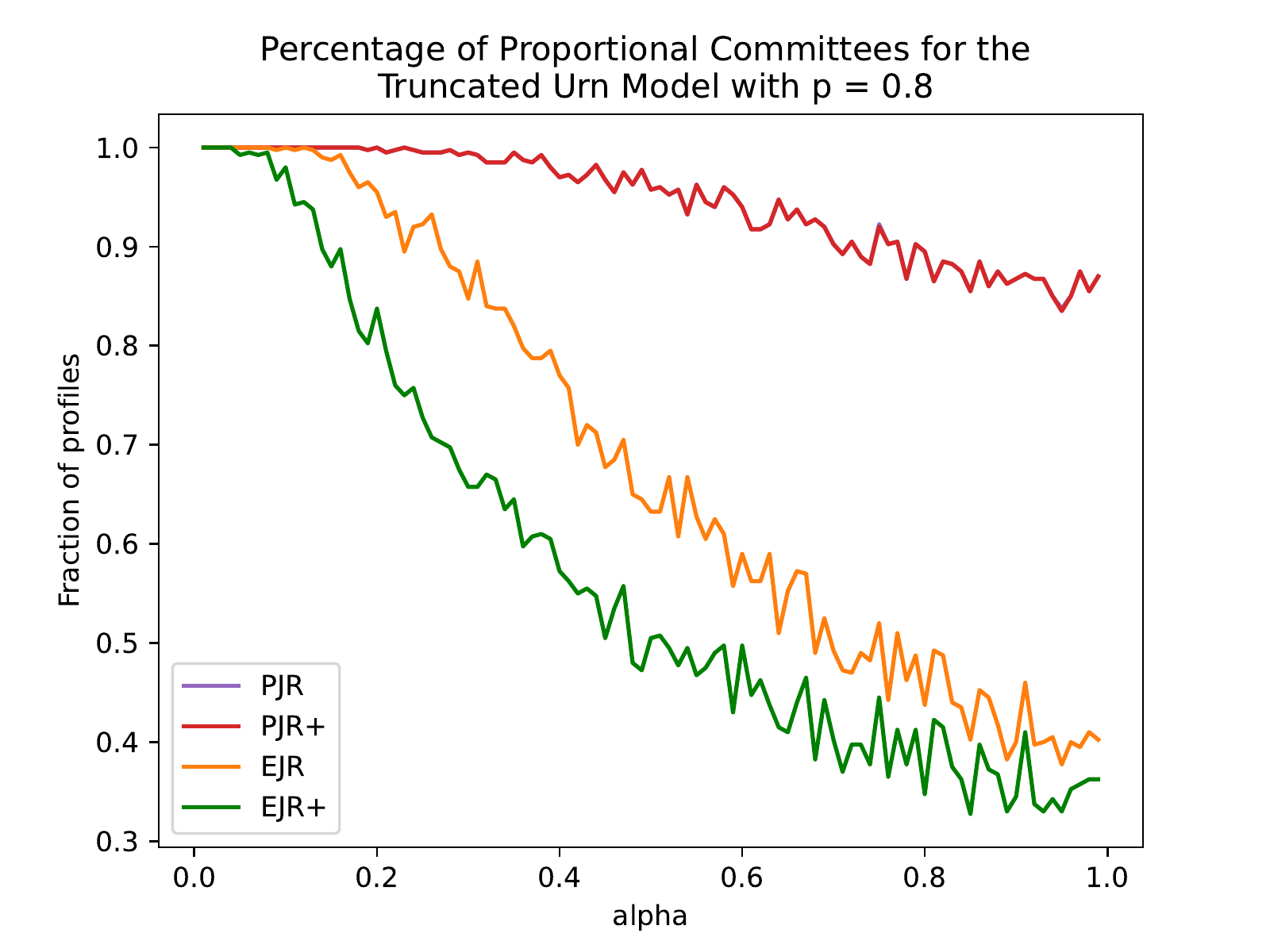}
    \vspace{-0.2cm}
    \caption{Experimental results for the truncated urn model}
    \label{fig:app3}
\end{figure*}

\clearpage

\section{Ranked Ballots}
\label{sec:ranked}

In this section, we consider multiwinner voting based on ranked preferences. Our results hold for general weak preferences, and we often mention the important special case of strict preferences. 
To motivate the need for more robust proportionality axioms, consider the following example with strict preferences.\footnote{A similar (slightly more complex) example was given by \citet{AzLe20a}. We discuss their example in Appendix~\ref{app:AzLe}.}

\begin{example}
Consider an instance with four voters, $m=6$, $k = 2$, and the following strict preferences:
\begin{align*}
   1&:c_1 \succ c_2 \succ c_3 \succ c_4 \succ c_5\succ c_6 \\ 
   2&:c_5 \succ c_2 \succ c_3 \succ c_4 \succ c_6 \succ c_1 \\ 
   3&:c_4 \succ c_3 \succ c_2 \succ c_6 \succ c_1 \succ c_5\\ 
   4&:c_6 \succ c_3 \succ c_2 \succ c_5 \succ c_1 \succ c_4
\end{align*}
Since there are no non-trivial (generalized) solid coalitions, even the strongest axiom IPSC (which coincides with PSC for strict preferences; see \Cref{obs:ipsc_psc}) enforces nothing. Thus, every committee satisfies IPSC, even $\{c_4,c_6\}$. This committee could also be selected by STV (by first eliminating $c_2$ and $c_3$, then eliminating $c_5$, selecting $c_4$, eliminating $c_1$, and selecting $c_6$). 
However, this committee seems gravely unfair to the first two voters, 
who can make the claim that they should be represented by half of the committee. Even though they do not form a solid coalition, there exists a candidate, $c_2$, that represents those voters much better than any committee member. Hence, requiring voters to form (generalized) solid coalitions is too demanding a requirement in this instance. \label{exmp:ipsc}
\end{example}

A first attempt to define a notion of proportionality without reference to solid coalitions was made by \citet{AEFLS17} in the setting with strict preferences. According to their definition, a committee $W$ is \textit{locally stable} if, for every group $N'$ of voters with $\lvert N' \rvert \ge \frac{n}{k}$, there is no single unchosen candidate $c \notin W$ with $\{c\} \succ_i W$ for all $i \in N'$. 
However, this notion is not always satisfiable, and it is computationally intractable to decide whether an instance admits a locally stable committee \citep{AEFLS17}.

Recall from \Cref{sec:prelims-weak} that generalized PSC and IPSC can be defined using the upper contour set $\overline{C'}(N')$ of a set $C'$ of candidates w.r.t. a set $N'$ of voters. An intuitive idea to generalize these axioms could consist in requiring that
\[
\left\lvert \overline{\{c\}}(N') \cap W \right\rvert  \ge \left \lfloor \frac{\lvert N'\rvert k}{n} \right \rfloor
\]
holds for any $N' \subseteq N$ and unchosen candidate $c \in \bigcap_{i \in N'} A_i$. This is, however, more demanding than local stability, and committees satisfying this notion may fail to exist. 

In order to slightly relax the requirement above, we return to the concept of ranks.
(Recall that only acceptable candidates are assigned finite ranks.)
Using the concept of ranks, we can translate any instance with weak preferences into a collection of instances with approval preferences in a straightforward way: For each possible rank $r$, the approval set of a voter $i$ with preference relation $\succeq_i$ can be defined as the set of all candidates that have been assigned a rank of at most $r$ by voter $i$. To make this precise, consider an instance $(N, C, P, k)$ with weak preferences. For each $r \in [m]$, 
define the approval profile $P^r=(A_1^r, \dots, A_n^r)$ via $A^r_i = \{c \in C \colon \rank(i,c)\le r\}$. This translation results in $m$ instances\footnote{The approval profiles $P^1$, ..., $P^{m}$ have a nested structure and are not necessarily distinct. In particular, if the profile $P$ has dichotomous preferences to begin with, then $P^r$ is identical to $P$ for all $r \in [m]$.}  
with approval preferences, given by $(N, C, P^1, k), \dots, (N, C, P^{m}, k)$. Intuitively, an outcome which is proportional for the instance with weak preferences should also be proportional for all the approval instances derived from it. The following definition formalizes this idea.

\begin{definition}[Rank-\pjrp] \label{def:rankPJR+}
Given an instance $(N, C, P, k)$ with weak preferences, a feasible committee $W$ satisfies \rpjrp  if, for every $r \in [m]$, $W$ satisfies \pjrp in the instance $(N, C, P^r,k)$. 
\end{definition}

This property can also be formulated without reference to PJR+.

\begin{observation}              
    A feasible committee $W$ satisfies \rpjrp if and only if there is no candidate $c \in C\setminus W$, rank $r \in [m]$, and group of voters $N'$ with $\lvert N'\rvert \ge \frac{\ell n}{k}$ such that $\rank(i,c) \le r$ for all $i \in N'$ and  $\lvert\{c \in W \colon \rank(i,c) \le r \text{ for some } i\in N'\}\rvert < \ell$.
\end{observation}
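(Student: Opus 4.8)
The plan is to unwind both sides of the claimed equivalence using the definitions, showing that the negations match literally.

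\paragraph{Approach.} I would prove the contrapositive form on each side: namely, that $W$ \emph{fails} \rpjrp (in the sense of \Cref{def:rankPJR+}) if and only if the stated "bad configuration" (a candidate $c$, rank $r$, integer $\ell$, and group $N'$) exists. Since \rpjrp is defined as "$W$ satisfies \pjrp in $(N,C,P^r,k)$ for every $r$," the committee fails \rpjrp iff there exists some $r \in [m]$ such that $W$ violates \pjrp in $(N,C,P^r,k)$. Now I would invoke \Cref{obs:ipsc_abc} (the reformulation of IPSC/\pjrp for approval preferences), applied to the approval instance $(N,C,P^r,k)$: $W$ violates \pjrp there iff there is a candidate $c \in C\setminus W$, a group $N' \subseteq N$, and $\ell \in \mathbb{N}$ with $\lvert N'\rvert \ge \ell \frac{n}{k}$ such that $c \in \bigcap_{i\in N'} A_i^r$ and $\lvert \bigcup_{i\in N'} A_i^r \cap W\rvert < \ell$.

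\paragraph{Key steps.} The remaining work is purely a matter of translating the approval sets $A_i^r$ back into rank language using $A_i^r = \{c' \in C : \rank(i,c') \le r\}$. First, $c \in \bigcap_{i\in N'} A_i^r$ is exactly the condition $\rank(i,c) \le r$ for all $i \in N'$. Second, $\bigcup_{i \in N'} A_i^r = \{c' \in C : \rank(i,c') \le r \text{ for some } i \in N'\}$, so $\bigcup_{i\in N'} A_i^r \cap W = \{c' \in W : \rank(i,c') \le r \text{ for some } i \in N'\}$, and the inequality $\lvert \bigcup_{i\in N'} A_i^r \cap W\rvert < \ell$ becomes $\lvert\{c' \in W : \rank(i,c') \le r \text{ for some } i\in N'\}\rvert < \ell$. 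Quantifying over all $r \in [m]$ on the "fails \rpjrp" side corresponds precisely to the existential "$\exists r \in [m]$" in the observation. Assembling these equivalences gives the statement (modulo the harmless variable-naming clash where the bound variable inside the cardinality is also called $c$; I would note that the inner $c$ ranges over $W$ and is independent of the witness candidate $c \notin W$, or simply rename it).

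\paragraph{Main obstacle.} There is essentially no obstacle here — the proof is a one-to-one unpacking of \Cref{def:rankPJR+} via \Cref{obs:ipsc_abc}. The only point requiring a sentence of care is confirming that the quantifier over ranks distributes correctly: "$W$ satisfies \pjrp for all $r$" negates to "$\exists r$ such that $W$ violates \pjrp for that $r$," and the witness $(c, N', \ell)$ from \Cref{obs:ipsc_abc} for that particular $r$ is exactly the witness in the observation. I would keep the write-up to a short paragraph.

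\begin{proof}
    By \Cref{def:rankPJR+}, $W$ fails \rpjrp if and only if there exists $r \in [m]$ such that $W$ violates \pjrp in the approval instance $(N,C,P^r,k)$. By \Cref{obs:ipsc_abc} (recall that \pjrp coincides with IPSC for approval preferences), this happens if and only if there exist $r\in[m]$, a candidate $c \in C\setminus W$, a group $N'\subseteq N$, and $\ell \in \mathbb{N}$ with $\lvert N'\rvert \ge \ell\frac{n}{k}$ such that $c \in \bigcap_{i\in N'} A_i^r$ and $\bigl\lvert \bigcup_{i\in N'} A_i^r \cap W\bigr\rvert < \ell$. By the definition $A_i^r = \{c' \in C : \rank(i,c') \le r\}$, the condition $c \in \bigcap_{i\in N'} A_i^r$ is equivalent to $\rank(i,c)\le r$ for all $i\in N'$, and $\bigcup_{i\in N'} A_i^r \cap W = \{c' \in W : \rank(i,c')\le r \text{ for some } i\in N'\}$, so the second condition is equivalent to $\lvert\{c' \in W : \rank(i,c')\le r \text{ for some } i\in N'\}\rvert < \ell$. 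Negating this characterization of failure yields exactly the statement of the observation.
\end{proof}
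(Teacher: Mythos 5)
Your proof is correct and matches the paper's intended reasoning exactly: the paper states this as an observation without proof precisely because it is the definitional unpacking you carry out, namely negating the universal quantifier over $r$ in the definition of \rpjrp and translating the \pjrp characterization from \Cref{obs:ipsc_abc} into rank language via $A_i^r = \{c' \in C : \rank(i,c') \le r\}$. Your side remarks (the implicit existential over $\ell$ and the reuse of the variable name $c$ inside the cardinality) are accurate observations about the statement's phrasing and do not affect the argument.
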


\begin{example} \label{ex:rank-pjrp}
    Consider again the instance given in \Cref{exmp:ipsc}. 
    The committee $\{c_4,c_6\}$ does not satisfy \rpjrp, since candidate~$b$ is justified to be included in the committee via voter group $\{1,2\}$ in the approval instance $(N,C,P^2,k)$, where each voter approves their two top-ranked candidates. An analogous violation of \rpjrp can be found for candidate $c_3$.  
    A committee satisfying \rpjrp has to include at least one of $c_2$ or $c_3$. More precisely, \rpjrp requires that either (i) $c_2$ and one of $\{c_3,c_4,c_6\}$ or (ii) $c_3$ and one of $\{c_1,c_2,c_5\}$ is selected.
\end{example}

For instances with approval preferences, rank-\pjrp is equivalent to \pjrp. 
To show that \rpjrp is always satisfiable, we turn to the concept of priceability \citep{PeSk20a} and define a generalization to weak preferences. 

\subsection{Priceability for Ranked Ballots}

The goal of this section is to generalize the notion of priceability (see Section ~\ref{sec:prelim-approval}) from approval instances to instances with general weak preferences.  
First, we note that the definition of a price system (i.e., the constraints \textbf{C1} to \textbf{C5} in \Cref{def:priceability}) 
generalize immediately from the approval setting to the ranked setting. 
Axiom \textbf{C5} however, which is required to relate price systems to proportionality axioms, does not give us any proportionality guarantees in the ranked setting: Consider an instance with two voters, both of whom rank $c_1$ higher than $c_2$, and let $k = 1$.  Selecting $c_2$, a clearly suboptimal choice, results in a priceable committee, even with budget $B > 1$. Unsurprisingly, the committee $\{c_2\}$ violates all our previously defined proportionality axioms for ranked preferences.

In order to relate priceability to proportionality, we thus need to strengthen \textbf{C5}. A first idea to do that would be to require that
no candidate can be bought using unspent money and money spent on candidates ranked worse than them. Formally, this corresponds to requiring  
    \vspace{-0.5ex}
    \begin{align*}
    \underbrace{\sum_{i \in N \colon c \in A_i} \left(\frac{B}{n} - \sum_{c' \in C}p_i(c')\right)}_{\text{unspent money}} \quad + \quad \underbrace{\sum_{i \in N}\left(\sum_{c' \in C \colon c\succ c' } p_i(c')\right)}_{\text{money spent on worse candidates}} \le 1 \quad \text{ for all $c \notin W$.} 
    \end{align*} 
 While seemingly reasonable, this axiom is not always satisfiable, as a committee satisfying it would also be locally stable. 
 Therefore, we relax the above inequality by considering ranks. Let $N_c^r = \{i \in N \colon \rank(i,c) \le r\}$ be the set of voters ranking candidate $c$ at rank $r$ or higher. The following constraint, to which we refer as $\textbf{C5}_{\textbf{rank}}$, requires that the unspent money of $N_c^r$ plus the money this group spent on candidates ranked worse than $r$ is not enough to buy~$c$.
\begin{align*}
    \textbf{(C5}_{\textbf{rank}}\textbf{)} \quad \underbrace{\sum_{i \in N_c^r} \left(\frac{B}{n} - \sum_{c' \in C}p_i(c')\right)}_{\text{unspent money}} \, + \, \underbrace{\sum_{i \in N_c^r}\left(\sum_{c' \in C \colon c' \notin A_i^r} p_i(c')\right)}_{\text{\parbox{3.5cm}{\centering money spent on candidates\\ ranked worse than $r$}}} \le 1 \quad \text{ for all } c \notin W \text{ and } r \in [m].
\end{align*}

We call a committee $W$ \emph{rank-priceable} if there is a price system for $W$ satisfying $\textbf{C5}_{\textbf{rank}}$.\footnote{Since $\textbf{C5}_{\textbf{rank}}$ implies \textbf{C5}, there is no need to require \textbf{C5} explicitly.}

\begin{definition}[Rank-Priceability] \label{def:priceability}
Given an instance with weak preferences, a committee $W$ is \emph{rank-priceable}  
if there exist a $B>0$ and functions $p_i\colon C \rightarrow [0, \frac{B}{n}]$ satisfying \emph{\textbf{C1}--\textbf{C4}} and \emph{$\textbf{C5}_{\textbf{rank}}$}.
\end{definition}

It is easy to see that these constraints can be verified using a linear program. 

\begin{proposition}
 Given an instance with weak preferences and a committee $W$, it can be verified in polynomial time whether $W$ is rank-priceable. 
\end{proposition}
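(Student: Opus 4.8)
The plan is to observe that rank-priceability of $W$ is exactly the feasibility of a linear program of polynomial size, and then to invoke the polynomial-time solvability of linear programming. I would take as variables the real numbers $\{p_i(c)\}_{i \in N,\, c \in C}$ together with $B$, so that there are $nm+1$ variables. Constraints \textbf{C1}--\textbf{C4} are already linear equalities. The box requirement $p_i(c) \in [0,\tfrac{B}{n}]$ becomes the pair of linear constraints $p_i(c) \ge 0$ and $n\,p_i(c) \le B$ (recall $n$ is a fixed integer of the instance). Constraint $\textbf{C5}_{\textbf{rank}}$ is also linear once one notes that, for each fixed $r$, the set $N_c^r = \{i : \rank(i,c) \le r\}$ and the set $C\setminus A_i^r$ are completely determined by the instance; hence for every $c \notin W$ and every $r \in [m]$ the left-hand side of $\textbf{C5}_{\textbf{rank}}$ is an affine function of the variables. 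Counting: at most $nm$ equalities from \textbf{C1}, $n$ inequalities from \textbf{C2}, at most $m$ equalities from \textbf{C3} and \textbf{C4}, at most $m^2$ inequalities from $\textbf{C5}_{\textbf{rank}}$, and $2nm$ box constraints — all polynomial in $n$ and $m$, and the program can be written down in polynomial time.

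The only point that needs care is the strict inequality $B > 0$, which is not directly expressible in a linear program; I would handle it by adding the constraint $B \ge 1$ and arguing that this loses no generality. If $W = \emptyset$, then \textbf{C4} together with nonnegativity forces all payments to $0$, and $\textbf{C5}_{\textbf{rank}}$ reduces to $|N_c^r|\tfrac{B}{n} \le 1$; since $|N_c^r| \le n$, the choice $B = 1$ with all payments zero is a valid rank-price system, so $W$ is rank-priceable and the augmented program is feasible. If $W \ne \emptyset$, pick any $c \in W$; then \textbf{C3} gives $1 = \sum_{i \in N} p_i(c) \le n\cdot\tfrac{B}{n} = B$, so every valid price system already satisfies $B \ge 1$, and conversely any feasible point of the augmented program has $B \ge 1 > 0$ and thus witnesses rank-priceability. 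Hence $W$ is rank-priceable if and only if the linear program consisting of \textbf{C1}--\textbf{C4}, $\textbf{C5}_{\textbf{rank}}$, the box constraints, and $B \ge 1$ is feasible.

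Finally I would cite the standard fact that feasibility of a rational linear program can be decided in time polynomial in its encoding length (ellipsoid or interior-point methods), together with the observation that a nonempty rational polyhedron always contains a rational point, so allowing real versus rational solutions is immaterial. Since the program above has polynomial size, this yields the claimed polynomial-time verification. I do not expect any genuine obstacle in this argument; the only step requiring a moment's thought is the treatment of $B > 0$, which the $B \ge 1$ reformulation (justified by \textbf{C3}) settles cleanly.
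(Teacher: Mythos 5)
Your proposal is correct and takes exactly the approach the paper intends: the paper dispenses with this proposition by noting that constraints \textbf{C1}--\textbf{C4} and $\textbf{C5}_{\textbf{rank}}$ form a polynomial-size linear feasibility program, which is precisely your argument. Your additional care with the strict inequality $B>0$ (replacing it by $B\ge 1$ and justifying this via \textbf{C3}, with the $W=\emptyset$ case handled separately) is a sound and welcome refinement of a detail the paper leaves implicit.
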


Next, we show that rank-priceability implies \pjrp.
\begin{restatable}{proposition}{proprprice} \label{prop:rank-price}
 Any rank-priceable feasible committee with a price system $\{B,p\}$ such that $B > k$ satisfies \rpjrp.
\end{restatable}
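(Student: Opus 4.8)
The plan is to argue by contradiction, mimicking the standard proof that priceability implies PJR but tracking the extra "$\textbf{C5}_{\textbf{rank}}$" term coming from money spent on lower-ranked candidates. Suppose $W$ is rank-priceable with a price system $\{B,\{p_i\}_{i\in N}\}$ with $B>k$, yet $W$ violates \rpjrp. By the reformulation of \rpjrp stated just after \Cref{def:rankPJR+}, there are a candidate $c\in C\setminus W$, a rank $r\in[m]$, an integer $\ell\ge 1$ (the case $\ell=0$ being vacuous, as $|S|\ge 0$ below), and a group $N'$ with $|N'|\ge \tfrac{\ell n}{k}$ such that $\rank(i,c)\le r$ for all $i\in N'$ — equivalently $N'\subseteq N_c^r$ — and $|S|\le \ell-1$, where $S\coloneqq\{c'\in W:\rank(i,c')\le r\text{ for some }i\in N'\}=\bigcup_{i\in N'}A_i^r\cap W$ is the set of committee members some voter of $N'$ ranks at rank $r$ or better.

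First I would apply $\textbf{C5}_{\textbf{rank}}$ to the pair $(c,r)$ and discard the voters outside $N'$: since the per-voter unspent money $\tfrac{B}{n}-\sum_{c'\in C}p_i(c')$ is nonnegative by \textbf{C2} and the per-voter payments to candidates ranked worse than $r$ are nonnegative, restricting both sums from $N_c^r$ to $N'\subseteq N_c^r$ only decreases the left-hand side, so
\[
\sum_{i\in N'}\Bigl(\tfrac{B}{n}-\sum_{c'\in C}p_i(c')\Bigr)\;+\;\sum_{i\in N'}\sum_{c'\notin A_i^r}p_i(c')\;\le\;1.
\]
The key bookkeeping step — and the reason $\textbf{C5}_{\textbf{rank}}$ rather than the weaker \textbf{C5} is needed — is to rewrite the $i$-th summand: since $\sum_{c'\in C}p_i(c')-\sum_{c'\notin A_i^r}p_i(c')=\sum_{c'\in A_i^r}p_i(c')$, and since \textbf{C1} and \textbf{C4} force $p_i$ to be supported on $A_i\cap W$, this quantity equals $\sum_{c'\in A_i^r\cap W}p_i(c')$; as $A_i^r\cap W\subseteq S$, the summand $\tfrac{B}{n}-\sum_{c'\in A_i^r\cap W}p_i(c')$ is at least $\tfrac{B}{n}-\sum_{c'\in S}p_i(c')$. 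Hence
\[
\sum_{i\in N'}\Bigl(\tfrac{B}{n}-\sum_{c'\in S}p_i(c')\Bigr)\;\le\;1.
\]

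Finally I would collect terms over $N'$ and $S$: the left-hand side equals $\tfrac{|N'|B}{n}-\sum_{c'\in S}\sum_{i\in N'}p_i(c')$, and for each $c'\in S\subseteq W$ we have $\sum_{i\in N'}p_i(c')\le\sum_{i\in N}p_i(c')=1$ by \textbf{C3} and nonnegativity, so $\sum_{c'\in S}\sum_{i\in N'}p_i(c')\le|S|\le\ell-1$. Thus $\tfrac{|N'|B}{n}\le\ell$, and combining with $|N'|\ge\tfrac{\ell n}{k}$ gives $\tfrac{\ell B}{k}\le\ell$, i.e.\ $B\le k$, contradicting $B>k$. Since \Cref{prop:priceable-rankpjrp} is exactly the approval special case — where $P^r=P$ for every $r$ and \rpjrp collapses to \pjrp — it follows as the stated corollary. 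I expect the only subtle point to be the middle rewriting; the rest is routine nonnegativity and counting.
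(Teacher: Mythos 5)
Your proof is correct and follows essentially the same route as the paper's: both restrict the $\textbf{C5}_{\textbf{rank}}$ inequality for $(c,r)$ to the witnessing group $N'$, rewrite the unspent money plus money spent below rank $r$ as $\tfrac{|N'|B}{n}$ minus the payments to committee members in $\bigcup_{i\in N'}A_i^r\cap W$, bound those payments by the size of that set via \textbf{C3}, and derive a contradiction with $B>k$. The only cosmetic difference is that you phrase the violation via the integer $\ell$ from the observation following \Cref{def:rankPJR+} while the paper uses the floor expression directly; the arguments are otherwise identical.
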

\begin{proof}
Assume that $W$ does not satisfy \rpjrp. Then there is a rank $r \in [m]$ such that $W$ does not satisfy \pjrp in the approval instance $(N, C, P^r,k)$. Thus, there is a group of voters $N'$ and a candidate $c \in \bigcap_{i \in N'}A_i^r\setminus W$ with 
\[
\left\lvert \bigcup_{i \in N'} A^r_i \cap W\right\rvert < \left \lfloor \frac{\lvert N'\rvert k}{n} \right \rfloor.
\] 
We get that 
\begin{align*}
\sum_{i \in N_c^r} &\left(\frac{B}{n} - \sum_{c' \in C}p_i(c')\right)+ \sum_{i \in N_c^r}\left(\sum_{c' \in C \colon c' \notin A_i^r} p_i(c')\right) \\
    &\ge\frac{\lvert N'\rvert B}{n} + \sum_{i \in N'} \sum_{c' \in C} - p_i(c') + \sum_{i \in N_c^r}\left(\sum_{c' \in C \colon c' \notin A_i^r} p_i(c')\right) \\
    &= \frac{\lvert N'\rvert B}{n} - \sum_{i \in N'}\left(\sum_{c' \in A_i^r } p_i(c')\right) 
\,>\, \frac{\lvert N'\rvert k}{n} - \left\lvert \bigcup_{i \in N'} A^r_i \cap W\right\rvert \,\ge\, 1 \text,
\end{align*}
which is a contradiction to $\textbf{C5}_{\textbf{rank}}$. 
\end{proof}

In order to prove that rank-\pjrp is always satisfiable, it is therefore sufficient to identify rules that produce rank-priceable committees. We show that EAR is such a rule. 

\begin{restatable}{proposition}{earprice}
 Any committee in the output of EAR is rank-priceable for some $B > k$.
\end{restatable}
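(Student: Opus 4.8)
The plan is to read a price system directly off the execution of EAR (\Cref{alg:ear}). Whenever EAR adds a candidate $c$ at some rank $r$, it fixes values $(\delta_i)_{i \in N_c^r}$ with $\sum_{i} \delta_i = 1$ and $0 \le \delta_i \le b_i$; I set $p_i(c) := \delta_i$ for these contributing voters and $p_i(c) := 0$ for everyone else, and $p_i(c) := 0$ for every $c \notin W$. I will first check \textbf{C1}--\textbf{C4} and a \emph{strict} version of $\textbf{C5}_{\textbf{rank}}$ with the budget $B = k$, and then slacken the budget slightly to get $B > k$. Conditions \textbf{C1}--\textbf{C4} are immediate: a contributing voter $i \in N_c^r$ satisfies $\rank(i,c) \le r < \infty$, hence $c \in A_i$, giving \textbf{C1}; each voter's cumulative payments never exceed the initial budget $\tfrac{k}{n}$, so \textbf{C2} holds for every $B \ge k$ (and $p_i(c) \le \tfrac{k}{n} \le \tfrac{B}{n}$); every $c \in W$ receives total payment $\sum_i \delta_i = 1$ (\textbf{C3}); and no $c \notin W$ is ever paid for (\textbf{C4}).

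\textbf{The core of the argument: $\textbf{C5}_{\textbf{rank}}$.} Fix $c \notin W$ and $r \in [m]$, and let $b_i^{(r)}$ denote the budget of voter $i$ at the end of iteration $r$ of the outer loop of \Cref{alg:ear}. Since $c$ is never added to $W$, the exit condition of the inner while-loop in iteration $r$ yields the key invariant $\sum_{i \in N_c^r} b_i^{(r)} < 1$. Next, rewrite the left-hand side of $\textbf{C5}_{\textbf{rank}}$: combining the ``unspent money'' and ``money spent on candidates ranked worse than $r$'' terms, it equals $\sum_{i \in N_c^r}\bigl(\tfrac{B}{n} - \sum_{c' \in A_i^r} p_i(c')\bigr)$. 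Now the bookkeeping step: if voter $i$ contributed to a candidate $c'$ added in some iteration $r' \le r$, then $i \in N_{c'}^{r'}$, so $\rank(i,c') \le r' \le r$, i.e.\ $c' \in A_i^r$; therefore $\sum_{c' \in A_i^r} p_i(c') \ge \sum_{c' \text{ added by iteration } r} p_i(c') = \tfrac{k}{n} - b_i^{(r)}$, the last equality because at the end of iteration $r$ the only candidates $i$ has paid for are those added in iterations $1,\dots,r$. Taking $B = k$ and summing over $i \in N_c^r$ gives $\textbf{C5}_{\textbf{rank}}\text{-LHS} \le \sum_{i \in N_c^r} b_i^{(r)} < 1$.

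\textbf{Slackening the budget.} There are only finitely many pairs $(c,r)$ with $c \notin W$, and for each of them the quantity above is strictly below $1$, so there is an $M < 1$ bounding all of them (if $W = C$ the constraint is vacuous and any $B$, say $B = 2k$, works). Replacing $B = k$ by $B = k + \varepsilon$ for any $0 < \varepsilon < 1 - M$ increases each $\textbf{C5}_{\textbf{rank}}$-LHS by exactly $\tfrac{\varepsilon\,|N_c^r|}{n} \le \varepsilon$, keeping it $\le M + \varepsilon < 1$, while \textbf{C1}--\textbf{C4} are unchanged. Hence every output committee of EAR admits a price system satisfying \textbf{C1}--\textbf{C4} and $\textbf{C5}_{\textbf{rank}}$ with some $B > k$, i.e.\ is rank-priceable, as claimed; note the argument is oblivious to how the particular EAR variant chooses the candidate and the $(\delta_i)$, so it covers the whole family.

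\textbf{Main obstacle.} The delicate point is the bookkeeping step, precisely because EAR ``expands'' through ranks: a candidate $c'$ with $\rank(i,c') \le r$ may only be purchased in a later iteration $r' > r$, so the payment $p_i(c')$ is not reflected in $b_i^{(r)}$. This is harmless — such payments only make $\sum_{c' \in A_i^r} p_i(c')$ larger, which is the direction we want — but one must be careful to phrase this as an inequality rather than an equality, and to avoid double-counting when splitting a voter's total spending into ``candidates added by iteration $r$'' versus the rest.
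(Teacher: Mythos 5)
Your proof is correct and follows essentially the same route as the paper's: read the price system off EAR's execution, use the exit condition of the inner loop (no unselected candidate has budget $\ge 1$ remaining among $N_c^r$ at the end of iteration $r$) to get a strict $\textbf{C5}_{\textbf{rank}}$ inequality at $B=k$, and then perturb to $B = k+\varepsilon$. The paper's proof is far terser; you have merely made explicit the bookkeeping (in particular that payments made by iteration $r$ go only to candidates in $A_i^r$) that the paper leaves implicit.
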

\begin{proof}
First, we observe that during the execution of EAR (\Cref{alg:ear}) we indeed construct a pricesystem satisfying \textbf{C1}--\textbf{C4} for budget $B = k$. Further, if there was an unpicked candidate at rank $r$ with a budget of $1$ or more left among the voters ranking it at least as good as $r$, this candidate would be selected. Hence, the inequality in $\textbf{C5}_{\textbf{rank}}$ is strict. It follows that there exists $\varepsilon > 0$ such that the inequality also holds for $B = k + \varepsilon$. Thus, each EAR committee is indeed rank-priceable.
\end{proof}

As a consequence, EAR (and thus also MES) satisfies \rpjrp. 
We can further show that rank-priceability is a strictly stronger requirement than \rpjrp, even for strict preferences.
\begin{restatable}{proposition}{pricestronger}
There exist committees $W$ which satisfy \rpjrp, but not rank-priceability.
\end{restatable}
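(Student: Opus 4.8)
The plan is to exhibit a single instance with strict preferences together with a feasible committee $W$ that satisfies \rpjrp\ but for which no price system with budget $B > k$ satisfying $\textbf{C1}$--$\textbf{C4}$ and $\textbf{C5}_{\textbf{rank}}$ exists. The natural place to look is a small instance where \rpjrp\ imposes essentially no binding constraint (so that many committees trivially satisfy it), but where the combinatorics of ``who pays for what'' forces an overspend whenever we try to price the committee. Since \rpjrp\ is satisfiability-preserving and rather weak on instances with no large (near-)solid coalitions, while rank-priceability is the tool used to \emph{prove} satisfiability, one expects a gap: rank-priceability should be genuinely stronger. An instance in the spirit of \Cref{exmp:ipsc}, with $k$ around $2$ or $3$ and a handful of voters whose top choices are spread out, is the right size.

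Concretely, I would first pick the instance and the committee $W$, then verify \rpjrp\ directly using the \emph{Observation} following \Cref{def:rankPJR+}: for every rank $r \in [m]$ and every candidate $c \notin W$, I check that there is no voter group $N'$ with $\rank(i,c) \le r$ for all $i \in N'$, $|N'| \ge \frac{\ell n}{k}$, and fewer than $\ell$ committee members ranked at rank $\le r$ by some voter in $N'$. Because the instance is small, this is a finite case check; the point is to choose $W$ so that every candidate outside $W$ is ranked highly only by a group that is already ``covered'' at the relevant rank. Second, I would show rank-priceability fails by a counting argument: suppose for contradiction a price system $\{B, \{p_i\}\}$ with $B > k$ exists. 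Summing $\textbf{C3}$ over $c \in W$ gives $\sum_{i} \sum_{c \in W} p_i(c) = |W|$, and combined with $\textbf{C1}$, $\textbf{C2}$ the total available budget is $B$, which is only slightly more than $k \ge |W|$, so almost all money is spent and almost none is spent on non-committee candidates. Then I pick a well-chosen candidate $c \notin W$ and rank $r$ and show that the voters in $N_c^r$ collectively have too much unspent-or-misspent money, contradicting $\textbf{C5}_{\textbf{rank}}$: the voters who rank $c$ at rank $\le r$ either have leftover budget or have spent their budget only on candidates they rank \emph{worse} than $r$ (since $W$ was chosen so that no committee member is ranked $\le r$ by these voters), and adding these two quantities exceeds $1$.

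The main obstacle is the delicate calibration of the instance: it must simultaneously (i) have no \rpjrp-violation for the chosen $W$ — which pushes toward $W$ containing ``popular'' candidates — and (ii) force every attempted price system to concentrate enough money in the hands of some group $N_c^r$ that $\textbf{C5}_{\textbf{rank}}$ breaks — which requires that the committee members are ranked \emph{low} by the voters in that group, the opposite pull. Balancing these is the crux; the resolution is to use the freedom in the rank parameter $r$: \rpjrp\ at a small rank $r$ only ``sees'' the top $r$ choices and is easy to satisfy, whereas $\textbf{C5}_{\textbf{rank}}$ at that same small $r$ is a strong constraint because money spent on candidates outside the voter's top $r$ counts against the price system. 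I would engineer $W$ to consist of candidates that are nobody's top choice but that $\textbf{C3}$ nonetheless forces to be fully paid for, so that the paying voters' expenditures all fall into the ``ranked worse than $r$'' bucket for the critical witness $c$ and $r$. Once the instance is fixed, both halves of the argument are short finite verifications.
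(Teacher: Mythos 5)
Your high-level strategy is exactly the one the paper uses\,---\,choose $W$ so that the \rpjrp{} constraints at some small rank $r$ are slack because the relevant groups are too small to deserve more than what they already get, while forcing (via \textbf{C3} and \textbf{C4}) a large expenditure on committee members ranked worse than $r$, which then breaks $\textbf{C5}_{\textbf{rank}}$. However, as written your proof has a genuine gap: the statement is an existence claim, and you never exhibit the instance or the committee. Everything in your text describes what a witness \emph{would have to} look like; the ``delicate calibration'' you flag as the main obstacle is precisely the entire content of the proof, and it is left undone. Without a concrete profile one cannot check either half of the argument (for instance, whether the popular candidates you need for \rpjrp{} are compatible with the low-ranked, fully-paid-for candidates you need for the $\textbf{C5}_{\textbf{rank}}$ violation).

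For the record, the calibration is achievable with a very small example, essentially as you predicted. The paper takes $n=2$ voters with strict preferences $c_1 \succ c_2 \succ c_4 \succ c_5 \succ c_3$ and $c_1 \succ c_3 \succ c_4 \succ c_5 \succ c_2$, committee size $k=3$, and $W=\{c_1,c_4,c_5\}$. At rank $r=2$ each voter alone satisfies $1 \ge \ell\frac{n}{k}$ only for $\ell=1$, and each already sees $c_1\in W$ among their top two, so neither $c_2$ nor $c_3$ witnesses a \rpjrp{} violation (the higher ranks are checked similarly). On the other hand, \textbf{C3} forces total payments of $2$ for $c_4$ and $c_5$, both of which sit below rank $2$ for both voters, so some voter $i$ spends at least $1$ on candidates ranked worse than $2$; taking $c\in\{c_2,c_3\}$ with $\rank(i,c)=2$ and $r=2$, the left-hand side of $\textbf{C5}_{\textbf{rank}}$ equals $\frac{B}{2}-p_i(c_1)$, and summing the resulting constraints over both voters yields $B\le 1+2=3=k$, contradicting $B>k$. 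Your counting sketch is consistent with this, but until the instance is pinned down the proposal is a proof plan rather than a proof.
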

\begin{proof} 
    Consider the following instance with $n=2$ voters, $m=6$ candidates, and $k = 3$.
    \begin{align*}
        1&:c_1 \succ c_2 \succ c_4 \succ c_5 \succ c_3 \\
        2&:c_1 \succ c_3 \succ c_4 \succ c_5 \succ c_2
    \end{align*}
    The committee $\{c_1,c_4,c_5\}$ satisfies \rpjrp. Candidates $c_2$ and $c_3$ cannot witness a violation of \rpjrp in the second rank, since both voters only deserve $\frac{3}{2}$ candidates. However, independent of the price system, there is one voter spending at least $1$ on $c_4$ and $c_5$ and thus witnesses a rank-priceability violation for rank $2$.
\end{proof}

In our experiments (see \Cref{sec:exp-ranked}), committees satisfying rank-\pjrp, but not rank-priceability were extremely rare. In other words, rank-priceability is not a significant strengthening of rank-\pjrp and mainly serves the technical purpose of establishing the existence of rank-\pjrp committees. 

Finally, we note that \rpjrp is indeed a stronger notion than IPSC, even for strict preferences.

\begin{restatable}{proposition}{rpjrpstronger}
 Any committee satisfying \rpjrp also satisfies IPSC, but there are committees which satisfy IPSC, but not \rpjrp. The latter holds even for strict preferences.
\end{restatable}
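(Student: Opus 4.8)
The plan is to prove the two halves separately. For the implication ``\rpjrp $\Rightarrow$ IPSC'' I argue by contraposition: starting from a committee $W$ that violates IPSC, I exhibit a single rank $r$ for which $W$ violates \pjrp in the derived approval instance $(N,C,P^r,k)$; by \Cref{def:rankPJR+} this means $W$ violates \rpjrp. For the separating committee I simply reuse the strict-preference instance of \Cref{exmp:ipsc}, which also settles the final sentence.

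Concretely, suppose $W$ violates IPSC. Then there are $\ell\in\mathbb{N}$ and a generalized solid coalition $N'$ over $C'$ with $|N'|\ge\ell\tfrac{n}{k}$, $C'\not\subseteq W$, and $|\overline{C'}(N')\cap W|<\ell$; fix $c^*\in C'\setminus W$, and set $r=|C'|$ (so $1\le r\le m$, since $c^*\in C'\subseteq C$). Two facts then need checking. First, $c^*\in\bigcap_{i\in N'}A_i^r$: for every $i\in N'$, any candidate strictly preferred by $i$ to $c^*$ lies in $C'$ (otherwise it would be ranked above $c^*\in C'$, contradicting $C'\succeq_i C\setminus C'$), so $\rank(i,c^*)\le|C'|=r$. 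Second, $\bigcup_{i\in N'}A_i^r\cap W\subseteq\overline{C'}(N')$: take $c'\in W$ with $\rank(i,c')\le r$ for some $i\in N'$; if $c'\in C'$ this is clear, and if $c'\notin C'$, then all $|C'|$ candidates of $C'$ are weakly preferred to $c'$ by $i$, yet at most $|C'|-1$ candidates are strictly preferred to $c'$ by $i$, so some $c''\in C'$ is tied with $c'$, whence $c'\succeq_i c''$ and $c'\in\overline{C'}(N')$. Combining these two facts with $|\overline{C'}(N')\cap W|<\ell$ and $|N'|\ge\ell\tfrac{n}{k}$, the triple $(c^*,N',\ell)$ witnesses, via \Cref{obs:ipsc_abc}, that $W$ fails \pjrp in $(N,C,P^r,k)$, as desired.

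For the converse direction, consider the strict-preference instance of \Cref{exmp:ipsc}. As observed there, it contains no non-trivial (generalized) solid coalition, so every committee --- in particular $W=\{c_4,c_6\}$ --- satisfies IPSC. On the other hand, \Cref{ex:rank-pjrp} shows that $W$ violates \rpjrp (candidate $c_2$ is justified via the voter group $\{1,2\}$ in the rank-$2$ approval instance $P^2$). Since these preferences are strict, this also establishes the statement for strict preferences.

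The main obstacle is the choice of the rank $r=|C'|$ in the first part together with the inclusion $\bigcup_{i\in N'}A_i^r\cap W\subseteq\overline{C'}(N')$: under weak preferences one must handle ties with care, because a committee member ranked at position $\le|C'|$ by some coalition voter need not lie in $C'$ --- but, as shown, it is then forced to be tied with some member of $C'$ and hence still lies in the upper contour set. (For strict preferences this step is immediate, since $A_i^{|C'|}=C'$ for every $i\in N'$.) Everything else is routine bookkeeping.
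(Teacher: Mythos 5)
Your proof is correct and follows essentially the same route as the paper's: both arguments fix the rank $r=\lvert C'\rvert$, observe that $C'\subseteq A_i^r$ for all coalition members (yielding the witness candidate $c^*\in C'\setminus W$), and relate $\bigcup_{i\in N'}A_i^r$ to the upper contour set $\overline{C'}(N')$ before invoking \Cref{exmp:ipsc,ex:rank-pjrp} for the separation. The only cosmetic differences are that you argue by contraposition and spell out the tie-handling step (showing the inclusion $\bigcup_{i\in N'}A_i^r\cap W\subseteq\overline{C'}(N')$, which is all that is needed) in more detail than the paper, which asserts the two sets are equal.
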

\begin{proof}
As stated earlier, the latter part follows from \Cref{exmp:ipsc,ex:rank-pjrp}. 
To see that \rpjrp implies IPSC, let $W$ be a committee satisfying \rpjrp and $N' \subseteq N$ be a set of voters forming a generalized solid coalition over a set $C' \subseteq C$ of candidates. 
Then, for $r = \lvert C'\rvert$, we see that $ \bigcup_{i \in N'} A_i^r$ is precisely $\overline{C'}(N')$, since any candidate with at most $r-1$ candidates ranked better, must be ranked equal to a candidate from $C'$. Further, by definition it holds that $C' \subseteq A_i^r$ for all $i \in N'$ and thus, there must be a candidate $c \in \bigcap_{i \in N'} A_i^r \setminus W$. Hence, due to \rpjrp, at least $\frac{\lvert N'\rvert k}{n}$ candidates must be selected from $ \bigcup_{i \in N'} A_i^r$ and thus from $\overline{C'}(N')$. 
\end{proof}

To the best of our knowledge, \rpjrp is the first proportionality axiom for strict preferences that is always satisfiable and violated by STV (see \Cref{exmp:ipsc,ex:rank-pjrp}). Therefore, our results can be interpreted as an axiomatic argument against STV (and in favor of rules such as EAR and MES). 

\subsection{Other Rank-Based Notions}

Analogously to rank-\pjrp, one can define rank-versions for other approval-based axioms. In light of the positive results from \Cref{sec:approval}, it is particularly tempting to consider rank-\ejrp.  

\begin{definition}[Rank-\ejrp]
Given an instance $(N, C, P, k)$ with weak preferences, a feasible committee~$W$ satisfies rank-\ejrp if, for every $r \in [m]$, $W$ satisfies \ejrp in the instance $(N, C, P^r,k)$.
\end{definition}

However, there exist instances where rank-\ejrp (and even rank-EJR) is unsatisfiable.

\begin{example}
Consider an instance with $n=2$, $m=4$, $k = 2$, and preferences $c_1 \succ_1 c_2 \succ_1 c_3 \succ_1 c_4$ and $c_4 \succ_2 c_2 \succ_2 c_3 \succ_2 c_1$. 
For $r = 1$, rank-\ejrp would require both $c_1$ and $c_4$ to be selected, while for $r = 3$, either $c_2$ or $c_3$ needs to be selected.
\end{example}

One could also consider $\rank$-PJR, for which we note the following relationships.

\begin{restatable}{proposition} {pjrrelations}
For weak preferences, \rpjrp implies $\rank$-PJR. Furthermore, $\rank$-PJR is incomparable to IPSC and implies generalized PSC. 
\end{restatable}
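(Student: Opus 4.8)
The plan is to establish the three claims in turn. \emph{\rpjrp implies $\rank$-PJR.} It suffices to prove, for a single approval profile, that \pjrp (equivalently, IPSC, by \Cref{def:pjrp}) implies PJR, and then to apply this to each derived profile $P^r$. So suppose $W$ satisfies \pjrp but some $\ell$-cohesive group $N'$ has $\lvert\bigcup_{i\in N'}A_i\cap W\rvert<\ell$. Since $\lvert\bigcap_{i\in N'}A_i\rvert\ge\ell$ while $\lvert\bigcap_{i\in N'}A_i\cap W\rvert\le\lvert\bigcup_{i\in N'}A_i\cap W\rvert\le\ell-1$, there is a candidate $c\in\bigcap_{i\in N'}A_i\setminus W$; then the triple $(c,N',\ell)$ witnesses a violation of \pjrp in the form given in \Cref{obs:ipsc_abc}, a contradiction. (This is the known fact that IPSC is stronger than PJR for approval preferences.)

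\emph{$\rank$-PJR implies generalized PSC.} I would mirror the known proof that \rpjrp implies IPSC, invoking PJR in place of \pjrp. Let $W$ satisfy $\rank$-PJR, let $N'$ form a generalized solid coalition over $C'$ with $\lvert N'\rvert\ge\ell\frac nk$, and set $\ell^*=\min(\ell,\lvert C'\rvert)$, so that $\lvert N'\rvert\ge\ell^*\frac nk$. I work in the approval instance $(N,C,P^r,k)$ for $r=\lvert C'\rvert$. Two facts do the work. First, $C'\subseteq A_i^r$ for every $i\in N'$: for $c\in C'$, only candidates of $C'$ can lie strictly above $c$ in $\succeq_i$ (since $C'\succeq_i C\setminus C'$), so $\rank(i,c)\le\lvert C'\rvert=r$; hence $\lvert\bigcap_{i\in N'}A_i^r\rvert\ge\lvert C'\rvert\ge\ell^*$, and $N'$ is $\ell^*$-cohesive in $P^r$. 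Second, $\bigcup_{i\in N'}A_i^r=\overline{C'}(N')$: a candidate has rank at most $r=\lvert C'\rvert$ for some $i\in N'$ precisely when that voter does not rank all of $C'$ strictly above it, which is exactly the defining condition of the upper contour set. Applying PJR in $P^r$ to the $\ell^*$-cohesive group $N'$ now yields $\lvert\overline{C'}(N')\cap W\rvert=\lvert\bigcup_{i\in N'}A_i^r\cap W\rvert\ge\ell^*=\min(\ell,\lvert C'\rvert)$, which is exactly generalized PSC. (As in the analogous argument for \rpjrp, one may assume without loss of generality that $C'\subseteq\bigcap_{i\in N'}A_i$.)

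\emph{$\rank$-PJR and IPSC are incomparable.} For one direction, \Cref{exmp:ipsc,ex:rank-pjrp} already suffice: the committee $\{c_4,c_6\}$ there satisfies IPSC vacuously (there is no nontrivial generalized solid coalition, and IPSC coincides with PSC for strict preferences by \Cref{obs:ipsc_psc}), yet in the derived profile $P^2$ the group $\{1,2\}$ is $1$-cohesive---both voters rank $c_2$ second---while $\bigcup_{i\in\{1,2\}}A_i^2\cap W=\emptyset$, so PJR fails in $P^2$ and hence $\rank$-PJR fails. For the other direction I would reuse the three-voter, nine-candidate instance from \Cref{sec:prelims-weak} that separates generalized PSC from IPSC: the committee $\{c_1,c_3,c_5,c_7,c_8,c_9\}$ is shown there to violate IPSC, and one checks that in each of the (few distinct) derived profiles $P^1,\dots,P^9$ it satisfies PJR against every subgroup of the three voters, hence $\rank$-PJR. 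The laborious step is this last verification, but it stays manageable because the derived profiles stabilize immediately ($P^2=P^3=P^4$, and $P^r$ is constant for $r\ge6$) and every nontrivial cohesive group in that instance deserves at most two seats, so only a handful of cases arise; the conceptually substantive part is the implication to generalized PSC, which, once the identity $\bigcup_{i\in N'}A_i^{\lvert C'\rvert}=\overline{C'}(N')$ is in place, is a routine translation of the existing argument for \rpjrp.
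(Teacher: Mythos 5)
Your proof is correct, and for two of the three claims it matches the paper's argument: the implication to $\rank$-PJR is the (known) fact that \pjrp implies PJR applied rank-by-rank, and the implication to generalized PSC is exactly the paper's reduction to the approval instance $P^{|C'|}$ via the identity $\bigcup_{i\in N'}A_i^{|C'|}=\overline{C'}(N')$ (your explicit handling of $\ell^*=\min(\ell,|C'|)$ is in fact slightly more careful than the paper's, which speaks of $\ell$-cohesiveness throughout). The one place you genuinely diverge is the direction ``$\rank$-PJR does not imply IPSC.'' The paper disposes of this in one line by specializing to approval preferences: there, every derived profile $P^r$ equals $P$, so $\rank$-PJR collapses to PJR while IPSC collapses to \pjrp, and the known separation of PJR from \pjrp immediately yields the counterexample. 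You instead propose the weak-preference instance from \Cref{sec:prelims-weak} and defer the verification that $\{c_1,c_3,c_5,c_7,c_8,c_9\}$ satisfies PJR in every $P^r$; this does check out (I verified it), but it is the laborious route, and your parenthetical bookkeeping of which derived profiles coincide is slightly off (voter $3$'s approval set already grows between $P^2$ and $P^3$), which is a symptom of leaving that step unexecuted. The reduction to the approval special case is the cleaner argument and worth internalizing: whenever a rank-based axiom degenerates to its approval counterpart on dichotomous profiles, approval-world separations transfer for free.
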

\begin{proof} 
    First, we notice that \rpjrp implies $\rank$-PJR by definition, since \pjrp implies PJR. To see that IPSC does not imply $\rank$-PJR, we refer to \Cref{exmp:ipsc}. Here, the committee $\{d,f\}$ satisfies IPSC but not $\rank$-PJR. 
    Similarly, PJR is strictly weaker than IPSC for approval preferences and hence, also does not imply it for weak-rankings. 

    To see that $\rank$-PJR implies generalized PSC, let $N'$ with $\lvert N' \rvert \ge \frac{\ell n}{k}$ be a group of voters forming a solid coalition over $C'$ with $\lvert \overline{C'}(N') \cap W \rvert < \min(\ell, \lvert C' \rvert)$. Then, in the instance $(N,C,P^{|C'|},k)$ the group $N'$ is $\ell$-cohesive, but at most $\ell - 1$ candidates are selected out of the union of their approval sets. Thus, PJR is violated in $(N,C,P^{|C'|},k)$ and hence also $\rank$-PJR. Thus, $\rank$-PJR implies generalized PSC.
\end{proof}

For strict preferences, $\rank$-PJR implies PSC and IPSC, but is not implied by them.

\renewcommand{\figsize}{0.42}

\begin{figure}[b]
    \centering
\includegraphics[scale = \figsize]{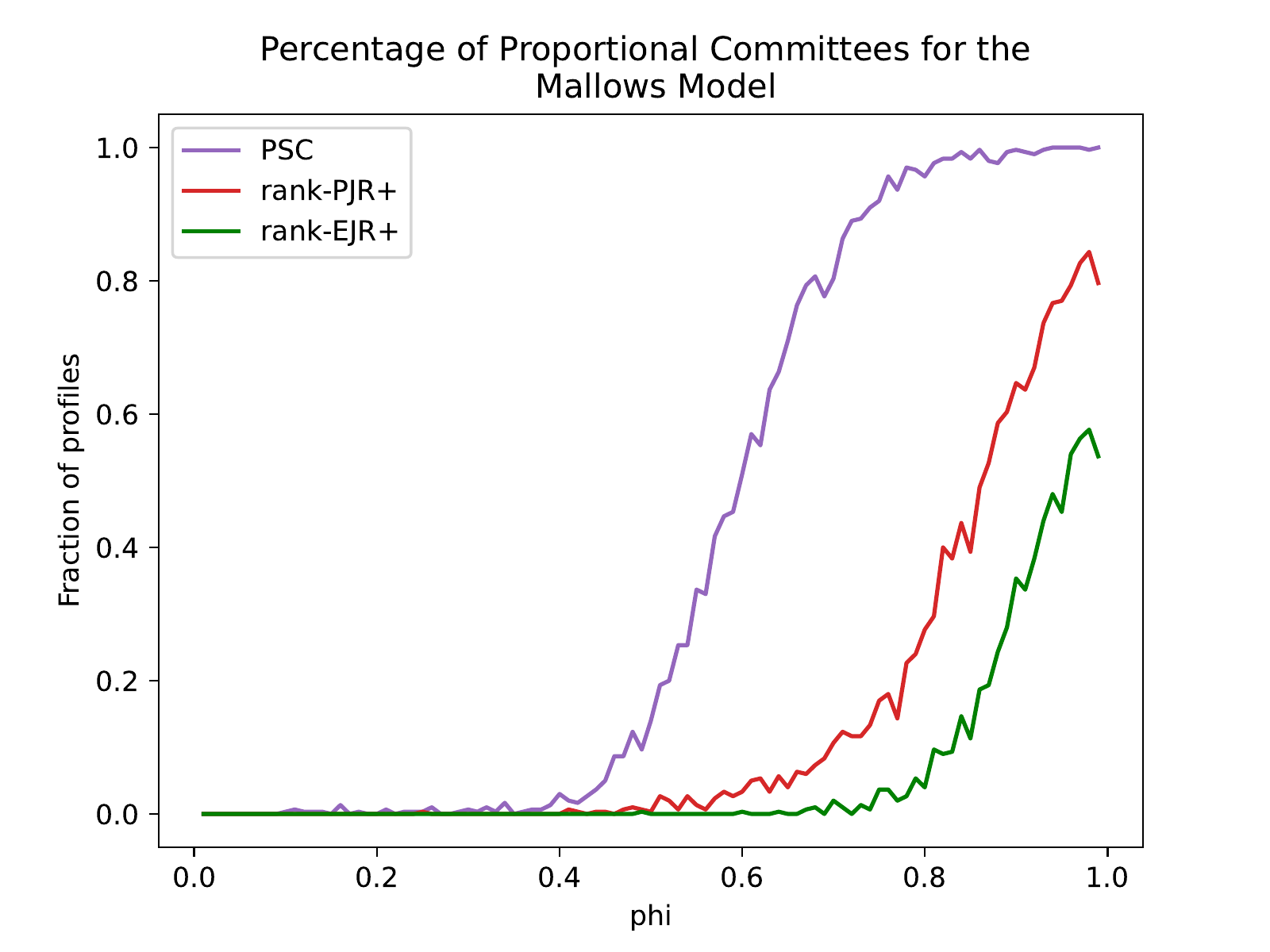}
\includegraphics[scale = \figsize]{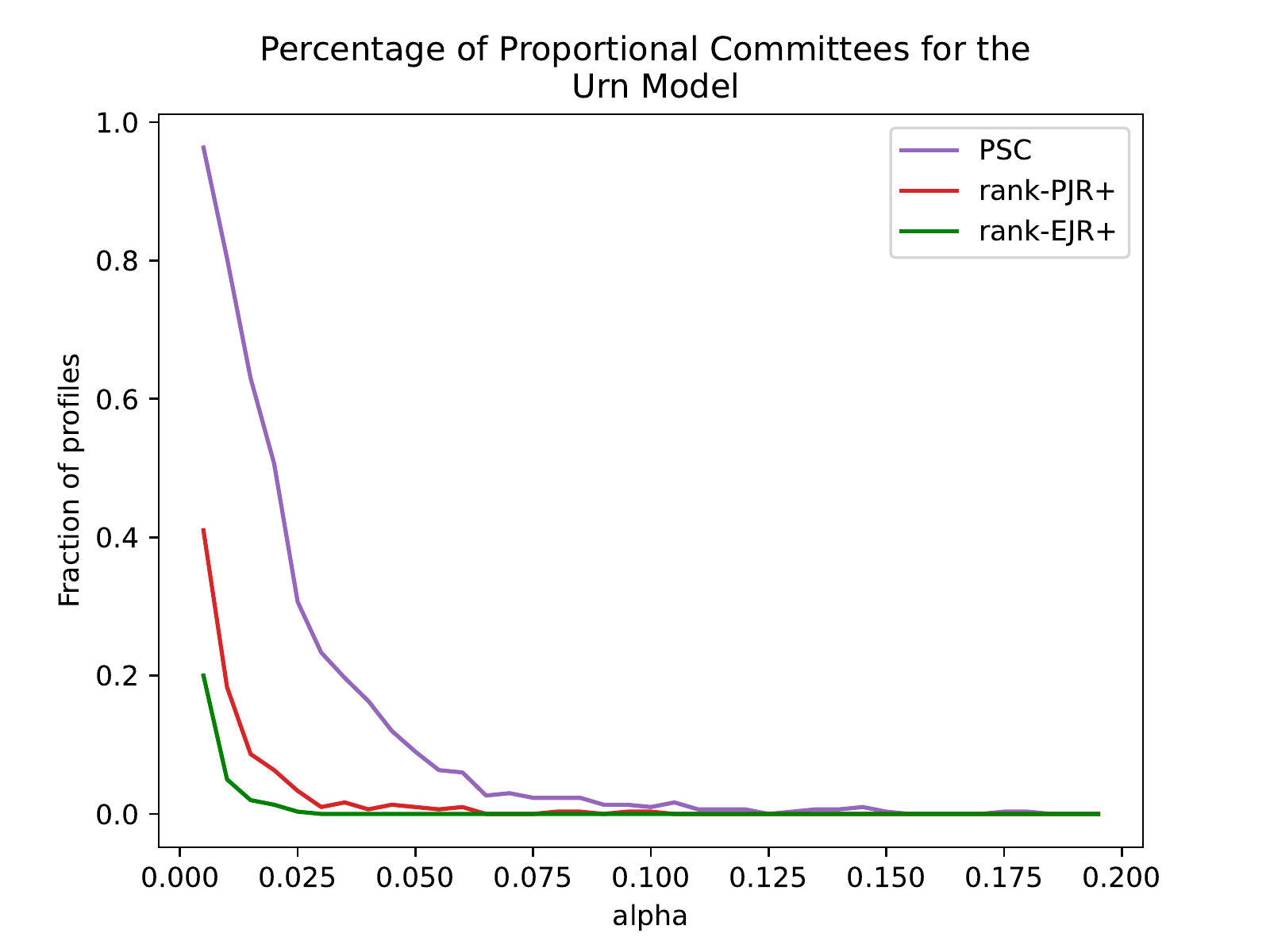}

\vspace{0.5cm}

\includegraphics[scale = \figsize]{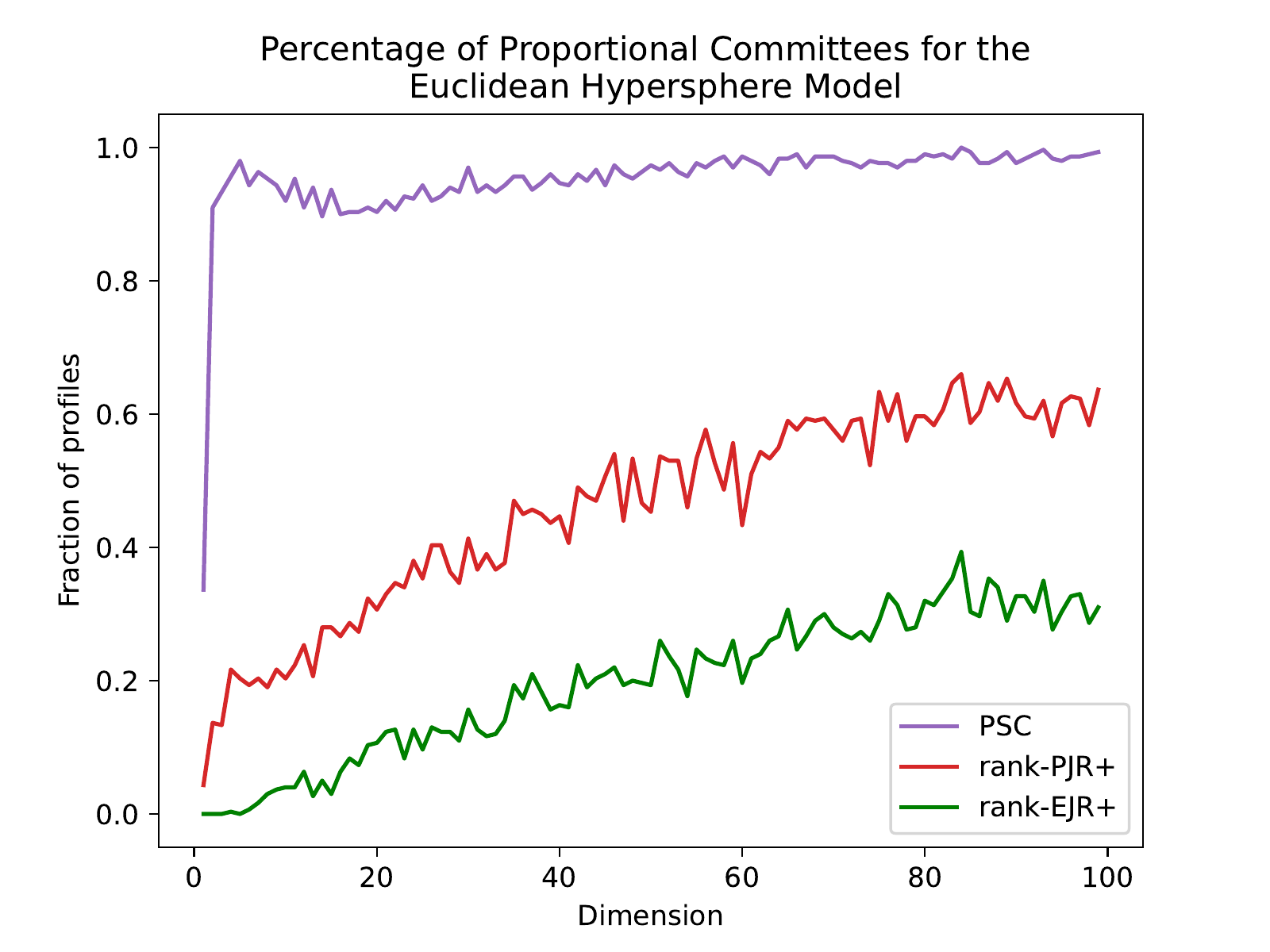}
\includegraphics[scale = \figsize]{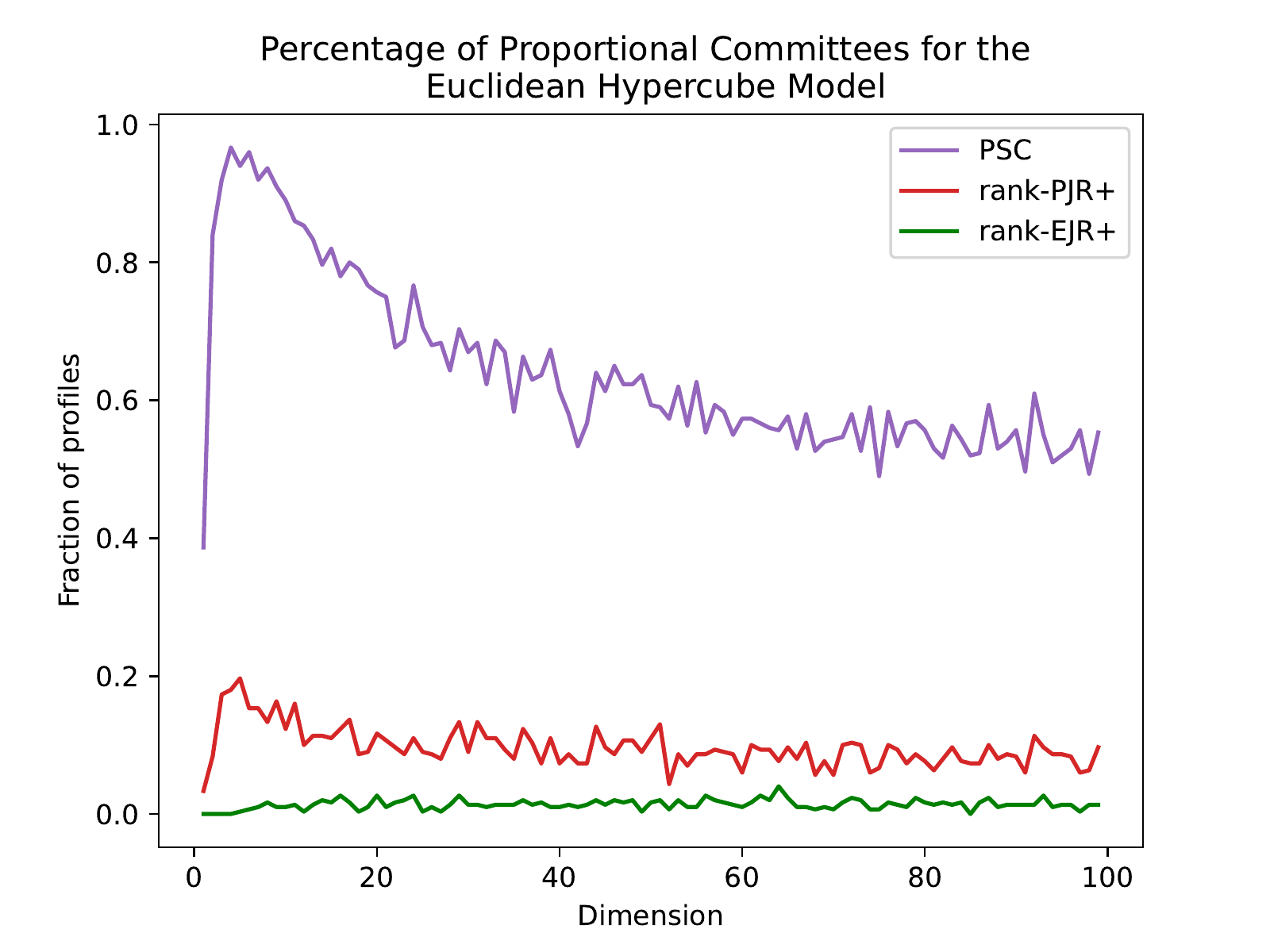}    
\caption{Experimental results for ranked ballots}
    \label{fig:ranked-experiments}
\end{figure}

\subsection{Experiments With Ranked Ballots}
\label{sec:exp-ranked}

As in the setting with approval preferences (\Cref{sec:exp-app}), we again ran experiments with randomly generated instances in order to compare the discriminative power of the proportionality axioms considered in this section. 

\paragraph{Setup}
We use a setting with $100$ voters and $50$ candidates and we generate strict rankings over all candidates using four different models: the classic Mallows model and urn model, as well as the Euclidean hypersphere and hypercube models, in which voters and candidates are uniformly distributed in a $d$-dimensional hypersphere or hypercube, with voters ranking candidates by distance. Detailed descriptions of these models can be found, e.g., in \citep{BBF+21a}.

For the Mallows model, we vary the $\phi$ parameter from $0$ to $1$; in the urn model, we vary the $\alpha$ parameter from $0$ to $0.2$; and in the Euclidean models, we vary the dimension from $1$ to $100$. For each parameter combination, we sample $300$ instances and one committee per instance, and we check whether this committee satisfies PSC, rank-\pjrp, and rank-\ejrp. 

\paragraph{Results}
The results are presented in \Cref{fig:ranked-experiments}. We find that both rank-\pjrp and rank-\ejrp are significantly more discriminating than PSC. Moreover, rank-\ejrp is harder to satisfy than rank-\pjrp (but committees satisfying the former axiom are not guaranteed to exist).

\clearpage

\section{Extensions}
\label{sec:extensions}

Finally, we demonstrate the value of our robustness approach by applying it to three related settings.

\subsection{Proportionality Degree}
\label{sec:prop-degree}

In its original definition \citep{Skow21a}, 
the proportionality degree measures how well cohesive groups are represented. Namely, for some function $f\colon \mathbb{N} \to \mathbb{R}$, a voting rule has a proportionality degree of $f$ if for every $\ell$-cohesive group $N'$, the voters in $N'$ approve, on average, at least $f(\ell)$ candidates in the committee. In order to make this definition more robust, we move away from arguments about cohesive groups and argue about unselected candidates instead.  
\begin{definition} \label{def:represented}
Given an instance with approval preferences and a committee $W$, a candidate $c \in C \setminus W$ is \emph{($f(\ell), \ell)$-represented} if there is no group of voters $N' \subseteq N_c$ with $\lvert N' \rvert \ge \frac{\ell n}{k}$ and 
\[
\frac{1}{\lvert N'\rvert }\sum_{i \in N'} \lvert A_i \cap W \rvert  < f(\ell). 
\]
A voting rule is $f$-representative for a function $f\colon \mathbb{N} \to \mathbb{R}$ if it only produces committees for which all unchosen candidates are $(f(\ell), \ell)$-represented.
\end{definition}

This is indeed a stronger notion than the proportionality degree.
\begin{restatable}{proposition}{propdegreerel}
 Any $f$-representative voting rule also has a proportionality degree of $f$.
\end{restatable}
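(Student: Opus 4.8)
The plan is to unwind both definitions and reduce the claim to a short case analysis. Fix an $f$-representative rule, an instance $(N,C,P,k)$ with approval preferences, a committee $W$ it outputs, and an $\ell$-cohesive group $N' \subseteq N$; by definition of cohesiveness we have $|N'| \ge \frac{\ell n}{k}$ and $\big|\bigcap_{i \in N'} A_i\big| \ge \ell$. We must show $\frac{1}{|N'|}\sum_{i \in N'}|A_i \cap W| \ge f(\ell)$. The case split is on whether $\bigcap_{i \in N'} A_i \subseteq W$ or not.

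First I would handle the case $\bigcap_{i \in N'} A_i \not\subseteq W$. Pick any $c \in \big(\bigcap_{i \in N'} A_i\big) \setminus W$. Then $c \in C \setminus W$, and every voter in $N'$ approves $c$, so $N' \subseteq N_c$; moreover $|N'| \ge \frac{\ell n}{k}$. Since the rule is $f$-representative, $c$ is $(f(\ell),\ell)$-represented, so \Cref{def:represented} applied to the group $N'$ forbids $\frac{1}{|N'|}\sum_{i \in N'}|A_i \cap W| < f(\ell)$. Hence the average is at least $f(\ell)$, which is exactly what we want.

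The remaining case is $\bigcap_{i \in N'} A_i \subseteq W$. Here every voter $i \in N'$ satisfies $\bigcap_{j \in N'} A_j \subseteq A_i \cap W$, so $|A_i \cap W| \ge \big|\bigcap_{j \in N'} A_j\big| \ge \ell$, and therefore the average over $N'$ is also at least $\ell$. To conclude we invoke $f(\ell) \le \ell$, which holds for every proportionality-degree function considered in the literature (indeed, if $f(\ell) > \ell$ for some $\ell$, no rule can have proportionality degree $f$, so the statement is vacuous in that range).

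The only mildly delicate point — and the one I would flag as the main obstacle — is the second case: the $f$-representativeness hypothesis provides no witness when there is no unselected commonly-approved candidate, so one must separately observe that in that situation the group is already represented by at least $\ell$ committee members outright, and then lean on the standard normalization $f(\ell) \le \ell$. Everything else is a direct translation between the two definitions.
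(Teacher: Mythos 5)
Your proof is correct and is essentially the paper's argument, just phrased as a direct case split rather than by contradiction: the paper likewise observes that since $f(\ell)\le\ell$, an average below $f(\ell)$ forces some commonly approved candidate to be unselected, and that candidate then witnesses a failure of $(f(\ell),\ell)$-representation. The normalization $f(\ell)\le\ell$ that you flag is also invoked (implicitly) in the paper's proof, so there is no substantive difference.
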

\begin{proof}
Assume that we have a rule which is $f$-representative, but has a proportionality degree of less than $f$. Then there is a committee $W$ in the output of the rule, such that an $\ell$-cohesive group $N'$ exists with an average degree of less than $f(\ell)$. Since $f(\ell)$ can be at most $\ell$ 
one candidate $c$ from the group of candidates $N'$ is cohesive over, was not selected. However, then this candidate also witnesses that the committee $W$ is not $(f(\ell), \ell)$-representative, a contradiction.
\end{proof}

\citet{SFF+17a} have shown that any voting rule satisfying EJR has a proportionality degree of $\frac{\ell - 1}{2}$ and \citet{AEH+18a} have shown that PAV has a proportionality degree of $\ell-1$. 
We can show analogous results for \ejrp and representativeness.

\begin{restatable}{proposition}{propdegreel}
 Any voting rule satisfying \ejrp is $\frac{\ell - 1}{2}$-representative.
 \label{prop:ejrp_propdeg}
\end{restatable}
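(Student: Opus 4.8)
I would follow the same strategy used by \citet{SFF+17a} to show that EJR implies proportionality degree $\frac{\ell-1}{2}$, but rephrase the argument so it only ever invokes an unselected candidate rather than a full cohesive group. So let $W$ be a committee produced by an \ejrp-rule, let $\ell \in \mathbb{N}$, and let $c \in C \setminus W$; I want to show that there is no group $N' \subseteq N_c$ with $|N'| \ge \frac{\ell n}{k}$ whose members approve, on average, fewer than $\frac{\ell-1}{2}$ committee members. Suppose for contradiction such an $N'$ exists. The first step is to pass to a sub-witness for \ejrp: since the average number of approved committee members in $N'$ is less than $\frac{\ell-1}{2}$, a counting/averaging argument should let me extract a sub-threshold $\ell' \le \ell$ and a subgroup $N'' \subseteq N'$ with $|N''| \ge \frac{\ell' n}{k}$ such that \emph{every} voter in $N''$ approves strictly fewer than $\ell'$ committee members. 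Together with $c \in \bigcap_{i \in N''} A_i$ (which holds because $N'' \subseteq N_c$), this contradicts \ejrp.

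**The key counting step.** The heart of the proof is producing $\ell'$ and $N''$. I would order the voters in $N'$ by the number of committee members they approve, say $t_1 \le t_2 \le \dots \le t_s$ where $s = |N'| \ge \frac{\ell n}{k}$, and set the threshold at $\ell' = t_j + 1$ for a suitable index $j$. For each candidate $\ell' \in \{1,\dots,\ell\}$, the set $N''_{\ell'} = \{i \in N' : |A_i \cap W| < \ell'\}$ is a candidate witness group; \ejrp is violated as soon as $|N''_{\ell'}| \ge \frac{\ell' n}{k}$ for some $\ell' \le \ell$. So suppose not: $|N''_{\ell'}| < \frac{\ell' n}{k}$ for all $\ell' \in [\ell]$, i.e. at most $\frac{\ell' n}{k} - 1$ voters in $N'$ approve fewer than $\ell'$ committee members, equivalently at least $|N'| - \frac{\ell' n}{k} + 1$ voters approve at least $\ell'$. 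Summing these lower bounds over $\ell' = 1, \dots, \ell$ (or integrating the "layer-cake" decomposition $\sum_{i\in N'}|A_i\cap W|\ge\sum_{i\in N'}\min(|A_i\cap W|,\ell)=\sum_{\ell'=1}^{\ell}|\{i\in N':|A_i\cap W|\ge\ell'\}|$) should give
\[
\sum_{i \in N'} |A_i \cap W| \;\ge\; \sum_{\ell'=1}^{\ell}\Bigl(|N'| - \tfrac{\ell' n}{k} + 1\Bigr) \;=\; \ell(|N'|+1) - \tfrac{n}{k}\cdot\tfrac{\ell(\ell+1)}{2}.
\]
Dividing by $|N'|$ and using $|N'| \ge \frac{\ell n}{k}$ (so $\frac{n}{k}/|N'| \le \frac{1}{\ell}$) should collapse the right-hand side to at least $\ell + 1 - \frac{\ell+1}{2} = \frac{\ell+1}{2} > \frac{\ell-1}{2}$, contradicting the assumed average being below $\frac{\ell-1}{2}$. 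Hence some $\ell' \le \ell$ does give a valid \ejrp witness group $N''_{\ell'}$, and combined with the commonly-approved candidate $c$ this contradicts that $W$ satisfies \ejrp.

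**Main obstacle.** The only delicate point is getting the arithmetic in the averaging step to land exactly on the bound $\frac{\ell-1}{2}$ rather than something slightly weaker, and making sure the floors hidden in "$|N''_{\ell'}| < \frac{\ell' n}{k}$ means $|N''_{\ell'}| \le \lceil \frac{\ell' n}{k}\rceil - 1$" are handled so the inequality chain is not lossy — in particular that the "$+1$" terms are kept (they are what push $\frac{\ell-1}{2}$ up to a strict inequality) and that the substitution $\frac{n}{k} \le \frac{|N'|}{\ell}$ is applied to the right sum, namely $\frac{n}{k}\cdot\frac{\ell(\ell+1)}{2}$, which divided by $|N'|$ is at most $\frac{\ell+1}{2}$. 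Everything else is a direct translation of the known EJR $\Rightarrow$ proportionality-degree argument, with "cohesive group" replaced by "group sharing an unselected approved candidate", which is exactly the robustification pattern the paper has been using throughout. I would also remark that the same computation, started from \ejrp applied to the single largest-$\ell$ witness, recovers the statement; the averaging formulation above is just the cleanest bookkeeping.
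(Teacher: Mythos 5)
Your proposal is correct and is essentially the same argument as the paper's: the paper's proof simply observes that the known EJR~$\Rightarrow$~proportionality-degree-$\frac{\ell-1}{2}$ argument of \citet{SFF+17a} never uses cohesiveness beyond the existence of a commonly approved unselected candidate, and your layer-cake/averaging computation is precisely that argument written out explicitly (with the minor simplification available that the strict inequalities $\lvert N''_{\ell'}\rvert < \frac{\ell' n}{k}$ already yield a strict bound, so no floor bookkeeping is needed).
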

\begin{proof}
   The proof that EJR implies a proportionality degree of $\frac{\ell-1}2$ \citep{SFF+17a} does not use the cohesiveness of the voters approving the candidate outside the committee. Hence, the same proof also works to show the proposition. 
\end{proof}

\begin{restatable}{proposition}{propdegreemes}
    MES is $\frac{\ell-1}{2}$-representative and PAV is $\ell-1$-representative.
\end{restatable}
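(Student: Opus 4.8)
The plan is to handle the two rules separately; in both cases the point is that an existing proportionality‑degree argument carries over essentially verbatim once one notices that it uses cohesiveness only to produce a commonly‑approved unchosen candidate, which the representativeness condition already supplies.

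For MES the statement is almost immediate. We have shown that MES satisfies \ejrp, and \Cref{prop:ejrp_propdeg} says that every rule satisfying \ejrp is $\tfrac{\ell-1}{2}$-representative. Composing the two facts gives that MES is $\tfrac{\ell-1}{2}$-representative, so no further work is needed for this half.

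For PAV I would adapt the proof of \citet{AEH+18a} that PAV has proportionality degree $\ell-1$. Suppose toward a contradiction that $W$ is in the output of PAV (so $|W|=k$ and $W$ maximizes $\sum_{i\in N}\sum_{j=1}^{|A_i\cap W|}\tfrac1j$) but some $c\in C\setminus W$ fails to be $(\ell-1,\ell)$-represented, as witnessed by a group $N'\subseteq N_c$ with $|N'|\ge\tfrac{\ell n}{k}$ and average satisfaction $\bar s:=\tfrac1{|N'|}\sum_{i\in N'}|A_i\cap W|<\ell-1$. Adding $c$ to any committee $W''$ raises the PAV objective by $\sum_{i\in N_c}\tfrac1{|A_i\cap W''|+1}$; since $|A_i\cap(W\setminus\{w\})|\le|A_i\cap W|$ for every $w\in W$, the gain of adding $c$ to $W\setminus\{w\}$ is at least the gain of adding $c$ to $W$, which by $N'\subseteq N_c$, convexity of $x\mapsto\tfrac1{x+1}$ (Jensen), and $\bar s<\ell-1$ is at least $\sum_{i\in N'}\tfrac1{|A_i\cap W|+1}\ge\tfrac{|N'|}{\bar s+1}>\tfrac{|N'|}{\ell}\ge\tfrac nk$. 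Meanwhile $\sum_{w\in W}\sum_{i\in N_w}\tfrac1{|A_i\cap W|}=\lvert\{i\in N:A_i\cap W\neq\emptyset\}\rvert\le n$, so some $w\in W$ has marginal loss $\sum_{i\in N_w}\tfrac1{|A_i\cap W|}\le\tfrac nk$. Then $(W\setminus\{w\})\cup\{c\}$ is feasible and has strictly larger PAV objective than $W$, contradicting optimality. Hence every unchosen candidate is $(\ell-1,\ell)$-represented, i.e., PAV is $(\ell-1)$-representative.

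The only place the original argument invoked cohesiveness was to guarantee an unchosen candidate approved by the whole group; here that candidate is given, so nothing else changes. The main care points are the chain of strict inequalities — a violation of $(\ell-1,\ell)$-representedness yields $\bar s<\ell-1$ strictly, which is exactly what pushes the marginal gain of $c$ strictly above $\tfrac nk$ and hence above the marginal loss of $w$ — together with the monotonicity observation that deleting $w$ first only increases the gain of subsequently adding $c$. I expect the write-up to be short; the only mild obstacle is reproducing the PAV swap analysis carefully enough that the improvement comes out strict.
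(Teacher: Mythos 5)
Your proposal is correct and follows essentially the same route as the paper: the MES half is obtained by composing the fact that MES satisfies \ejrp with \Cref{prop:ejrp_propdeg}, and the PAV half is the standard swap argument for PAV's proportionality degree, observing that cohesiveness was only ever used to supply a commonly approved unchosen candidate. The only difference is that the paper disposes of the PAV case by citation, whereas you reconstruct the swap analysis in full (correctly, including the strictness of the key inequality via $\bar s+1<\ell$).
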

\begin{proof}
    The first statement follows from \Cref{prop:ejrp_propdeg}, the second again follows from the proof that PAV has a proportionality degree of $\ell-1$ \citep{LaSk22a}, since this proof does not use the cohesiveness of voter groups.
\end{proof}

An important advantage of the new notion is that we can efficiently check whether a given committee is $f$-representative.
\begin{restatable}{proposition}{propdegreever}
 Given an instance with approval preferences and a committee $W$, we can check in polynomial time whether $W$ is $f$-representative.
\end{restatable}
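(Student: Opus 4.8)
The plan is to reduce the verification task to a polynomial number of elementary size-constrained minimization problems. First observe that the quantifier over $\ell$ in \Cref{def:represented} is effectively finite: if a group $N' \subseteq N_c$ witnesses that $c$ is not $(f(\ell),\ell)$-represented, then $\ell n / k \le |N'| \le n$, so $\ell \le k$. Consequently, $W$ is $f$-representative if and only if for every candidate $c \in C \setminus W$ and every $\ell \in [k]$ there is no group $N' \subseteq N_c$ with $|N'| \ge \lceil \ell n/k\rceil$ and $\sum_{i \in N'}\bigl(|A_i \cap W| - f(\ell)\bigr) < 0$; equivalently,
\[
\min\Bigl\{\, \textstyle\sum_{i \in N'}\bigl(|A_i \cap W| - f(\ell)\bigr) \;:\; N' \subseteq N_c,\ |N'| \ge \lceil \ell n/k\rceil \,\Bigr\} \;\ge\; 0
\]
must hold for all such $c$ and $\ell$.

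Each of these $O(mk)$ minima can be computed in polynomial time. Fix a pair $(c,\ell)$, set $t=\lceil \ell n/k\rceil$ and $g(i)=|A_i\cap W|-f(\ell)$ for $i\in N_c$, and sort the multiset $\{g(i): i\in N_c\}$ into a nondecreasing sequence $g_{(1)}\le g_{(2)}\le\dots\le g_{(|N_c|)}$. Since the objective is a sum of independent per-voter terms constrained only by a cardinality lower bound, an optimal $N'$ of any fixed size consists of the voters with the smallest $g$-values; hence the minimum in question equals $\min_{t\le s\le |N_c|}\sum_{j=1}^{s}g_{(j)}$. The prefix sums $\sum_{j=1}^{s}g_{(j)}$ are unimodal in $s$ — strictly decreasing while $g_{(s)}<0$, nondecreasing once $g_{(s)}\ge 0$ — so this value is attained at $s=\max\{t,s_0\}$, where $s_0$ is the number of $i\in N_c$ with $g(i)<0$, and is obtained in $O(|N_c|)$ time after an $O(|N_c|\log|N_c|)$ sort.

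Putting this together, we compute $|A_i\cap W|$ for all voters once (time $O(nm)$), then for each $c\notin W$ sort the relevant values and, for each $\ell\in[k]$, evaluate the minimum as above; we accept exactly when none of the $O(mk)$ minima is negative. The overall running time is clearly polynomial, which proves the claim.

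There is no genuine obstacle here: the only point requiring (routine) justification is the exchange argument that the size-constrained minimum of a separable objective is realized by taking the smallest per-voter contributions, which is immediate. We do implicitly assume that $f$ is given in a form that lets us evaluate $f(\ell)$ for $\ell\in[k]$ — as is the case for the concrete functions $\tfrac{\ell-1}{2}$ and $\ell-1$ occurring above — but this is harmless.
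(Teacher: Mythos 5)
Your proposal is correct and takes essentially the same approach as the paper: for each candidate $c\notin W$ and each $\ell\in[k]$, the witness check reduces to greedily taking the voters in $N_c$ with the fewest approved committee members, justified by the same exchange/separability argument. Your version merely spells out more carefully (via the sum reformulation and the unimodality of prefix sums) why restricting to the threshold group size suffices, which the paper states in one sentence.
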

\begin{proof}
    For a group of voters $N'$ of size $\frac{\ell n}{k}$ witnessing a proportionality degree of at most $f(\ell)$ it is always sufficient to take the $\ell \frac{n}{k}$ voters approving $c$ with the least approved candidates in the committee, since the chosen candidates in $N'$ do not depend on each other.
\end{proof}

\noindent By contrast, the proportionality degree is coNP-complete to verify \citep{JaFa22a}.

\subsection{Participatory Budgeting}
\label{sec:pb}

Participatory Budgeting (PB) is a democratic innovation that enables citizens to decide on budget allocations of cities or districts \citep{Caba04a}. PB elections can be treated as a generalization of multiwinner voting in which the candidates (which are now referred to as \textit{projects)} have a \textit{cost} and the total cost of  selected projects cannot exceed a given budget limit. 

Several papers have suggested generalizations of proportionality axioms from multiwinner voting to the PB setting. Here, we focus on PB instances with approval preferences and we assume 
that the utility of a voter is given by the cost of the approved projects in the final allocation \citep{ALT18a, AzLe21a}.% 
\footnote{
Another common assumption is to measure the utility of a voter via the number of approved projects in the final allocation \citep{PPS21a, LCG22a}. Other utility functions have been considered by \citet{BFL+23a} and \citet{MREL23a}.}
Our robustness approach can be applied to notions like \textit{EJR/PJR up to one project} \citep{PPS21a} and \textit{EJR/PJR up to any project} \citep{BFL+23a}.

To facilitate the comparison of axioms, we state the definition of \textit{EJR up to any project} from the paper by \citet{BFL+23a}. 
\begin{definition}\label{def:pb-ejrx} 
    Given an approval-based PB instance, a committee $W$ satisfies \emph{EJR up to any project} (for cost utilities) if for every group $N' \subseteq N$ of voters  and group $T \subseteq \bigcap_{i \in N'} A_i$ of projects  such that $\sum_{p \in T} c(p) \le \frac{\lvert N'\rvert b}{n}$, there is a voter $i \in N'$ with 
    \[
    \sum_{p' \in A_i \cap W} c(p') + c(p) >  \sum_{p' \in T} c(p') \quad \text{ for all } p \in T.
    \]
\end{definition}
A robust strengthening of this axiom can be defined as follows. (Here, $c(p)$ denotes the cost of project $p$ and $B$ denotes the budget limit.)
\begin{definition}\label{def:pb-ejr+x}
    Given an approval-based PB instance, a feasible committee 
    $W$ satisfies \emph{\ejrp up to any project} (for cost utilities) if for every group $N' \subseteq N$ of voters and $p \in \bigcap_{i \in N'} A_i\setminus W$,
    there is an $i \in N'$ with $$c(A_i \cap W) + c(p) > \frac{\lvert N'\rvert b}{n}.$$
\end{definition}

The axiom in \Cref{def:pb-ejr+x} is stronger than EJR up to any project, but can be satisfied by an appropriate generalization of MES to the PB setting, using the approach of \citep{PPS21a}

\begin{definition}[MES for PB]
	Given a PB instance, MES starts with an empty committee $W = \emptyset$ and assigns a budget $b_i = \frac{b}{n}$ to each voter $i \in N$. A project $p \notin W$ is $\rho$-affordable if 
\[
\sum_{i \in N_p} \min(b_i, \rho c(p)) = c(p).
\]
It then selects the project $p_j$ which is $\rho$-affordable for the minimum $\rho$ and sets the budget $b_i$ is updated to 	$b_i - \min(b_i, \rho c(p))$ for every approver  $i$ of $p$.
MES continues until no $\rho$-affordable projects are left.
\end{definition}
This generalization now satisfies \ejrp up to any project.
\begin{restatable}{proposition}{propmes}
MES generalized to PB with cost utilities satisfies \ejrp up to any project.
\end{restatable}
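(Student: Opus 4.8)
The plan is to mimic the standard argument that MES (Method of Equal Shares) satisfies EJR-style axioms in the approval setting (the proof by \citet{PeSk20a}, and its PB generalization by \citet{PPS21a}), but to track \emph{costs} rather than \emph{cardinalities} and to exploit the fact that the new axiom only asks about a \emph{single} unchosen project $p$ rather than a cohesive bundle $T$. Suppose for contradiction that the committee $W$ output by MES violates \ejrp up to any project: there is a group $N' \subseteq N$ and a project $p \in \bigcap_{i \in N'} A_i \setminus W$ such that $c(A_i \cap W) + c(p) \le \frac{|N'| b}{n}$ for every $i \in N'$. The first step is to bound, for each $i \in N'$, how much budget voter $i$ has spent by the end of the execution: since in MES every euro a voter spends goes to a project she approves, and she never pays more than the cost of that project, the total spent by $i$ is at most $c(A_i \cap W)$ (in fact, spending on project $p'$ is at most $c(p')$ by the $\rho$-affordability cap, and she only pays for approved projects). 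Hence each $i \in N'$ has remaining budget at least $\frac{b}{n} - c(A_i \cap W) \ge \frac{b}{n} - \left(\frac{|N'| b}{n} - c(p)\right)$... wait — this is not yet the right bookkeeping; the cleaner route is the aggregate one below.

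The cleaner second step is to sum over $N'$. The total budget initially assigned to $N'$ is $\frac{|N'| b}{n}$, and the total spent by $N'$ by termination is at most $\sum_{i \in N'} c(A_i \cap W)$. Therefore the \emph{collective} remaining budget of $N'$ at the end is at least
\[
\frac{|N'| b}{n} - \sum_{i \in N'} c(A_i \cap W).
\]
Now I would argue this quantity is at least $c(p)$: if not, i.e. if $\sum_{i \in N'} c(A_i \cap W) > \frac{|N'| b}{n} - c(p)$, then some voter $i \in N'$ must have $c(A_i \cap W) > \frac{|N'| b}{n}/|N'| - c(p)/|N'|$, which does \emph{not} immediately give the needed per-voter contradiction — so instead I want to upgrade the hypothesis. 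The hypothesis says $c(A_i \cap W) \le \frac{|N'|b}{n} - c(p)$ for \emph{all} $i \in N'$; I actually need a statement about the collective leftover exceeding $c(p)$, so the right move is to look at the last moment in the execution and use a \textbf{monotonicity of leftover budget} argument: consider the group's leftover budget $r(t) = \sum_{i \in N'} b_i(t)$ as a function of the step $t$. It starts at $\frac{|N'|b}{n}$ and decreases. The key claim is that at \emph{every} step where $r(t) \ge c(p)$, project $p$ is $\rho$-affordable for $\rho = c(p)/$(something) — more precisely, the voters in $N'$ alone could pay for $p$, so $p$ is $\rho$-affordable for some finite $\rho$, hence MES would have had the option to buy $p$ (or a cheaper affordable project) at that step. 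Since $p \notin W$, MES never bought $p$; so the execution must have ended (no $\rho$-affordable project remained) at a step where $r(t) < c(p)$, i.e.\ $N'$'s final leftover is strictly below $c(p)$. Combined with the leftover lower bound $\frac{|N'|b}{n} - \sum_{i \in N'} c(A_i\cap W)$, this yields $\sum_{i \in N'} c(A_i \cap W) > \frac{|N'|b}{n} - c(p)$, so \emph{some} $i^\star \in N'$ has $c(A_{i^\star} \cap W) > \frac{|N'|b}{n}/|N'|\cdot$ — no: averaging gives only $c(A_{i^\star}\cap W) > \frac{b}{n} - \frac{c(p)}{|N'|}$, which is weaker than what \Cref{def:pb-ejr+x} negates.

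So the real subtlety — and the step I expect to be the main obstacle — is reconciling the per-voter form of the axiom ($c(A_i \cap W) + c(p) > \frac{|N'|b}{n}$ for \emph{some} $i$) with what the MES execution naturally controls (a \emph{collective} spending/leftover quantity). The resolution, following \citet{PeSk20a}'s EJR proof, is to not fix $N'$ up front but to choose the \emph{worst} such group: among all $(N', p)$ pairs witnessing a violation, pick one where $\frac{|N'|b}{n}$ is as small as possible, equivalently where the per-voter "entitlement" is tight; then show that whenever MES charges a voter $i \in N'$ for some project, it charges her at most $\frac{c(p)}{|N'|}\cdot(\text{number of her chosen projects})$-worth relative to the threshold, so that if she had spent enough to make her own inequality fail she would have been able to co-fund $p$ strictly earlier — contradicting minimality or the fact that $p$ was never bought. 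I would structure the final write-up as: (i) the aggregate-leftover bound; (ii) the claim that while $N'$ retains $\ge c(p)$ collectively, $p$ is affordable, so MES terminates with $N'$ holding $< c(p)$; (iii) a per-voter averaging/extremal-choice argument to convert this into the existence of the required $i \in N'$; the delicate point being exactly (iii), where one must use the minimality of the chosen witness (or directly the $\rho$-affordability cap $\min(b_i, \rho\, c(p))$) to rule out the "one voter overspent while the others underspent" scenario.
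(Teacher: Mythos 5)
Your steps (i) and (ii) agree with how the paper opens: from the assumed violation one gets a group $N'\subseteq N_p$ with $c(A_i\cap W)+c(p)\le \frac{\lvert N'\rvert b}{n}$ for every $i\in N'$, and since $p$ is unaffordable at termination, $\sum_{i\in N'}b_i<c(p)$, i.e.\ the group has collectively spent more than $\frac{\lvert N'\rvert b}{n}-c(p)$. But the step you flag as ``the main obstacle'' --- converting this collective statement into a contradiction with the per-voter hypothesis --- is precisely where your proposal stops short, and the fix you gesture at (choosing a witness pair $(N',p)$ minimizing $\frac{\lvert N'\rvert b}{n}$ and arguing about how MES ``charges'' voters in it) is not worked out and is not what makes the argument go through. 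Averaging the leftover budget over $N'$, as you observe yourself, only yields $c(A_{i^\star}\cap W)>\frac{b}{n}-\frac{c(p)}{\lvert N'\rvert}$ for some $i^\star$, which does not contradict anything.

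The missing idea is to compare the group's total \emph{payment} with its total \emph{cost-utility} rather than with its entitlement. The per-voter hypothesis gives $\sum_{i\in N'}c(A_i\cap W)\le \lvert N'\rvert\bigl(\frac{\lvert N'\rvert b}{n}-c(p)\bigr)$, so the ratio of total payment to total utility exceeds $\frac{1}{\lvert N'\rvert}$. Since under MES a voter's payment toward a project bought at level $\rho$ is $\min(b_i,\rho\, c(p'))\le\rho\, c(p')$, i.e.\ at most $\rho$ per unit of cost-utility, some project must have been selected at some $\rho>\frac{1}{\lvert N'\rvert}$. Now look at the moment that project was selected: up to then every voter $i\in N'$ has paid at rate at most $\frac{1}{\lvert N'\rvert}$ per unit of utility, hence has spent at most $\frac{c(A_i\cap W)}{\lvert N'\rvert}\le\frac{b}{n}-\frac{c(p)}{\lvert N'\rvert}$ and retains at least $\frac{c(p)}{\lvert N'\rvert}$; therefore $p$ was $\frac{1}{\lvert N'\rvert}$-affordable at that moment, contradicting MES's choice of the minimal $\rho$. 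Note that the per-voter form of the hypothesis is used here exactly once, to bound each individual voter's spending before the offending purchase --- no averaging and no minimality of the witness group is needed. Without this price-per-utility accounting your outline does not close, so as written the proof has a genuine gap.
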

\begin{proof} 
Assume that the output $W$ of MES does not satisfy \ejrp up to any project. Thus, there is a project $p$ and voters $N' \subseteq N_p$ with $c(A_i \cap W) + c(p) \le \frac{\lvert N' \rvert b}{n}$. Since $p$ is no longer affordable, we know that $\sum_{i \in N'} b_i < c(p)$.
Thus, we get that
\begin{align*}
    \frac{\sum_{i \in N'} \frac{b}{n} - b_i}{\sum_{i \in N'} c(A_i \cap W)} > \frac{\lvert N'\rvert \frac{b}{n} -c(p)}{\sum_{i \in N'} c(A_i \cap W)} \ge \frac{\lvert N'\rvert \frac{b}{n} -c(p)}{\sum_{i \in N'} \lvert N'\rvert \frac{b}{n} -c(p)} = \frac{1}{\lvert N'\rvert}.
\end{align*}
Hence, one voter must pay more than $\frac{1}{\lvert N'\rvert}$ per utility they receive and there must have been a candidate selected with a $\rho$ larger than $\frac{1}{\lvert N'\rvert}$. However, in the moment this candidate got selected, each voter in $N'$ still had less than $\frac{c(A_i \cap W)}{\lvert N'\rvert} \le \frac{b}{n} - \frac{c(p)}{\lvert N'\rvert}$ budget left. Thus, $p$ was still $\frac{1}{\lvert N'\rvert}$-affordable, a contradiction.
\end{proof}

We can also define a robust version of PJR up to any project.

\begin{definition}
Given an approval-based PB instance, a feasible committee $W$ satisfies \emph{\pjrp up to any project} (for cost utilities) if for any $N' \subseteq N$ and $p \in \bigcap_{i \in N'} A_i\setminus W$, it holds that $$c\left(\bigcup_{i \in N'} A_i \cap W\right) + c(p) > \frac{\lvert N'\rvert b}{n}.$$
\end{definition}
\pjrp up to any project would then in turn be implied by priceability (this statement can be proven analogously to \Cref{prop:rank-price}). This generalizes Theorem~4.4 by \citet{BFL+23a} and implies that the PB generalizations of MES and other rules such as Phragmén's sequential rule satisfy \pjrp up to any project.

\subsection{Querying Procedures} 
\label{sec:query}

Motivated by the application of civic participation platforms, \citet{HKP+23a} have recently introduced the problem of querying procedures for proportional committees. They study two models: (i) An \textit{exact query model}, in which one can query a subset of candidates $C' \subseteq C$ and, for each subset of $C'$, get the number of voters approving {exactly} this subset; and 
(ii) a \textit{noisy model}, 
where for a given queried subset $C' \subseteq C$, a single voter $i$ is selected uniformly at random and the set $A_i \cap C'$ is returned. 
Using these models, they showed how to simulate the \textit{Local Search-PAV} procedure by \citet{AEH+18a} to find a committee satisfying EJR 
using $\mathcal{O}(m k^2 \log(k))$ queries in the exact case and $\mathcal{O}(m k^6 \log(k)\log(m))$ queries with high probability in the randomized case. 

We show how the Greedy Justified Candidate Rule (\Cref{alg:gjcr}) can be employed to improve these bounds. 
The simplicity of the rule allows us to show that a committee satisfying \ejrp can be found using $\mathcal{O}(m \log(k))$ queries (of size at most $k$) in the exact model and with high probability using $\mathcal{O}(m k^4 \log(k)\log(m))$ in the noisy model.

We start with the exact setting and show that the Greedy Justified Candidate Rule (\Cref{alg:gjcr}) can be simulated using $\mathcal{O}(m \log(k))$ queries.
\begin{algorithm}[tb]
\caption{Noisy Greedy Justified Candidate Rule}
\label{alg:noisygjcr}

 $W \gets \emptyset$\;
 $h \gets 2\log(\frac{mk}{\delta}) (2k(k+1))^2$\;
 $\ell \gets k$\;
\While{$\ell \ge 1$}{
 $i \gets \lvert W \rvert$\;
 Partition $C \setminus W$ into $S_1, \dots, S_{\lceil \frac{m-i}{k-i}\rceil}$\;
 Query $W \cup S_t$ $h$ times for each $t \in [\lceil \frac{m-i}{k-i}\rceil]$\;
 Assign $a_c$ the number of queried voters $j \in N_c$ with $\lvert A_j \cap W \rvert < \ell$ for each $c \notin W$\;
\eIf{there is $c\notin W$: $\frac{a_c}{h} \ge \frac{\ell (2k + 1)}{2k(k+1)}$ }{
 $W \gets W \cup \{c\}$\;
}{
\While{there is no $c\notin W$: $\frac{a_c}{h} \ge \frac{\ell (2k + 1)}{2k(k+1)}$}{$\ell \gets  \ell - 1$\;}
}
}
 return $W$\;
\end{algorithm}
\begin{proposition}
 The Greedy Justified Candidate Rule can be implemented using $\mathcal{O}(m \log(k))$ queries of size at most $k$.
\end{proposition}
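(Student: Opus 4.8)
The goal is to simulate the Greedy Justified Candidate Rule (\Cref{alg:gjcr}) in the exact query model using only $\mathcal{O}(m\log k)$ queries, each of size at most $k$. The key observation is that \gjcr only ever needs to know, for each candidate $c \notin W$ and the current threshold $\ell$, the quantity $\lvert\{i \in N_c : \lvert A_i \cap W\rvert < \ell\}\rvert$ — and since $\lvert W \rvert \le k$ throughout, this information can be extracted from queries involving $W$ together with a bounded batch of new candidates.

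First I would describe the simulation of a single "refresh" step: given the current committee $W$ with $i = \lvert W\rvert < k$, partition the remaining candidates $C \setminus W$ into groups $S_1, \dots, S_{\lceil (m-i)/(k-i)\rceil}$ of size at most $k-i$, and for each $t$ query the set $W \cup S_t$, which has size at most $i + (k-i) = k$. From the exact-query answer to $W \cup S_t$ we learn, for every subset of $W \cup S_t$, how many voters approve exactly that subset; summing appropriately recovers, for each $c \in S_t$, both $\lvert N_c \rvert$ restricted to these voters and the distribution of $\lvert A_j \cap W\rvert$ over $j \in N_c$, hence $\lvert\{i \in N_c : \lvert A_i \cap W\rvert < \ell\}\rvert$ for every $\ell$. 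So $\lceil (m-i)/(k-i)\rceil \le m$ queries suffice to determine all the data needed to decide the next action of \gjcr at the current state.

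The main point is bounding the number of such refreshes. A refresh is needed only when $W$ changes (a candidate is added) or when $\ell$ decreases; the latter happens at most $k$ times total, and the former at most $k$ times total, so naively this gives $\mathcal{O}(mk)$ queries. To get down to $\mathcal{O}(m\log k)$, I would exploit the fact that after adding a candidate we do not actually need a full refresh unless the batch size $k - \lvert W\rvert$ has changed substantially: while $\lvert W\rvert$ stays in a range $[k - 2^{j+1}, k - 2^j)$, the batch size $k - \lvert W\rvert$ is within a factor $2$ of $2^j$, so we can over-query once with batches of size $2^j$ (giving $\mathcal{O}(m/2^j)$ queries of size $\le k$) and reuse the resulting data for all additions within that range, re-querying only the $\mathcal{O}(2^j)$ newly-affected groups incrementally — or, more simply, observe that we only need to re-run the partition-and-query step $\mathcal{O}(\log k)$ times as $\lvert W\rvert$ passes through the dyadic scales $k - 2^j$, each time costing $\mathcal{O}(m)$ queries. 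Summing over the $\mathcal{O}(\log k)$ scales gives $\mathcal{O}(m\log k)$ queries overall.

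The hard part will be making the amortization argument fully rigorous: one must argue that between two consecutive re-partitionings the intermediate additions can each be handled with a number of queries that telescopes correctly, and that decreasing $\ell$ never forces an extra full re-partition beyond the $\mathcal{O}(\log k)$ already accounted for (since decreasing $\ell$ does not change $W$, the batch structure is unaffected and only previously-queried data needs re-examination, costing no new queries). I would also need to verify the trivial base cases ($m \le k$, in which case a single query of size $m \le k$ suffices and $\log k = \Theta(1)$ absorbs it) and confirm that every query issued has size at most $k$, which holds because each query is of the form $W \cup S_t$ with $\lvert W\rvert + \lvert S_t\rvert \le k$ by construction.
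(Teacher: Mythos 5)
Your simulation of a single step is exactly the paper's: partition $C \setminus W$ into $\lceil (m-i)/(k-i)\rceil$ batches of size at most $k-i$, query each $W \cup S_t$ (size at most $k$), and read off $\lvert\{j \in N_c \colon \lvert A_j \cap W\rvert < \ell\}\rvert$ for every $c$ and every $\ell$ from the exact answers. The gap is in the counting, where you abandon the direct bound too early. A refresh performed when $\lvert W\rvert = i$ costs $\lceil (m-i)/(k-i)\rceil$ queries, \emph{not} $\mathcal{O}(m)$, so refreshing after every single addition already costs only
\[
\sum_{i=0}^{k-1}\left\lceil \frac{m-i}{k-i}\right\rceil \;\le\; k + m\sum_{i=0}^{k-1}\frac{1}{k-i} \;=\; k + m H_k \;\in\; \mathcal{O}(m\log k).
\]
This harmonic-sum computation is the paper's entire argument; your ``naive $\mathcal{O}(mk)$'' arises from bounding each refresh by $m$ queries, which discards the $1/(k-i)$ factor. (You are right that decreases of $\ell$ cost nothing, since the exact answers determine the statistic for all values of $\ell$ simultaneously.)

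The dyadic amortization you substitute for this is unsound. The quantity \gjcr needs, $\lvert\{j \in N_c \colon \lvert A_j \cap W\rvert < \ell\}\rvert$, depends on the \emph{entire current} committee $W$, and a query of $W_{\mathrm{old}} \cup S_t$ reveals nothing about the joint approvals of the candidates in $S_t$ and a candidate $c^*$ added to $W$ afterwards (unless $c^* \in S_t$). So after each addition, the stored data for every batch other than the one containing $c^*$ is stale and cannot be updated from old answers; it is not the case that only ``$\mathcal{O}(2^j)$ newly-affected groups'' need re-querying, nor that a re-partition is needed only at $\mathcal{O}(\log k)$ dyadic scales. A full refresh is needed after every addition, and the claimed bound must come from the harmonic sum above rather than from reusing previous query results.
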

\begin{proof}
 We show that it can be implemented in $\mathcal{O}(m \log(k))$ queries. Let $\{c_1, \dots, c_i\}$ be the already selected candidates. Then we can partition the unselected $m-i$ candidates into $\lceil \frac{m-i}{k-i}\rceil$ sets $S_1, \dots, S_{\lceil \frac{m-i}{k-i}\rceil}$ of size at most $k-i$. By querying every $S_j \cup \{c_1, \dots, c_i\}$, we can determine the next candidate who would be chosen by \gjcr.
 Thus, we only need to query \begin{align*}
     \sum_{i = 0}^{k-1} \left\lceil \frac{m-i}{k-i} \right\rceil \le k + m  \sum_{i = 0}^{k-1} \frac{1}{k-i} \in \mathcal{O}(m \log(k))
 \end{align*} many sets.
\end{proof}

Next, we show how \gjcr can be adapted to the noisy model as \Cref{alg:noisygjcr}.

\begin{proposition}
    For any $\delta > 0$, \Cref{alg:noisygjcr} with probability $1-\delta$ and $\mathcal{O}(m k^4 \log(m)\log(k))$ queries returns a committee of size at most $k$ satisfying \ejrp.
\end{proposition}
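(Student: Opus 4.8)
The plan is to treat \Cref{alg:noisygjcr} as a noisy simulation of the Greedy Justified Candidate Rule and to recycle, with appropriate care, the two ingredients of the proof of \Cref{prop:GJCR}: the ``no uncovered claim at termination'' argument for \ejrp and the price-system argument for the size bound. The starting point is that a single query of a set $W\cup S_t$ returns $A_i\cap(W\cup S_t)$ for a uniformly random voter $i$, which reveals both $|A_i\cap W|$ and, for every $c\in S_t$, whether $c\in A_i$. Hence, within one round of $h$ queries of $W\cup S_t$, the quantity $a_c/h$ is an empirical mean of i.i.d.\ $\mathrm{Bernoulli}(q_c)$ variables, where $q_c := |\{i\in N_c : |A_i\cap W| < \ell\}|/n$, for every $c\in S_t$. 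I would fix an accuracy $\varepsilon$ slightly below $\tfrac{1}{2k(k+1)}$ (e.g.\ $\varepsilon=\tfrac{1}{2\sqrt2\,k(k+1)}$) and apply Hoeffding's inequality: the stated $h=2\log(\tfrac{mk}{\delta})(2k(k+1))^2$ makes the probability that a single estimate is off by more than $\varepsilon$ at most $2(\tfrac{\delta}{mk})^2$, small enough that a union bound over the at most $m$ candidates and at most $2k$ query rounds the algorithm performs leaves total failure probability $\le\delta$. The union bound is applied sequentially along the algorithm's trajectory; within a round the fresh queries are independent of the past, which is what makes this legitimate despite the sequence of visited committees being itself random. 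The remainder conditions on this ``good event''.

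On the good event, the returned committee $W$ satisfies \ejrp. Suppose not: some $c\notin W$ and group $N'\subseteq N_c$ with $|N'|\ge\tfrac{\ell n}{k}$ and $|A_i\cap W| < \ell$ for all $i\in N'$ witness a violation, so the true fraction $q_c$ evaluated against $W$ is at least $\tfrac{\ell}{k}$. Since committees only grow and $q_c$ is non-increasing in the committee, $q_c\ge\tfrac{\ell}{k}$ held for every committee visited while the counter equalled $\ell$; at each such point $c\notin W$ (every committee visited earlier is contained in the final $W$), so the measured value satisfied $a_c/h\ge q_c-\varepsilon > \tfrac{\ell}{k}-\tfrac{1}{2k(k+1)}=\tfrac{\ell(2k+1)}{2k(k+1)}$, i.e.\ $c$ always passed the algorithm's selection test. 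Then the counter could never have been decremented below $\ell$, contradicting termination --- exactly the argument used for \Cref{prop:GJCR}.

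For the size bound I would build the price system from the proof of \Cref{prop:GJCR}, exploiting the slack baked into the threshold. On the good event, a candidate selected while the counter equals $\ell$ has $a_c/h\ge\tfrac{\ell(2k+1)}{2k(k+1)}$, so its true supporting group has size $q_c n\ge\bigl(\tfrac{\ell(2k+1)}{2k(k+1)}-\varepsilon\bigr)n$; since $\varepsilon<\tfrac{1}{2k(k+1)}$ this is \emph{strictly} larger than $\tfrac{\ell n}{k+1}$, by a relative margin $\Theta(\tfrac1k)$ uniform in $\ell$. Giving each voter a budget $\tfrac Bn$ and splitting the unit price of each selected candidate equally among its supporting group, the telescoping estimate from \Cref{prop:GJCR} (a voter responsible for a level-$\ell$ candidate has approved at most $\ell-1$ candidates selected earlier, all at counter values $\ge\ell$) shows that no voter overspends provided $B$ is at least a value that, thanks to the strict inequality on supporting-group sizes, is itself \emph{strictly below} $k+1$. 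Choosing such a $B<k+1$, the total price spent equals $|W|$, so $|W|\le B<k+1$, hence $|W|\le k$. Making this calibration work --- keeping $\varepsilon$ small enough to beat $\tfrac{\ell n}{k+1}$ strictly (so the budget drops below $k+1$ and we get size $\le k$ rather than $\le k+1$), yet large enough that the prescribed $h$ certifies accuracy $\varepsilon$ --- is the one genuinely delicate point of the proof.

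Finally, for the query bound I would observe that a single round of $h$ queries at committee $W$ already determines, for every value of $\ell$, whether the selection test fires, so fresh queries are needed only after $W$ changes; hence there are at most $k+1$ rounds. In the round with $|W|=i$ the $m-i$ remaining candidates are partitioned into $\lceil\tfrac{m-i}{k-i}\rceil$ sets of size at most $k$, each queried $h=\mathcal O\bigl(k^4\log(\tfrac{mk}{\delta})\bigr)$ times, and $\sum_{i=0}^{k-1}\lceil\tfrac{m-i}{k-i}\rceil=\mathcal O(m\log k)$ exactly as in the exact-query case. Multiplying gives $\mathcal O\bigl(mk^4\log k\log(\tfrac{mk}{\delta})\bigr)=\mathcal O(mk^4\log m\log k)$ queries, each of size at most $k$, completing the proof.
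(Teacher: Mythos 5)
Your proposal follows essentially the same route as the paper's proof: Chernoff--Hoeffding concentration of $a_c/h$ around the threshold $\frac{\ell(2k+1)}{2k(k+1)}$ (the midpoint between $\frac{\ell}{k}$ and $\frac{\ell}{k+1}$), a union bound defining a good event on which every "large" candidate passes the test (giving \ejrp at termination) and no "tiny" one is selected (giving the size bound via the price system of \Cref{prop:GJCR}), and a query count of $h\cdot\mathcal{O}(m\log k)$. You are in fact somewhat more careful than the paper on two points it leaves implicit --- that the strict lower bound $|N'|>\frac{\ell n}{k+1}$ on supporting groups yields a budget strictly below $k+1$ and hence $|W|\le k$, and that query data must be reused across $\ell$-values at a fixed committee for the stated query bound to hold --- but these are refinements of the same argument, not a different one.
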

 \begin{proof}
 For a given $\ell$, we call a candidate $c$ ``large'' if $\lvert \{i \in N_c \colon \lvert A_i \cap W \rvert < \ell\} \rvert \ge \frac{\ell n}{k}$ and ``tiny'' if $\lvert \{i \in N_c \colon \lvert A_i \cap W \rvert < \ell\} \rvert \le \frac{\ell n}{k+1}$. Our goal is now to show that during the whole process the probability that a tiny candidate is selected or that $\frac{a_c}{h} < \frac{\ell 2k + 1}{2k(k+1)}$ for a large candidate is at most $\delta$. This way, we can already ensure that the chosen committee satisfies \ejrp with probability $1-\delta$ and that the committee is of size at most $k$.
 
 We first bound the probability that a tiny candidate is chosen. Since there are $m$ candidates and at most $k$ candidates get chosen, it is sufficient to upper bound this by $\frac{\delta}{4mk}$ by the union bound to show a general bound of $\frac{\delta}{4}$. Let $c$ be any tiny candidate for a given $\ell$. Since in any round we have that $\mathbb{E}[a_c] \le \frac{\ell}{k+1}h$ we get that $\mathbb{P}\left[a_c \ge h\frac{\ell (2k + 1)}{2k(k+1)}\right] = \mathbb{P}\left[a_c -  \mathbb{E}[a_c]\ge h(\frac{\ell (2k + 1)}{2k(k+1)} - \frac{\ell}{(k+1)})\right] = \mathbb{P}\left[a_c -  \mathbb{E}[a_c]\ge h(\frac{\ell}{2k(k+1)})\right]$. Thus, using the Chernoff-Hoeffding inequality we get that 
 \[\mathbb{P}\left[a_c \ge h\frac{\ell (2k + 1)}{k(k+1)}\right] \le \exp\left(-2 (\frac{\ell}{2k(k+1)})^2 h\right) = \exp\left(-2 \frac{2}{\ell^2} \log(\frac{mk}{\delta})\right) \le \frac{\delta}{4mk}.\] 
 Hence, the probability that a small candidate is selected is at most $\frac{\delta}{4}$. 
 
 Similarly, we can bound the probability of a large candidate not being selectable by 
 \begin{align*}
 \mathbb{P}&\left[a_c \le h\frac{\ell (2k + 1)}{k(k+1)}\right] =  \mathbb{P}\left[\mathbb{E}[a_c] - a_c \ge h(\frac{\ell}{k} - \frac{\ell (2k + 1)}{k(k+1)})\right] \\
 &= \mathbb{P}\left[\mathbb{E}[a_c] - a_c \ge h(\frac{\ell}{k} - \frac{\ell (2k + 1)}{k(k+1)})\right] = \mathbb{P}\left[\mathbb{E}[a_c] - a_c \ge h(\frac{\ell}{2k(k+1)})\right].
  \end{align*}
 Similarly to earlier, we can also upper bound this probability by $\frac{\delta}{4mk}$. Thus, the probability that a large candidate is unselected and a small candidate is selected is at most $\delta$. Hence, since only candidates of with strictly more than $\frac{\ell n}{k+1}$ voters get selected, this means that the committee size is at most $k$ and satisfies \ejrp, since otherwise a large candidate would have been unselected.

 Thus, since at most $k$ candidates get selected, we use $h \mathcal{O}(m 
 \log(k)) = \mathcal{O}(m k^4 \log(m)\log(k))$ queries in total.
 \end{proof}

\newcommand{\yes}{\textcolor{green!50!black}{\ding{52}}}
\newcommand{\no}{\textcolor{red}{\ding{55}}}

\tikzset{newaxiom/.style={rectangle, text width=2.5cm,minimum height=0.3cm, draw, align=center, double, double distance=2pt, outer sep=2pt}}
\tikzset{newaxiom2/.style={rectangle, text width=3cm,minimum height=0.3cm, draw, align=center, double, double distance=2pt, outer sep=2pt}}
\tikzset{diagnode/.style={rectangle, text width=2.5cm,minimum height=0.3cm, draw, align=center, outer sep=1.2pt}
}
\tikzset{diagnode2/.style={rectangle, text width=3cm,minimum height=0.3cm, draw, align=center, outer sep=1.2pt}
}
\tikzset{easynode/.style={rectangle, text width=2.5cm,minimum height=0.3cm, draw, align=center, color = green!50!black}
}
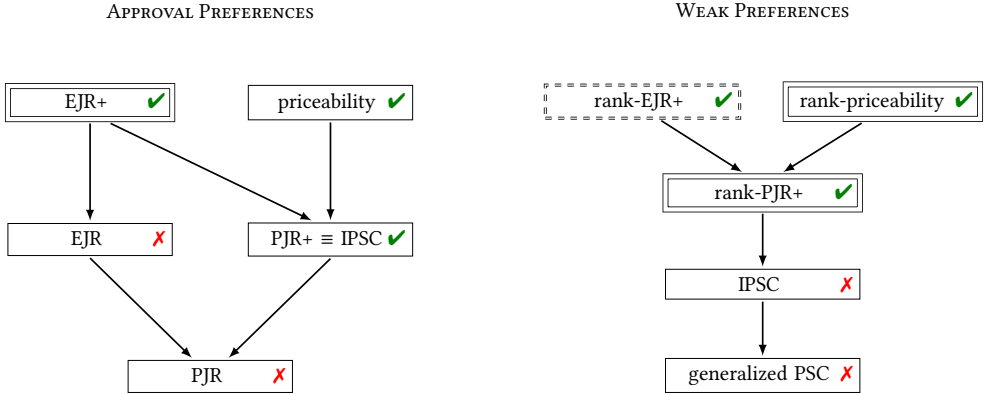
\begin{figure}
\scalebox{0.8}{
\begin{tikzpicture}[yscale=0.75]
\node at (2,8) {\textsc{Approval Preferences}};
\node at (0,6) [newaxiom](ejrp) {\mbox{}\hfill \ejrp\hfill\mbox{}\makebox[0pt][r]{\yes}};

\node at (4,6) [diagnode](price) {\mbox{}\hfill priceability \hfill\mbox{}\makebox[0pt][r]{\yes}};

\node at (0,3) [diagnode](ejr) {\mbox{}\hfill EJR \hfill\mbox{}\makebox[0pt][r]{\no}};

\node at (4,3) [diagnode](pjrp) {\mbox{}\hfill \pjrp $\equiv$ IPSC \hfill\mbox{}\makebox[0pt][r]{\yes}};

\node at (2,0) [diagnode](pjr) {\mbox{}\hfill PJR \hfill\mbox{}\makebox[0pt][r]{\no}};

\draw[-latex, thick] (ejrp.315) -- (pjrp.135);
\draw[-latex, thick] (ejr.south) -- (pjr.135);
\draw[-latex, thick] (price.south) -- (pjrp.north);
\draw[-latex, thick] (ejrp.south) -- (ejr.north);
\draw[-latex, thick] (pjrp.south) -- (pjr.45);
\end{tikzpicture}

\hspace{2cm}

\begin{tikzpicture}[yscale=0.75]
\node at (2,8) {\textsc{Weak Preferences}};
\node at (0,6) [diagnode2, dashed, double, outer sep=1.2pt](ejrp) {\mbox{}\hfill rank-\ejrp\hfill\mbox{}\makebox[0pt][r]{\yes}};

\node at (4,6) [newaxiom2](price) {\mbox{} rank-priceability  \hfill\mbox{}\makebox[0pt][r]{\yes}};

\node at (2,4) [newaxiom2](pjrp) {\mbox{}\hfill {rank-\pjrp } \hfill\mbox{}\makebox[0pt][r]{\yes}};

\node at (2,2) [diagnode2](pjr) {\mbox{}\hfill IPSC \hfill\mbox{}\makebox[0pt][r]{\no}};

\node at (2,0) [diagnode2](psc) {\mbox{}\hfill {generalized PSC} \hfill\mbox{}\makebox[0pt][r]{\no}};

\draw[-latex, thick] (ejrp.315) -- (pjrp.135);
\draw[-latex, thick] (price.225) -- (pjrp.45);
\draw[-latex, thick] (pjr.south) -- (psc.north);
\draw[-latex, thick] (pjrp.south) -- (pjr.north);
\end{tikzpicture}
}
    \caption{Relations between the proportionality axioms considered in this paper. Nodes with a double border correspond to axioms that have been proposed in this paper. For all axioms other than rank-\ejrp, committees satisfying the axiom are guaranteed to exist and can be found efficiently.
    Axioms marked with ``\yes\xspace'' can also be verified efficiently, whereas a ``\no\xspace'' indicates that the verification problem is computationally intractable.}
    \label{fig:relations}
\end{figure}

\section{Conclusion}

We have proposed novel proportionality axioms for multiwinner voting, both in the approval-based and in the ranking-based setting. Figure \ref{fig:relations} summarizes the relations between the proportionality axioms considered in this paper.
Our axioms are more robust and easier to verify than existing ones. Moreover, committees satisfying these axioms are guaranteed to exist and can be found by applying polynomial-time computable voting rules that have other attractive properties \citep{AzLe20a,PeSk20a,PPS21a}.  

We have demonstrated that our approach can also be used to ``robustify'' the proportionality degree and proportionality axioms for participatory budgeting. 
The computational benefits of our approach go beyond verifiability: The simple structure of \ejrp enables the optimization over all \ejrp committees and gives rise to a simple and very useful greedy procedure. 

Our work gives rise to multiple follow-up questions. First, it would be interesting to see whether a robust version of \textit{fully proportional representation (FJR)} \citep{PPS21a} can be defined. Finding such a generalization might prove useful in answering the open question whether committees satisfying FJR can be found in polynomial time. 
Second, our extensions to participatory budgeting (\Cref{sec:pb}) only apply to cost utilities. Are there similar notions which also apply to other utility notions \citep{BFL+23a}, or maybe even to general additive utilities? Third, it would be interesting to see if there is a generalization of \ejrp to the ranked setting that is always satisfiable.

\begin{acks}
We thank Haris Aziz, Piotr Faliszewski, Martin Lackner, Jan Maly, Dominik Peters, Grzegorz Pierczyński, Piotr Skowron, Warut Suksompong, and Bill Zwicker for helpful discussions.  

This research is supported by the Deutsche Forschungsgemeinschaft (DFG) under the grant BR~4744/2\nobreakdash-1 and the Graduiertenkolleg ``Facets of Complexity'' (GRK~2434). 
\end{acks}

    \bibliographystyle{abbrvnat}
\bibliography{abb,algo,bibliography}

\newpage

\appendix

\section{Example of Aziz and Lee}
\label{app:AzLe}

\citet{AzLe20a} considered the following example to demonstrate a weakness of PSC.  
\begin{align*}
    1 \colon c_1 \succ c_2 \succ c_3 \succ e_1 \succ e_2 \succ e_3 \succ e_4 \succ d_1 \\
    2 \colon c_2 \succ c_3 \succ c_1 \succ e_1 \succ e_2 \succ e_3 \succ e_4 \succ d_1 \\
    3 \colon c_3 \succ c_1 \succ d_1 \succ c_2 \succ e_1 \succ e_2 \succ e_3 \succ e_4 \\
    4 \colon e_1 \succ e_2 \succ e_3 \succ e_4 \succ c_1 \succ c_2 \succ c_3 \succ d_1 \\
    5 \colon e_1 \succ e_2 \succ e_3 \succ e_4 \succ c_1 \succ c_2 \succ c_3 \succ d_1 \\
    6 \colon e_1 \succ e_2 \succ e_3 \succ e_4 \succ c_1 \succ c_2 \succ c_3 \succ d_1 \\
    7 \colon e_1 \succ e_2 \succ e_3 \succ e_4 \succ c_1 \succ c_2 \succ c_3 \succ d_1 \\
    8 \colon e_1 \succ e_2 \succ e_3 \succ e_4 \succ c_1 \succ c_2 \succ c_3 \succ d_1 \\
    9 \colon e_1 \succ e_2 \succ e_3 \succ e_4 \succ c_1 \succ c_2 \succ c_3 \succ d_1
\end{align*}
There are  $n = 9$ voters and $k=3$, so that $\frac{n}{k} = 3$.
In this instance, STV could select the committee $\{e_1, e_2, e_3\}$ by first selecting $e_1, e_2$, then eliminating $c_1, c_2, c_3, d_1$ and finally selecting $e_3$. This committee does not satisfy \rpjrp, since either $c_1, c_2$ or $c_3$ witness a violation with the first three voters. Indeed, the only committees satisfying \rpjrp in this instance are $\{e_1, e_2\}$ together with one of $\{c_1, c_2, c_3, d_1\}$. EAR here selects either $\{e_1, e_2, c_1\}$ or $\{e_1, e_2, c_3\}$.

\end{document}